\title{Functional Closure Properties of Finite $\IN$-weighted Automata}
\author{Julian Dörfler}{Saarland Informatics Campus (SIC), Saarbrücken Graduate School of Computer Science, Saarland University, Germany}{jdoerfler@cs.uni-saarland.de}{https://orcid.org/0000-0002-0943-8282}{}
\author{Christian Ikenmeyer}{University of Warwick, United Kingdom}{christian.ikenmeyer@warwick.ac.uk}{https://orcid.org/0000-0003-4654-177X}{EPSRC EP/W014882/2}
\authorrunning{J. Dörfler and C. Ikenmeyer} %
\keywords{Finite automata, weighted automata, counting, closure properties, algebraic varieties} %
\newcommand{\automatastyle}[2]{\tikzstyle{every node}=[auto, node distance=#1mm]
   \tikzstyle{inner}=[thick, minimum width=#2mm, circle, draw=black]
   \tikzstyle{start}=[thick, minimum width=#2mm, circle, draw=black, pin=left:]
   \tikzstyle{accept}=[thick, minimum width=#2mm, circle, draw=black, double]
   \tikzstyle{every pin edge}=[stealth-]
   \tikzstyle{every pin}=[pin distance=5mm]
   \tikzstyle{every loop}=[-stealth]}
\newcommand{\IN}{\mathbb{N}}
\newcommand{\IQ}{\mathbb{Q}}
\newcommand{\IZ}{\mathbb{Z}}
\newcommand{\IR}{\mathbb{R}}
\newcommand{\CR}{\mathcal{R}}
\newcommand{\eps}{\varepsilon}
\newcommand{\wt}{\textup{wt}}
\newcommand{\inp}{\textup{in}}
\newcommand{\outp}{\textup{out}}
\newcommand{\powset}{\mathcal{P}}
\newcommand{\Cweight}{{\ensuremath{\mathbf{w}}}}
\newcommand{\Cpweight}{{\ensuremath{\underline{\mathbf{w}}}}}
\newcommand{\ONE}{\mathbbm{1}}
\newcommand{\prop}{\textup{prop}}
\newcommand{\init}{\textup{init}}
\newcommand{\step}{\textup{step}}
\newcommand{\cond}{\textup{cond}}
\newcommand{\bin}{\textup{bin}}
\newcommand{\smod}[3]{\ensuremath{#1 \operatorname{srem}_{#3} #2}} %
\newcommand{\smodtimesm}[3]{\ensuremath{#1 \operatorname{srem}^{\times m}_{#3} #2}} %
\newcommand{\NP}{\ensuremath{\mathsf{NP}}}
\newcommand{\sharpFA}{\ensuremath{\mathsf{\#FA}}}
\newcommand{\sharpP}{\ensuremath{\mathsf{\#P}}}
\DeclareMathOperator{\image}{im}
\DeclareMathOperator{\rem}{rem}
\newsavebox{\@brx}
\newcommand{\llangle}[1][]{\savebox{\@brx}{\(\m@th{#1\langle}\)}%
  \mathopen{\copy\@brx\kern-0.5\wd\@brx\usebox{\@brx}}}
\newcommand{\rrangle}[1][]{\savebox{\@brx}{\(\m@th{#1\rangle}\)}%
  \mathclose{\copy\@brx\kern-0.5\wd\@brx\usebox{\@brx}}}
\newenvironment{proofsketch}{%
  \proof}{\endproof}
\begin{document}
\sloppy

\maketitle

\begin{abstract}
We determine all functional closure properties of finite $\IN$-weighted automata, even all multivariate ones, and in particular all multivariate polynomials.
We also determine all univariate closure properties in the promise setting, and all multivariate closure properties under certain assumptions on the promise, in particular we determine all multivariate closure properties where the output vector lies on a monotone algebraic graph variety.
\end{abstract}

\newpage
\setcounter{page}{1}

\section{Finite $\IN$-weighted automata and functional closure properties}
\label{sec:intro}
Let $\Sigma$ be a finite set, for example $\Sigma=\{0,1\}$.
A finite $\IN$-weighted automaton with all weights 1 is a nondeterministic finite automaton that on input $w\in\Sigma^\star$ outputs the number of accepting computation paths on input $w$, instead of just outputting whether or not an accepting computation path exists, see Def.~\ref{def:NFA} for the formal definition \footnote{We use the equality of the number of accepting paths of an NFA and the output of the corresponding $\IN$-weighted automaton, see \cite[Exa.~2.2]{DK21}.}.
While every nondeterministic finite automaton determines a subset of $\Sigma^\star$,
a finite $\IN$-weighted automaton computes a function $\Sigma^\star\to\IN$.
A function $f:\Sigma^\star \to \IN$
can be presented as the series $\sum_{w \in \Sigma^\star} f(w) w$, 
and the set of series is denoted by $\IN\llangle\Sigma^\star\rrangle$ in the automata literature, see e.g.~\cite{DK21}\footnote{A series with finite support is called a \emph{polynomial}, but we will not be concerned with the support of series in this paper. Instead, we use the term \emph{polynomial} as it is used in commutative algebra, and we mean multivariate polynomials with rational coefficients.}.
The natural way of adding two functions $\Sigma^\star\to\IN$
and adding two series in $\IN\llangle\Sigma^\star\rrangle$ coincides,
but in both presentations we have a natural way of taking the product, and those do not coincide:
\begin{enumerate}
\item Pointwise product of functions $\Sigma^\star \to \IN$. This is called the Hadamard product.
\item Convolution of series, called the Cauchy product.
\end{enumerate}
A series $f$ is called \emph{recognizable} if there is a finite $\IN$-weighted automaton that computes~$f$.
The set of recognizable series is denoted by $\IN^{\textup{rec}}\llangle\Sigma^\star\rrangle$ in \cite{DK21}, but we denote it by $\sharpFA$, to emphasise that we undertake a study similar to $\sharpP$ in \cite[Thm~3.13]{HVW95}, \cite[Thm~6]{Bei97}, and recently \cite{IP22}, but instead of polynomial-time Turing machines we study finite automata.
The Kleene-Schützenberger theorem states that $\sharpFA$ is the smallest set that contains all support 1 series and is closed under sums, Cauchy products, and Kleene-iterations (whenever well-defined, a Kleene-iteration is the sum of all Cauchy powers), see \cite[\S4]{DK21}, but we will not need this insight.

In this paper we study the functional closure properties\footnote{See \cite[Sec.~1]{HVW95} for the naming \emph{functional closure property}. A different reasonable name would be \emph{pointwise closure property}.} of $\sharpFA$.
A function $\varphi:\IN^m\to\IN$ is called a \emph{functional closure property} of $\sharpFA$ if
for all $f_1\in\sharpFA$, $f_2\in\sharpFA$, $\ldots$, $f_m\in\sharpFA$ we have that $\varphi(f_1,\ldots,f_m)\in\sharpFA$.
By $\varphi(f_1,\ldots,f_m)$ we mean the function that on input $w\in\Sigma^\star$ outputs $\varphi(f_1(w),\ldots,f_m(w))$.

Classically, one of the simplest functional closure properties of $\sharpFA$ is
$\varphi:\IN^2\to\IN$,
$\varphi(f_1,f_2) = f_1+f_2$.
This is a functional closure property of $\sharpFA$, because given $f_1\in\sharpFA$ and $f_2\in\sharpFA$, we can show $\varphi(f_1,f_2)=f_1+f_2\in\sharpFA$ by an easy construction: The new NFA consists of a copy of the NFA for $f_1$ and a copy of the NFA for $f_2$, and makes an initial nondeterministic choice as to which NFA to run, see Lemma~\ref{lem:add} for the details.

Another classical simple functional closure property of $\sharpFA$ is
$\varphi:\IN^2\to\IN$,
$\varphi(f_1,f_2) = f_1\cdot f_2$.
This corresponds to the Hadamard product.
This is a functional closure property of $\sharpFA$, because given $f_1\in\sharpFA$ and $f_2\in\sharpFA$, we can show $\varphi(f_1,f_2)=f_1\cdot f_2\in\sharpFA$ by the following construction: The new NFA consists of the product NFA of the NFAs for $f_1$ and $f_2$, and the accepting states correspond to pairs of accepting states, see Lemma~\ref{lem:mult} for the details.
This product construction corresponds to the Hadamard product.

The Cauchy product is also a product on the set $\sharpFA$, but
we explain now that the Cauchy product is not ``functional'', and hence it is out of scope for this type of studies.
If $\varphi:\IN^m\to\IN$ is a functional closure property of $\sharpFA$, then 
we can study the corresponding map $\widetilde\varphi : \underbrace{\sharpFA\times\sharpFA\times\cdots\times\sharpFA}_{m\textup{ times}}\to\sharpFA$.
Observe that if $\varphi$ is a functional closure property of $\sharpFA$, then by definition we have that
for all pairs $(w,w')\in \Sigma^\star\times\Sigma^\star$:\\
\mbox{~}if $(f_1(w),\ldots,f_m(w)) = (f_1(w'),\ldots,f_m(w'))$, \ then \ $\widetilde\varphi(f_1,\ldots,f_m)(w)=\widetilde\varphi(f_1,\ldots,f_m)(w')$.\\
Let $\zeta:\sharpFA\to\sharpFA$ denote the Cauchy square.
We use the observation above to show that $\zeta$ is not equal to $\widetilde\varphi$ for any $\varphi:\IN\to\IN$.
Let $m=1$ and $f(w)=1$ if $w=1$, $f(w)=0$ otherwise. Clearly, $f \in \sharpFA$.
Then $\zeta(f)(11)=1$, and $\zeta(f)(w)=0$ for all $w\neq 11$.
In particular $\zeta(f)(0)\neq\zeta(f)(11)$, even though $f(0)=f(11)$. Hence, $\zeta\neq \widetilde\varphi$ for all $\varphi:\IN\to\IN$.

Numerous functional closure properties of $\sharpFA$ exist, for example the safe decrementation $\max\{0,f_1-1\}$, and the binomial coefficient $\binom{f_1}{2}$.
But not all non-negative functions are functional closure properties of $\sharpFA$, for example $(f_1-f_2)^2$ is not, which can be shown using the Pumping Lemma.
In this paper, we determine \emph{all} functional closure properties of $\sharpFA$, see \S\ref{subsec:ourresults} for the detailed statement.

\subsection{Motivation}
Functional closure properties can be studied for many different counting machine models (also for example with different types of oracle access) and different types of input sets.
The first study of this type was done for nondeterministic polynomial-time Turing machines, i.e., the class $\sharpP$, see \cite{HVW95}, \cite{Bei97}, and the recent \cite{IP22}.
Recall that the class $\sharpP$ is the class of functions $f:\Sigma^*\to\IN$ for which a nondeterministic polynomial time Turing machine $M$ exists such that for all $w\in\Sigma^*$ the number of accepting paths for the computation $M(w)$ is exactly $f(w)$.
The papers mentioned above prove that the relativizing multivariate polynomial closure properties are exactly those polynomials that have nonnegative integers in their expansion over the binomial basis, see \cite{IP22}.
A functional closure property $\varphi:\IN^m\to\IN$ of $\sharpP$ is \emph{relativizing} if $\varphi$ is a closure property for all $\sharpP^A$, where $A \subset \Sigma^*$ is some oracle.
The hope is that for simpler models of computation no oracle access is required to determine the functional closure properties, and we show that this is true for $\sharpFA$, see \S\ref{subsec:ourresults}.

Functional closure properties can be used directly to construct combinatorial proofs of equalities and inequalities. For example, Fermat's little theorem states that $p$ divides $a^p-a$.
The quantity $\frac{1}{p}(a^p-a)$ has a combinatorial interpretation, which can be deduced from the fact that $\frac{1}{p}((f_1)^p-f_1)$ is a univariate functional closure property of $\sharpP$, see \cite[Prop.\,7.3.1]{IP22}, which coincides with the original proof \cite{Pet72}, see also \cite[eq.\,(5)]{Pak22}).
On the other hand, if a function is not a functional closure property, then this means in a very strong sense that there is no combinatorial interpretation for the quantity it describes. For example, the Hadamard inequality
(\cite[\S2.13]{HLP52}, \cite[\S2.11]{BB61},
\cite[eq.\,(2)]{IP22})
states that

\vspace{-0.6cm}

\[\textstyle
\det
\left(\begin{smallmatrix}
a_{11} & \cdots & a_{1d}
\\
\vdots & \ddots & \vdots
\\
a_{d1} & \cdots & a_{dd}
\end{smallmatrix}\right)^2 \leq \prod_{i=1}^d (a_{i1}^2 + \cdots + a_{id}^2).
\]

\vspace{-0.2cm}

One could try to prove this by finding a combinatorial interpretation of the difference $\mathcal H\geq 0$ of the right-hand side and the left-hand side, but even for $d=3$ we have that

\vspace{-0.6cm}

\[\textstyle
\varphi(f_1,\ldots,f_9) \ = \ 
(f_1^2 + f_2^2 + f_3^3)
\cdot
(f_4^2 + f_5^2 + f_6^3)
\cdot
(f_7^2 + f_8^2 + f_9^3)
-
\det
\left(\begin{smallmatrix}
f_1 & f_2 & f_3
\\
f_4 & f_5 & f_6
\\
f_7 & f_8 & f_9
\end{smallmatrix}\right)
\]

\vspace{-0.2cm}

is not a 9-variate relativizing functional closure property of $\sharpP$, see \cite[\S7.2]{IP22}.
In particular, if the function is not a closure property of $\sharpP$, then there are instantiations $f_1,\ldots,f_9\in\sharpP$ such that $\varphi(f_1,\ldots,f_9)$ is not in $\sharpP$,
whereas a combinatorial interpretation of $\mathcal H$ should yield $\varphi(f_1,\ldots,f_9)\in\sharpP$.
This does not rule out a more indirect combinatorial proof for the inequality: For example, for proving combinatorially that $(a-1)^2\geq 0$ one could try to interpret the quantity $(a-1)^2$ combinatorially, but
$(f_1-1)^2$ is not a relativizing closure property of $\sharpP$. However,
$
f_1 \cdot (f_1-1)^2 = 6\binom{f_1}{3} + 2 \binom{f_1}{2}
$
is a
relativizing closure property of $\sharpP$ (see \cite[\S2.4]{IP22}).
There is an obvious combinatorial interpretation of $6\binom{a}{3} + 2 \binom{a}{2}$
as counting size 2 and 3 subsets with multiplicity 6 and 2, respectively.
Hence this gives an indirect combinatorial proof for the inequality $(a-1)^2\geq 0$ by providing a combinatorial interpretation for $a(a-1)^2$.

Some inequalities are only true if the inputs satisfy certain constraints.
For example, the \emph{Ahlswede Daykin inequality}, see \cite{AD78}, \cite{AS16}, \cite[\S 1.2(3)]{IP22}:
If
$a_0 b_0 \geq c_0 d_0$
and
$a_0 b_1 \geq c_0 d_1$
and
$a_1 b_0 \geq c_0 d_1$
and
$a_1 b_1 \geq c_1 d_1$,
then
$(c_0+c_1)(d_0+d_1) \geq (a_0+a_1)(b_0+b_1)$.
If all quantities are in $\sharpP$, including the differences $c_0 d_0 - a_0 b_0$, can we conclude that $(c_0+c_1)(d_0+d_1) - (a_0+a_1)(b_0+b_1)$ is in $\sharpP$? This is an example of a promise problem: We are given twelve $\sharpP$ functions 
$a_0,a_1,b_0,b_1,c_0,c_1,d_0,d_1,h_1,h_2,h_3,h_4$
with the guarantee that
$a_0 b_0 + h_1 = c_0 d_0$,
$a_0 b_1 + h_2 = c_0 d_1$,
$a_1 b_0 + h_3 = c_0 d_1$,
$a_1 b_1 + h_4 = c_1 d_1$.
In other words, the 12-dimensional output vector that we get for every $w\in\Sigma^*$ lies on a codimension 4 algebraic subvariety in $\IQ^{12}$.
Recall that an algebraic subvariety is defined as the simultaneous zero set of a set of polynomials.
Since the 4 variables $h_1,\ldots,h_4$ are determined by the other 8, this variety is a so-called \emph{graph} or \emph{graph variety}.
Numerous questions about combinatorial proofs for inequalities from different areas of mathematics can be phrased in the language of graph varieties, see \cite{IP22}.
The idea is to collect the equations for a set $S$ (the variety) into what is called the \emph{vanishing ideal} $I$, i.e., $I=I(S)=\{\varphi \in \IQ[f_1,\ldots,f_m] \mid \forall (f_1,\ldots,f_m) \in S : \varphi(f_1,\ldots,f_m) = 0\}$;
and define the \emph{coordinate ring} $\IQ[S]$ as the quotient ring $\IQ[f_1,\ldots,f_m]/I(S)$, see \cite{CLO13}.
An element in the quotient ring is a coset with respect to the vanishing ideal.
If there exists a representative $\varphi'$ in a coset $\varphi+I$ that is a functional closure property of $\sharpP$, then every function in $\varphi+I$
is a promise closure property of $\sharpP$ on the variety $S$.
It is desirable to also have the opposite direction, but this only holds under some reasonable restrictions on $S$, in particular it holds for all graph varieties.
This is used in \cite[Prop.\ 2.5.1]{IP22} to show that $c_0 d_0 - a_0 b_0$ is not a relativizing promise closure property of $\sharpP$ on this graph variety.
We prove the same strong dichotomy for monotone graph varieties for $\sharpFA$ instead of $\sharpP$, see Theorem~\ref{thm:mongraphclosures}.

The systematic study of combinatorial interpretations and combinatorial proofs via definitions from computational complexity theory is a very recent research direction \cite{Pak18, Pak19, IP22, Pak22, IPP23, CP23, CP23b}.
The goal is to determine whether or not certain quantities admit a combinatorial description or not. Famous open questions of this type in algebraic combinatorics have been listed by Stanley in \cite{Sta99}, for example his problems 9, 10, and 12.
As many combinatorialists do, Stanley has phrased his questions in an informal way without mentioning counting classes.

The class $\sharpP$ is the correct class for some purposes, but for others it is too large.
For example, the determinant of a skew-symmetric matrix with entries from $\{-1,0,1\}$ is always non-negative, but this quantity is trivially in $\sharpP$, because the determinant can be computed in polynomial time.
This gives no satisfying insight into whether or not this quantity has a combinatorial interpretation.
Smaller counting classes are required (see also the discussion in \cite[\S1]{Pak18}), and we provide the first study of functional closure properties for the subclass $\sharpFA \subset \sharpP$.
Unlike the classification for $\sharpP$,
our results do not rely on oracle separations, i.e., our classification is entirely unconditional.

\subsection{Our results}
\label{subsec:ourresults}
Let $n\rem p \in\{0,\ldots,p-1\}$ denote the smallest nonnegative $r$ such that $n\equiv_p r$.
A function $\varphi: \IN \to \IN$ is called
\emph{ultimately PORC} (Polynomial On Residue Classes\footnote{PORC functions are also known as quasipolynomials or pseudopolynomials, but we want to avoid those names for the potential confusion to quasipolynomial growth and pseudopolynomial running times.}) if $\exists p,N\in\IN$ and there exist polynomials $\varphi_{0}, \ldots, \varphi_{p-1}: \IN \to \IQ$ such that for every $n \geq N$ we have
$\varphi(n) = \varphi_{n\rem p}(n)$ \ \footnote{Note that each $\varphi$ in this paper is defined on the natural numbers and maps to the natural numbers, which
is a subtle restriction. For example, a univariate polynomial $\varphi : \IQ \to \IQ$ maps integers to integers if and only if its coefficients in the binomial basis are integers, see Section~\ref{sec:notation}.
However, non-negativity is not an algebraic property.
Also note that for the case of $\varphi$ being just a univariate polynomial, the corresponding linear recursive sequence can have negative entries in the matrix.}.

\noindent We first classify the univariate functional closure properties of $\sharpFA$:
\begin{itemize}
\item\textsf{\textbf{Theorem}} (see Theorem~\ref{thm:univclosure}).
\textit{A function $\varphi: \IN \to \IN$ is a functional closure property of $\sharpFA$ if and only if $\varphi$ is an ultimately PORC function.}
\end{itemize}

\vspace{-0.3cm}

\noindent More generally, we classify the multivariate functional closure properties of $\sharpFA$:
\begin{itemize}
\item
\textsf{\textbf{Theorem}} (see Theorem~\ref{thm:multivclosure}).
\textit{A function $\varphi: \IN^m \to \IN$ is a functional closure property of $\sharpFA$ if and only if $\varphi$ can be written as a finite sum of finite products of univariate ultimately PORC functions.}
\end{itemize}

\vspace{-0.3cm}

\noindent We analyze the special case of multivariate polynomials:
\begin{itemize}
\item
\textsf{\textbf{Theorem}} (see Lemma~\ref{lem:multivarpolyn}).
\textit{A multivariate polynomial $\varphi: \IN^m \to \IN$ with rational coefficients is a functional closure property of $\sharpFA$ iff for every $\psi$ that can be formed from $\varphi$ by replacing any subset of variables -- including the empty set -- by constants from $\IN$, then all dominating terms of $\psi$ in the binomial basis have positive coefficients.}
\end{itemize}
\vspace{-0.3cm}

\noindent We lift this result to monotone graph varieties
(and to more general sets, see Theorem~\ref{thm:polyclusterseq}), where we get exactly the desirable classification given by the vanishing ideal:
\begin{itemize}
\item
\textsf{\textbf{Theorem}} (see Theorem~\ref{thm:mongraphclosures}).
\textit{Let $S$ be a monotone graph variety and let $I=I(S)$ be its vanishing ideal.
A multivariate polynomial $\varphi:S\to\IN$ is a functional promise closure property of $\sharpFA$ with regard to $S$ if and only if there exists $\psi\in I$ such that $\varphi+\psi$ is a multivariate functional closure property of $\sharpFA$.
}
\end{itemize}

\section{Notation}
\label{sec:notation}

Let $\IN=\{0,1,2,\ldots\}$. For a finite set $\Sigma$ let $\Sigma^\star$ denote the set of all finite length sequences with elements from~$\Sigma$.
The vector space of multivariate polynomials $\IQ[f_1,\ldots,f_m]$ in variables $f_1,\ldots,f_m$ has a basis given by products of binomial coefficients: $\big\{\prod_{i=1}^m \binom{f_i}{c_i}\big\}_{c_1,\ldots,c_m}$, where each $c_i\in\IN$. Here
we used
$\binom{x}{c} = \frac{1}{c!} x \cdot (x-1) \cdot \ldots \cdot (x-c+1)$ as a polynomial.
This is called the \emph{binomial basis}.
A multivariate polynomial $\varphi$ is called \emph{integer valued} if $\varphi(\IZ^m)\subseteq \IZ$, which is equivalent to $\varphi(\IN^m)\subseteq \IZ$, and which is also equivalent to all coefficients in the binomial basis being integers, see for example \cite[Prop.\ 4.2.1]{IP22} for a short proof of this classical fact.

We now recall (see \cite[Def.~2.1]{DK21}) our main model of computation, the finite $\IN$-weighted automaton, which we just call non-deterministic finite automaton (NFA) for brevity.

\vspace{-0.2cm}

\begin{definition}%
\label{def:NFA}
An NFA $M$ is a tuple $(Q, \Sigma, \wt, \inp, \outp)$ where the set of states $Q$ and the alphabet $\Sigma$ are finite sets and $\wt: Q \times \Sigma \times Q \to \IN$ is the weighted transition function, $\inp: Q \to \IN$ are the weighted initial states and $\outp: Q \to \IN$ are the weighted accepting states\footnote{Note that in definitions by other authors one can find simpler versions of NFAs, in particular unweighted initial and accepting states and unweighted edges while also restricting to a single initial state.
We will see soon that working with unweighted NFAs is not a restriction, but additionally restricting the model to have a single initial state is strictly weaker, since this model could not compute any function $f$ with $f(\eps) > 1$.
To obtain the same expressiveness one would have to additionally allow for $\eps$-transitions, while disallowing cycles of $\eps$-transitions to prevent infinite values for $f$.
}.
    A \emph{computation} $P$ for a word $w = w_1\ldots w_n \in \Sigma^\star$ of length $n$ is a sequence $q_0q_1 \ldots q_n$ in $Q^{n+1}$.
    It has \emph{multiplicity} or \emph{weight}\footnote{We will use both of these terms interchangeably. For a weighted automaton, calling this weight is more natural, while when looking at the underlying graph as a multigraph, multiplicity of paths and walks is more natural.} $\Cweight(P) = \inp(q_0) \cdot \prod_{i=1}^n \wt(q_{i-1}, w_i, q_i) \cdot \outp(q_n)$ and \emph{partial weight} $\Cpweight(P) = \inp(q_0) \cdot \prod_{i=1}^n \wt(q_{i-1}, w_i, q_i)$.
    We say that $M$ \emph{computes} $f: \Sigma^\star \to \IN$ where $f(w)$ is the sum of the weights over all computations of $M$ on $w$.
    The class $\sharpFA$ is defined as the set of all functions $f: \Sigma^\star \to \IN$ that are computed by NFAs.
\end{definition}

\vspace{-0.2cm}

If needed to distinguish these for different automata, we use a corresponding subscript, for example the weights of computations in $M_f$ would be denoted by $\Cweight_f$, etc.

\vspace{-0.2cm}

\begin{definition}[Simple NFA]
    We say an NFA $M = (Q, \Sigma, \wt, \inp, \outp)$ is \emph{simple} if $\image \wt, \image \inp, \image \outp \subseteq \{0, 1\}$.
\end{definition}

The notion of a simple NFA also motivates our use of the term NFA opposed to $\IN$-weighted automaton: We simply count the number of accepting paths of $M$ on a word $w$.
This is in line with $\sharpP$ counting the number of accepting paths on a polynomial time non-deterministic Turing machine.

\begin{lemma}[Folklore]
    \label{lem:simplify}
    For every NFA $M$ there exists a simple NFA $M'$ computing the same function.%
\end{lemma}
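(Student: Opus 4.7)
The plan is to eliminate weights larger than $1$ by state duplication, encoding each multiplicity as a branching choice among unit-weight clones. I carry this out in two stages.

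\textbf{Stage 1 (simplify $\wt$ and $\inp$).} For each state $q \in Q$, let $N_q := \inp(q) + \sum_{q',a} \wt(q', a, q)$ count the distinct ``ways to arrive at $q$''. The new NFA has $N_q$ copies of $q$, each labeled either by an initial index $i \in \{1,\dots,\inp(q)\}$ or by a triple $(q', a, j)$ with $1 \le j \le \wt(q', a, q)$. I give initial weight $1$ to the initially-labeled copies and $0$ to the rest; every copy of $q$ keeps $\outp(q)$ as its accepting weight. From every copy of every state $p$ I install a unit-weight $a$-transition into each $(p, a, j)$-labeled copy of every $p'$. For a fixed underlying sequence $q_0, \dots, q_n$ on input $w$, summing over all compatible labelings reproduces $\inp(q_0) \prod_i \wt(q_{i-1}, w_i, q_i) \cdot \outp(q_n)$ exactly, so $f(w)$ is preserved. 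After this stage $\wt, \inp \in \{0,1\}$, while $\outp$ may still be large.

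\textbf{Stage 2 (simplify $\outp$).} I split each state $q$ into a ``live'' copy $q^{\textup{m}}$ with $\outp = 0$ that keeps the original outgoing transitions, and $\outp(q)$ ``sink'' copies $q^{\textup{s},k}$ with $\inp = 0$, $\outp = 1$, and no outgoing transitions. Each unit transition $q \to q'$ on $a$ is replaced by the transitions $q^{\textup{m}} \to (q')^{\textup{m}}$ and $q^{\textup{m}} \to (q')^{\textup{s},k}$ for every $k \in \{1, \dots, \outp(q')\}$. Every length-$n$ run of the Stage-$1$ machine with $n \ge 1$ then lifts to exactly $\outp(q_n)$ unit-weight runs in the final NFA, one for each choice of terminal sink.

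\textbf{Main obstacle.} The only subtle point is the empty word: the old length-$0$ weight at $q$ is $\inp(q) \cdot \outp(q)$, but after Stage 2 every live copy has $\outp = 0$ and every sink has $\inp = 0$, so length-$0$ runs at those copies contribute nothing. I resolve this by adding $\inp(q) \cdot \outp(q)$ isolated copies with $\inp = \outp = 1$ and no incident edges; these reproduce $f(\eps)$ exactly and cannot participate in any longer run. What remains in each stage is a routine path-counting bijection.
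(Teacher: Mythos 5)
Your proof is correct and takes essentially the same approach as the paper's: duplicate each state so that every unit of weight on a transition, initial state, or accepting state becomes a separate unit-weight nondeterministic choice, then verify the path count. The paper packages this as a single product construction with state set $Q_f \times [\max (\image \wt_f \cup \image \inp_f)] \times [\max(\image \outp_f)]$ (where the output index is forced to $1$ until the last step), which absorbs the empty-word case that your two-stage version patches with isolated states, but this is only an organizational difference.
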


\vspace{-0.2cm}

A proof of this simple fact can be found in the Appendix, for the sake of completeness.
We denote by $a \equiv_p b$ that $a \in \IN$ and $b \in \IN$ are congruent modulo $p \in \IN \setminus \{0\}$.
The indicator function $\ONE_{n = c}: \IN \to \IN$ is defined as $n \mapsto \begin{cases}1 & \text{if $n = c$} \\ 0 & \text{otherwise}\end{cases}$ and analogously for different conditions.
By abuse of notation, if we have a function $f: \Sigma^\star \to \IN$ and an expression $\varphi: \IN \to \IN$ in $n$, we replace $n$ by $f$ in the expression to denote $\varphi \circ f$.
For example we use $\ONE_{f = c}$ to denote the function $w \mapsto \ONE_{f(w) = c}$, similarly $\binom{f}{2}$ denotes the function $w \mapsto \binom{f(w)}{2}$.
Furthermore we use the notation $[n]$ to denote the set $\{1, \ldots, n\}$ for any $n \in \IN$.

\section{Functional closure properties}
\label{sec:functional_closure_properties}
\subsection{Univariate functional closure properties}
We say a function $\varphi: \IN \to \IN$ is a functional closure property of $\sharpFA$ if $\varphi(\sharpFA) \subseteq \sharpFA$, i.e.\ if for every function $f \in \sharpFA$ the function $\varphi \circ f$ is also in $\sharpFA$.
Our goal in this section is to classify all functional closure properties of $\sharpFA$.
They will be precisely the ultimately PORC functions.

We call a function $\varphi: \IN \to \IN$ an \emph{ultimately almost PORC function%
} if there is a \emph{quasiperiod}~$p$, an \emph{offset} $N \in \IN$ and \emph{constituents} $\varphi_{0}, \ldots, \varphi_{p-1}: \IN \to \IQ$, where each $\varphi_i$ is either a polynomial with rational coefficients or a function in $2^{\Theta(n)}$, and for every $n \geq N$ we have
$\varphi(n) = \varphi_{n\rem p}(n)$.
If all the constituents are polynomials, we call $\varphi$ an \emph{ultimately PORC function}.
The smallest representative of each constituent and of the finite cases before the periodic behaviour is captured by the \emph{shifted remainder} operator $\smod{n}{N}{p}$, defined via

\vspace{-0.6cm}

\[
    \smod{n}{N}{p} = \begin{cases}
        n & \text{if $n < N$}\\
        \min\{k \geq N \mid k \equiv_p n\} & \text{if $n \geq N$}
    \end{cases}
\]

\vspace{-0.2cm}

The first half of the section is dedicated to proving that every ultimately PORC function is a functional closure property of $\sharpFA$, see Lemma~\ref{lem:univclosure:upper_bound}.
In order to prove this we show that $\sharpFA$ is closed under
\begin{restatable}[Addition]{lemma}{restateadd}
    \label{lem:add}
    If $f, g \in \sharpFA$, then $f+g \in \sharpFA$.
\end{restatable}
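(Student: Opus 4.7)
The plan is to build a single NFA $M$ that computes $f+g$ by taking the disjoint union of the NFAs $M_f$ and $M_g$ that compute $f$ and $g$ respectively. Formally, let $M_f = (Q_f, \Sigma, \wt_f, \inp_f, \outp_f)$ and $M_g = (Q_g, \Sigma, \wt_g, \inp_g, \outp_g)$, and assume without loss of generality that $Q_f \cap Q_g = \emptyset$ (otherwise rename the states). Define $M = (Q, \Sigma, \wt, \inp, \outp)$ with $Q = Q_f \cup Q_g$, and take $\inp$, $\outp$, $\wt$ to agree with the corresponding functions of $M_f$ on $Q_f$ (resp.\ $Q_f \times \Sigma \times Q_f$) and with those of $M_g$ on $Q_g$ (resp.\ $Q_g \times \Sigma \times Q_g$), and set them to zero on all remaining arguments (in particular, on mixed triples in $Q_f \times \Sigma \times Q_g$ or $Q_g \times \Sigma \times Q_f$).

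Next I would verify that $M$ computes $f+g$. The key observation is that since there are no transitions between $Q_f$ and $Q_g$, any computation $P = q_0 q_1 \cdots q_n$ of $M$ on $w \in \Sigma^\star$ with nonzero weight must have all states lying entirely in $Q_f$ or entirely in $Q_g$: if $q_0 \in Q_f$, then $\inp(q_0) = \inp_f(q_0)$, and inductively $q_i \in Q_f$ for all $i$ because $\wt(q_{i-1}, w_i, q_i) = 0$ whenever $q_{i-1} \in Q_f$ and $q_i \in Q_g$; the symmetric statement holds if $q_0 \in Q_g$. Consequently, the set of computations of $M$ on $w$ partitions (up to zero-weight computations) into those entirely in $M_f$ and those entirely in $M_g$, and on each part the weight under $\Cweight$ agrees with the weight under $\Cweight_f$ or $\Cweight_g$. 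Summing yields $f(w) + g(w)$.

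I do not expect a real obstacle here: the construction is the standard ``parallel union'' of automata, and the only thing one must be careful about is ruling out computations that mix states from both components, which is automatic from the zero transition weights. Everything else follows by splitting the defining sum for $M$ along the two disjoint sets of computations.
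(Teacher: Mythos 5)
Your construction is exactly the paper's disjoint-union (direct-union) construction, and your verification that every nonzero-weight computation stays within one component and contributes its original weight matches the paper's argument. Correct, and essentially the same proof.
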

\vspace{-.5cm}
\begin{restatable}[Multiplication]{lemma}{restatemult}
    \label{lem:mult}
    If $f, g \in \sharpFA$, then $f \cdot g \in \sharpFA$.
\end{restatable}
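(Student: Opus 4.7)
The plan is to use the standard product construction on NFAs, which mirrors the approach already sketched in the introduction for the Hadamard product. Given $f, g \in \sharpFA$ computed by $M_f = (Q_f, \Sigma, \wt_f, \inp_f, \outp_f)$ and $M_g = (Q_g, \Sigma, \wt_g, \inp_g, \outp_g)$ respectively, I construct an NFA $M = (Q, \Sigma, \wt, \inp, \outp)$ by taking $Q := Q_f \times Q_g$ and setting
\[
    \wt((p_1,p_2),a,(q_1,q_2)) := \wt_f(p_1,a,q_1)\cdot\wt_g(p_2,a,q_2),
\]
\[
    \inp((q_1,q_2)) := \inp_f(q_1)\cdot\inp_g(q_2),\qquad \outp((q_1,q_2)) := \outp_f(q_1)\cdot\outp_g(q_2).
\]

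Next I would establish the natural bijection between computations of $M$ on a word $w = w_1 \cdots w_n$ and pairs $(P_f, P_g)$ of computations of $M_f$ and $M_g$ on the same word. Concretely, a computation $P = (p_1^{(0)},p_2^{(0)})(p_1^{(1)},p_2^{(1)})\cdots(p_1^{(n)},p_2^{(n)})$ in $M$ corresponds to $P_f = p_1^{(0)}p_1^{(1)}\cdots p_1^{(n)}$ in $M_f$ and $P_g = p_2^{(0)}p_2^{(1)}\cdots p_2^{(n)}$ in $M_g$, and from the definition of $\wt$, $\inp$, $\outp$ above we immediately get $\Cweight(P) = \Cweight_f(P_f)\cdot\Cweight_g(P_g)$ by regrouping factors.

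Finally, summing over all computations and factoring the product of sums yields
\[
    \sum_{P} \Cweight(P) \;=\; \sum_{P_f}\sum_{P_g}\Cweight_f(P_f)\cdot\Cweight_g(P_g) \;=\; \left(\sum_{P_f}\Cweight_f(P_f)\right)\cdot\left(\sum_{P_g}\Cweight_g(P_g)\right) \;=\; f(w)\cdot g(w),
\]
so $M$ computes $f\cdot g$ and hence $f\cdot g \in \sharpFA$. There is no real obstacle here; the only thing worth being careful about is the factoring step, which relies on the distributivity of multiplication over addition in $\IN$ and on the fact that the bijection between computations of $M$ and pairs of computations respects the length of the word, so that no extra index bookkeeping is needed. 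If one prefers, Lemma~\ref{lem:simplify} can be invoked at the end to convert $M$ into a simple NFA, but this is not needed for the statement as phrased.
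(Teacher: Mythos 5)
Your proof is correct and matches the paper's own argument essentially verbatim: the same product-automaton construction with componentwise multiplication of weights, the same bijection between computations of $M$ and pairs of computations of $M_f$ and $M_g$, and the same factoring of the double sum. No issues.
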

\vspace{-.5cm}
\begin{restatable*}[Subtraction of constants]{lemma}{restatesubtraction}
    \label{lem:subtraction}
    If $f \in \sharpFA$, then $\forall c \in \IN: \, \max(f-c, 0) \in \sharpFA$.
\end{restatable*}
\vspace{-.5cm}
\begin{restatable*}[Clamping]{lemma}{restateclamping}
    \label{lem:clamping}
    If $f \in \sharpFA$, then $\min(f, c) \in \sharpFA$ for any constant $c \in \IN$.
\end{restatable*}
\vspace{-.5cm}
\begin{restatable*}[Comparison with constants]{lemma}{restateconst}
    \label{lem:const}
    If $f \in \sharpFA$, then the functions $\ONE_{f = c}$, $\ONE_{f \leq c}$, $\ONE_{f \geq c}$ are in $\sharpFA$ for any constant $c \in \IN$.
\end{restatable*}
\vspace{-.5cm}
\begin{restatable*}[Division by constants]{lemma}{restatedivision}
    \label{lem:division}
    If $f \in \sharpFA$, then $\forall c \in \IN \setminus \{0\} : \, \lfloor f / c \rfloor \in \sharpFA$.
\end{restatable*}
\vspace{-.5cm}
\begin{restatable*}[Modular arithmetic]{lemma}{restatemod}
    \label{lem:mod}
    If $f \in \sharpFA$, then the function $\ONE_{f \equiv_c d}$ is in $\sharpFA$ for any constants $c \in \IN \setminus \{0\}$ and $d \in \IZ_c$.
\end{restatable*}
\vspace{-.5cm}
\begin{restatable*}[Binomial coefficients]{lemma}{restatebinom}
    \label{lem:binom}
    If $f \in \sharpFA$, then $\binom{f}{c} \in \sharpFA$ for any constant $c \in \IN$.
\end{restatable*}

While addition and multiplication are technically bivariate functional closure properties, we list them here already since they are abundantly used throughout the proofs of the univariate functional closure properties.
Proofs of those two classical results can be found in \cite{droste2009handbook}[Ch. 4.1 and 4.2.2] and in the appendix, for the sake of completeness.

With the exception of binomial coefficients all the other closure properties need to be able to ``remove'' some of the possible computations.
For example, consider decrementation, the special case of truncated subtraction by one, and consider some simple NFA $M$ computing some strictly positive function $f$.
We now want to construct an NFA $M'$ that computes $f-1$, i.e. an NFA that has exactly one non-zero computation less than $M$ (assuming computations of weights zero or one).
For this we want a procedure to single out one non-zero computation of $M$ to then change its weight to zero.
For stronger models of computation -- like polynomial time non-deterministic Turing machines -- this approach seems hopeless.
Already deciding the existence of one such computation is $\NP$-hard.
However for NFAs, deciding the existence of a non-zero computation can be decided by a deterministic finite automaton, namely the powerset automaton.
Adjusting the powerset construction to filter out a single non-zero computation, namely the lexicographically minimal one can then be used to show that decrementation is a closure property of $\sharpFA$.

Generalizing this approach to more general properties about the computations gives us the framework of stepwise computation properties:

\vspace{-0.2cm}

\begin{definition}[Stepwise computation property]
    Let $M = (Q, \Sigma, \wt, \inp, \outp)$ be an NFA.
    A \emph{stepwise computation property} $\prop$ is defined as $\prop = (S, \init, \step, \cond)$ where $S$ is a finite set and $\init: Q \to S$, $\step: Q \times \Sigma \times Q \times S \to S$ and $\cond: S \to \{0, 1\}$ are functions.
    For $w = w_1 \ldots w_n \in \Sigma^\star$ and a computation $P = q_0 \ldots q_n$ of $M$ on $w$ we define a \emph{step sequence} $s_0 := \init(q_0)$ and $s_i := \step(q_{i-1}, w_i, q_i, s_{i-1})$ for $i \in [n]$.
    We also write
$\prop(w, P) := \cond(s_n)$ to be the evaluation of the property.
\end{definition}

These stepwise computation properties now enable us, given a simple NFA, to construct NFAs computing both of the following:

\vspace{-0.2cm}

\begin{restatable}{lemma}{restatebaseconstruction}
    \label{lem:baseconstruction}
    Let $M_f$ %
    be a simple NFA computing a function $f$ and let $\prop$ %
    be a stepwise computation property.
    Then there is an NFA $M$ %
    computing $g(w) = \sum_{P} \Cweight_f(P) \cdot \prop(w, P)$, where the sum is over all computations $P$ of $M_f$ on $w$.
\end{restatable}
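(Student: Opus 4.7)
The plan is a product construction between $M_f$ and the finite set $S$ of the stepwise computation property, using the determinism of $\init$ and $\step$ to ensure that each computation of $M_f$ is tracked by a unique $S$-trajectory. Formally, I would define $M = (Q \times S, \Sigma, \wt', \inp', \outp')$ where
\[
\inp'(q,s) = \inp_f(q) \cdot \ONE_{s = \init(q)},
\]
\[
\wt'\bigl((q,s), a, (q', s')\bigr) = \wt_f(q, a, q') \cdot \ONE_{s' = \step(q, a, q', s)},
\]
\[
\outp'(q,s) = \outp_f(q) \cdot \cond(s).
\]

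Next I would establish the bijection on computations. A computation of $M$ on $w = w_1 \ldots w_n$ is a sequence $(q_0, s_0)(q_1, s_1)\ldots(q_n, s_n)$, but the indicator factors in $\inp'$ and $\wt'$ force $s_0 = \init(q_0)$ and $s_i = \step(q_{i-1}, w_i, q_i, s_{i-1})$ for every $i \in [n]$ whenever the partial weight is nonzero. Hence every nonzero-weight computation of $M$ on $w$ projects via the first coordinate to a computation $P = q_0 \ldots q_n$ of $M_f$ on $w$, and conversely each computation $P$ of $M_f$ lifts to exactly one $S$-trajectory, namely the step sequence determined by $P$ via $\init$ and $\step$. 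Under this bijection, the weight contributed in $M$ factors cleanly as
\[
\Cweight_M\bigl((q_0,s_0)\ldots(q_n,s_n)\bigr) \;=\; \Cweight_f(P) \cdot \cond(s_n) \;=\; \Cweight_f(P) \cdot \prop(w, P),
\]
because the indicator factors evaluate to $1$ on the unique admissible lift.

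Finally, summing over all computations of $M$ on $w$ gives
\[
f_M(w) \;=\; \sum_{P} \Cweight_f(P) \cdot \prop(w, P) \;=\; g(w),
\]
which is exactly the function claimed. Since $Q \times S$ is finite and $\wt', \inp', \outp'$ take values in $\IN$, the tuple $M$ is a valid NFA in the sense of Definition~\ref{def:NFA}.

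I do not expect a significant obstacle here: the construction is a direct product, and the crucial observation is merely that $\init$, $\step$, and $\cond$ are deterministic functions, so the additional coordinate in $Q \times S$ does not introduce any spurious nondeterminism. The only point that deserves care is the bookkeeping in the bijection above, in particular that the $s$-coordinate at time $0$ is enforced to be $\init(q_0)$ by $\inp'$ rather than left free, so that the step sequence is uniquely pinned down by $P$.
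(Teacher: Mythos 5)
Your proposal is correct and matches the paper's proof essentially verbatim: the same product construction $Q_f \times S$ with $\inp$, $\wt$, and $\outp$ gated by the indicator of the deterministic step sequence, and the same observation that nonzero-weight computations of $M$ are in bijection with computations of $M_f$, with weights factoring as $\Cweight_f(P) \cdot \cond(s_n)$. The only difference is notational (indicator functions versus case distinctions).
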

\vspace{-0.4cm}
\begin{restatable}{lemma}{restatebaseconstructionall}
    \label{lem:baseconstruction_all}
    Let $M_f$ %
    be a simple NFA computing a function $f$ and let $\prop$ %
    be a stepwise computation property.
    Then there is an NFA $M$ %
    computing $g(w) = \sum_{P} \prop(w, P)$, where the sum is over all computations $P$ of $M_f$ on $w$.
\end{restatable}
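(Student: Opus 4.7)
The plan is to reduce this statement to Lemma~\ref{lem:baseconstruction} by passing from $M_f$ to a ``saturated'' simple NFA in which every sequence of states is a computation of weight one. Concretely, I would define $\tilde M_f = (Q, \Sigma, \tilde\wt, \tilde\inp, \tilde\outp)$ with the same state set and alphabet as $M_f$, but with $\tilde\wt \equiv 1$, $\tilde\inp \equiv 1$, and $\tilde\outp \equiv 1$. This is trivially a simple NFA.

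The two observations that make the reduction work are both immediate from the definitions. First, a computation of $M_f$ on a word $w = w_1 \dots w_n$ is, per Def.~\ref{def:NFA}, merely a sequence $q_0 q_1 \dots q_n \in Q^{n+1}$ of states, with no constraint coming from $\wt$, $\inp$, or $\outp$; therefore the set of computations of $M_f$ on $w$ coincides as a set with the set of computations of $\tilde M_f$ on $w$. Second, the stepwise computation property $\prop = (S, \init, \step, \cond)$ depends only on the state set $Q$, the input word, and the visited state sequence $P$, and in particular is defined without reference to the weight functions of the automaton. Hence $\prop(w, P)$ yields the same value whether we view $P$ as a computation of $M_f$ or of $\tilde M_f$.

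With these observations in hand, I would apply Lemma~\ref{lem:baseconstruction} to the simple NFA $\tilde M_f$ and the stepwise computation property $\prop$. This produces an NFA $M$ computing
\[
    w \;\mapsto\; \sum_P \Cweight_{\tilde M_f}(P) \cdot \prop(w, P),
\]
where the sum ranges over all computations of $\tilde M_f$ on $w$. Since $\tilde\wt$, $\tilde\inp$, $\tilde\outp$ are identically $1$, every such computation satisfies $\Cweight_{\tilde M_f}(P) = 1$, so the expression collapses to $\sum_P \prop(w, P)$ summed over $Q^{n+1}$, which equals the function $g$ in the statement.

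There is no real obstacle here; the only conceptual point is recognising that ``summing $\prop$ over all computations, weighted uniformly'' is the same as ``summing $\Cweight \cdot \prop$ over the automaton in which every computation has weight $1$'', at which point Lemma~\ref{lem:baseconstruction} does all the work. The simplicity assumption on $M_f$ is in fact vestigial for this lemma, since the construction never consults the weights of $M_f$ itself, only its state set.
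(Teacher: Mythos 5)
Your proof is correct and takes a genuinely different route from the paper's. The paper proves Lemma~\ref{lem:baseconstruction_all} by an independent direct construction: it defines $M$ on state set $Q_f\times S$ with $\inp((q,s))=\ONE_{s=\init(q)}$, $\outp((q,s))=\cond(s)$, and $\wt((q,s),\sigma,(q',s'))=\ONE_{s'=\step(q,\sigma,q',s)}$, essentially re-running the argument of Lemma~\ref{lem:baseconstruction} with the $M_f$-weights dropped. Your reduction --- saturate $M_f$ to $\tilde M_f$ with constant weight $1$ and then invoke Lemma~\ref{lem:baseconstruction} on $\tilde M_f$ --- is valid for exactly the two reasons you identify: Def.~\ref{def:NFA} defines a computation as an arbitrary state sequence $q_0\cdots q_n\in Q^{n+1}$ regardless of weights, so $M_f$ and $\tilde M_f$ have the same computations on every word; and a stepwise computation property is defined purely in terms of $Q$, $\Sigma$, and the state sequence, so $\prop(w,P)$ is automaton-agnostic. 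In fact, unwinding the construction of Lemma~\ref{lem:baseconstruction} on $\tilde M_f$ produces literally the same NFA the paper writes down, so the two proofs agree on the object constructed; what your version buys is that one never has to repeat the correctness argument, while the paper's version has the mild advantage of being self-contained and not introducing an auxiliary automaton. Your closing remark that the simplicity hypothesis on $M_f$ is vestigial for this particular lemma is also accurate, since the construction consults only the state set of $M_f$.
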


\vspace{-0.5cm}

\begin{proofsketch}
    For both of these lemmas, we construct a sort of product automaton of $M_f$ and $\prop$ (represented by the set $S$), the details can be found in the appendix. 
\end{proofsketch}

\vspace{-0.2cm}

In other words, stepwise computation properties allow us to either ``disable'' specific computations of $M_f$ or they allow us to directly extract information about the computations of $M_f$.
Note that the restriction on the finiteness of $S$ is necessary, as the elements of $S$ are hard-coded into the state space of the NFA in Lemmas~\ref{lem:baseconstruction} and~\ref{lem:baseconstruction_all}.
In particular, these lemmas do not hold for even countably infinite $S$, which can be seen with the example $S=\IN$ when we define the stepwise computation property in such a way that $\prop(w, P) = 1$ iff $|w|$ is prime, leading to an NFA recognizing the language of all words of prime length, a well known contradiction.

Further note that these two constructions do not incur exponential blowups themselves, however for most of our applications the set $S$ will be of exponential size in the number of states of $M_f$.

Returning to our decrementation example, to use Lemma~\ref{lem:baseconstruction} we want to construct a stepwise computation property $\prop$ with $\prop(w, P) = 0$ iff $P$ is the lexicographically smallest non-zero weight computation on $w$.
For this we can set $S = \powset(Q)$ to be the set of all subsets of states $Q$, denoting the set of states that currently are the endpoints of lexicographically smaller partial computations of non-zero weight than the partial computation $P$ we are on.
We need to store all such potential states, since some current partial non-zero weight computations might not be possible to be completed to a full non-zero weight computation.
Initially this set contains all states that are smaller than the start state of $P$, the $\step$ function then checks which lexicographically smaller partial computations can be extended and whether any new partial computations that agreed with $P$ up to this state can be lexicographically smaller than $P$.
Finally the $\cond$ function then checks whether there are any such lexicographically smaller partial computations left that can be completed to a non-zero weight computation, i.e.\ that end on a state $q \in Q$ with $\outp(q) = 1$.

We want to generalize this idea to be able to generally create stepwise computation properties that argue about the number of non-zero computations, either in total or lexicographically smaller than a given computation.
However doing this in general would require choosing the set $S$ as the set of functions $Q \to \IN$, which is infinite.
As a result we embed the number of computations into finite semirings first to then extract the relevant information.
For this purpose we only consider semirings with both additive and multiplicative identities.
Homomorphisms $h$ from a semiring $\CR$ into another semiring $\CR'$ need to fulfill
$h(a+b) = h(a) + h(b)$, \,
$h(a \cdot b) = h(a) \cdot h(b)$, \,
$h(1_{\CR}) = 1_{\CR'}$, \,
$h(0_{\CR}) = 0_{\CR'}$.
Since every element of $\IN$ is either $0$ or can be formed by repeated addition of $1$, any homomorphism from $\IN$ into any other semiring is uniquely defined.

We can then use $\CR$ to construct stepwise computation properties (the full proofs can be found in the appendix).
Combined with Lemmas~\ref{lem:baseconstruction} and~\ref{lem:baseconstruction_all} these use similar ideas to \cite{klimann2004deciding}.

\vspace{-0.2cm}

\begin{restatable}{lemma}{restatepropall}
    \label{lem:prop_all}
    Let $N = (Q, \Sigma, \wt, \inp, \outp)$ be a simple NFA and let $\CR$ be a finite semiring and let $\tau: \IN \to \CR$ be the unique homomorphism from $\IN$ to $\CR$.
    For any function $\pi: \CR \to \{0, 1\}$ there is a stepwise computation property $\prop$ with $\prop(w, P) = \pi(\tau(\sum_{P'}\Cweight(P')))$, where the sum is over all computations $P'$ of $N$ on $w$, independent of $P$.
\end{restatable}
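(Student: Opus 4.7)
The plan is to run, in parallel with the reference computation $P$ on $w$, a ``state-weight vector'' over $\CR$ indexed by $Q$ that records, for each $q \in Q$, the $\CR$-sum $\sum_{P'} \tau(\Cpweight(P'))$ ranging over all partial computations $P'$ of $N$ ending at $q$ after reading the prefix of $w$ so far. Since this vector depends only on $w$ (not on $P$), I would design $\init$ and $\step$ so that they discard the state-information of $P$ entirely, guaranteeing that $\prop(w,P)$ comes out independent of $P$ as required.

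Concretely, set $S := \CR^Q$, which is finite since both $\CR$ and $Q$ are finite. Define $\init(q_0) \in \CR^Q$ by $\init(q_0)(q) := \tau(\inp(q))$, ignoring $q_0$. Define $\step(q,a,q',s) \in \CR^Q$ by
\[
    \step(q,a,q',s)(q'') \;:=\; \sum_{q''' \in Q} s(q''') \cdot \tau(\wt(q''',a,q'')),
\]
ignoring $q$ and $q'$. Finally set $\cond(s) := \pi\bigl(\sum_{q \in Q} s(q) \cdot \tau(\outp(q))\bigr)$.

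A short induction on $i$ then shows that, for $w = w_1 \cdots w_n$ and any computation $P$ of $N$ on $w$, the step-sequence entry satisfies $s_i(q) = \sum_{P'} \tau(\Cpweight(P'))$, where the sum is over all partial computations $P'$ of $N$ on the length-$i$ prefix $w_1 \cdots w_i$ that end in state $q$. The base case $i = 0$ matches the definition of $\init$ because the only such $P'$ is the one-state computation $(q)$, which has $\Cpweight = \inp(q)$. The inductive step is the identity obtained by expanding $\Cpweight$ along its last transition and pulling $\tau$ inside the sum and product, which is legal because $\tau$ is a semiring homomorphism. Pairing $s_n$ with $\tau(\outp(q))$ inside $\cond$ then produces $\sum_q \tau\bigl(\bigl(\sum_{P'} \Cpweight(P')\bigr) \cdot \outp(q)\bigr) = \tau\bigl(\sum_{P'} \Cweight(P')\bigr)$, again by the homomorphism property, so $\cond(s_n) = \pi(\tau(\sum_{P'}\Cweight(P')))$.

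The main ``obstacle'' is really just the reason the hypotheses are there in the first place: a naive simulation tracking the actual $\IN$-valued state-weight vector would live in $\IN^Q$, which is infinite and therefore disallowed as an $S$ in a stepwise computation property. Pushing the entire computation through $\tau$ into the finite semiring $\CR$ is precisely what makes $S = \CR^Q$ an adequate finite bookkeeping space, and it is exactly the homomorphism properties of $\tau$ that let the inductive step go through unchanged.
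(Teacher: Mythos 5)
Your construction is identical to the paper's: $S = \CR^Q$, with $\init$ and $\step$ propagating the $\tau$-image of the state-weight vector (deliberately ignoring the reference computation's current state), and $\cond$ contracting against $\tau(\outp)$ before applying $\pi$; the induction on the prefix length is likewise the same. This is exactly the paper's proof, correctly carried out.
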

\vspace{-0.4cm}
\begin{proofsketch}
    Construct $\prop = (S, \init, \step, \cond)$ via:
$S = Q \to \CR$, \ $\init(q) = r \mapsto \tau(\inp(r))$, \ 
$\step(q, \sigma, q', s) = r \mapsto \textstyle\sum_{r' \in Q}s(r') \cdot \tau(\wt(r', \sigma, r))$,
and $\cond(s) = \pi\bigl(\textstyle\sum_{r \in Q}s(r) \cdot \tau(\outp(r))\bigr)
$.    
    Let $s_0, \ldots, s_n$ be the step sequence of any computation $P$ of $w$.
    Since $\tau$ is a homomorphism, we can pull out $\tau$.
    Thus $s_i(q) = \tau(\sum_{\tilde P}\Cpweight(\tilde P))$, where the sum is over all computations $\tilde P$ of $w_1, \ldots w_i$ ending in the state $q$.
    The condition $\cond$ then completes this to $\prop(w, P) = \pi(\tau(\sum_{P'}\Cweight(P')))$, where the sum is over all computations $P'$ of $N$ on $w$.
\end{proofsketch}

\vspace{-0.4cm}

\begin{restatable}{lemma}{restateproplexi}
    \label{lem:prop_lexi}
    Let $M = (Q, \Sigma, \wt, \inp, \outp)$ be a simple NFA with some ordering $<$ of $Q$, let $\CR$ be a finite semiring and let $\tau: \IN \to \CR$ be the unique homomorphism from $\IN$ to $\CR$.
    For any function $\pi: \CR \to \{0, 1\}$ there is a stepwise computation property $\prop$ with $\prop(w, P) = \pi(\tau(\sum_{P'}\Cweight(P')))$ for all non-zero computations $P$ of $N$ on $w$, where the sum is over all computations $P'$ of $N$ on $w$ that are lexicographically smaller than $P$.
\end{restatable}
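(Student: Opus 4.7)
The plan is to adapt the construction of Lemma~\ref{lem:prop_all} by refining the step state $s_i : Q \to \CR$ so that $s_i(q)$ tracks only those partial computations $\tilde P$ of $M$ on $w_1 \ldots w_i$ that have \emph{already branched} from the reference computation $P = q_0 q_1 \ldots q_n$ at some step $j \le i$ (meaning $\tilde P_k = q_k$ for $k < j$ and $\tilde P_j < q_j$) and end at state $q$. The target invariant is $s_i(q) = \tau\bigl(\sum_{\tilde P} \Cpweight(\tilde P)\bigr)$, summed over all such branched partial computations of length $i$ ending at $q$. Since $P$ is assumed to be a non-zero computation of a simple NFA, the initial weight of $P$ and every transition weight along $P$ equals $1$, so the partial weight of every prefix of $P$ itself is $1$ and need not be tracked.

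Concretely I set $S = Q \to \CR$, with
\[
\init(q_0)(q) \ = \ \tau(\inp(q)) \cdot \ONE_{q < q_0}
\]
accounting for length-$0$ branches, and $\cond(s) = \pi\bigl(\sum_{q \in Q} s(q) \cdot \tau(\outp(q))\bigr)$ applying the output weights at the end. The step function is defined by
\[
\step(q_{i-1}, w_i, q_i, s_{i-1})(q) \ = \ \sum_{r \in Q} s_{i-1}(r) \cdot \tau(\wt(r, w_i, q)) \ + \ \ONE_{q < q_i} \cdot \tau(\wt(q_{i-1}, w_i, q)).
\]
The first summand extends already-branched partial computations by one transition (just as in Lemma~\ref{lem:prop_all}); the second summand creates \emph{new} branches at step $i$ by appending a transition from $q_{i-1}$ to some $q < q_i$ after the matching prefix of $P$, whose partial weight equals $1$.

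A straightforward induction on $i$, using that $\tau$ is a semiring homomorphism, verifies the invariant. At the end, every computation $P' <_{\text{lex}} P$ has a unique first index at which it differs from $P$, so decomposing $\sum_{P' <_{\text{lex}} P} \Cweight(P')$ according to this index and the final state of $P'$ matches $\sum_{q \in Q} s_n(q) \cdot \tau(\outp(q))$ once $\tau$ is applied, yielding $\prop(w, P) = \cond(s_n) = \pi\bigl(\tau\bigl(\sum_{P' <_{\text{lex}} P}\Cweight(P')\bigr)\bigr)$ as required.

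The main subtlety is ensuring that every lex-smaller computation contributes exactly once and with the correct weight: indexing the decomposition by the \emph{first} position at which $P'$ diverges from $P$ achieves this. The assumption that $P$ is non-zero is essential, since otherwise some prefix weight of $P$ itself could differ from $1$ and would have to be propagated alongside $s_i$; this restriction is harmless because in Lemma~\ref{lem:baseconstruction} zero-weight computations contribute nothing to the final sum anyway.
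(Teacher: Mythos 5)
Your construction is identical to the paper's, modulo notation: same step state $S = Q \to \CR$, same $\init$, $\step$, and $\cond$, the same invariant (sum of $\tau(\Cpweight)$ over lex-smaller partial computations ending at each state), and the same inductive argument decomposing by the first index of divergence from $P$, using non-zeroness of $P$ in the simple NFA to conclude the shared prefix has weight $1$. Correct, and essentially the paper's proof.
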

\vspace{-0.4cm}
\begin{proofsketch}Construct $\prop = (S, \init, \step, \cond)$ via:
$S = Q \to \CR$, \ $\init(q) = r \mapsto \tau(\ONE_{r < q} \cdot \inp(r))$, \  
$\step(q, \sigma, q', s) = r \mapsto \textstyle\sum_{r' \in Q}s(r') \cdot \tau(\wt(r', \sigma, r)) + \tau(\ONE_{r < q'} \cdot \wt(q, \sigma, r))$, \ 
$\cond(s) = \pi(\textstyle\sum_{r \in Q}\tau(\outp(r)) \cdot s(r))$.
    Let $s_0, \ldots, s_n$ be the step sequence of any computation $P = q_0 \ldots q_n$ of $w$.
    Inductively we can show that $s_{i}(r) = \sum_{P'} \tau(\Cpweight(P'))$ for all $i \in \{0, \ldots, n\}$ and $r \in Q$, where the sum is over all computations $P' = q_0' \ldots q_{i}'$ of $N$ on $w_1 \ldots w_{i}$ with $q_{i}' = r$ that are lexicographically smaller than $q_0 \ldots q_{i}$.
    Initially the only computations $P' = q_0'$ that are lexicographically smaller than the computation $q_0$ are the ones with $q_0' < q_0$.
    For $i > 0$ for a computation $P' = q_0' \ldots q_{i}'$ to be lexicographically smaller than $q_0 \ldots q_{i}$ there are two possibilities.
    Either $q_0' \ldots q_{i-1}'$ is already lexicographically smaller than $q_0 \ldots q_{i-1}$  or $q_0' \ldots q_{i-1}' = q_0 \ldots q_{i-1}$ and $q_{i}' < q_{i}$.
    In the second case the weight of $P'$ is precisely $w(q_{i-1}, w_{i}, q_{i}')$ since $P$ is a computation of non-zero weight and thus weight exactly $1$.
    Combining all of this, we can finish the proof of the claim with a similar argument to Lemma~\ref{lem:prop_all}.
\end{proofsketch}

\vspace{-0.2cm}

Note that the previous lemma makes no statement about the value of $\prop(w, P)$ for any computations $P$ of weight zero.
However, this is enough for our uses, since we only combine it with Lemma~\ref{lem:baseconstruction}, i.e., $\prop(w, P)$ gets weighted by $\Cweight(P)$.

Most commonly, as is the case for decrementation, we want to be able to exactly distinguish the number of non-zero computations if it is less than $k$ and otherwise be able to tell that the number is at least $k$.
This is achieved by using the following capped semiring:

\vspace{-0.1cm}

\begin{definition}[Capped semiring]
    For $k \in \IN$ we call the semiring $\CR_k = \{0, \ldots, k\}$ with the operations $a +_{\CR} b := \min(a + b, k)$ and $a \cdot_{\CR} b := \min(a \cdot b, k)$ the \emph{capped} semiring.
\end{definition}

\vspace{-0.2cm}

We can now show that decrementation is a functional closure property by simply using Lemma~\ref{lem:prop_lexi} using the capped semiring $\CR_1$ and $\pi(a) = a$ to construct our wanted stepwise computation property computing $\prop(w, P) = 0$ iff $P$ is the lexicographically smallest non-zero weight computation on $w$.

Most of the remaining closure properties are now proven by using the capped semiring of a specific size and choosing the function $\pi$ accordingly, we will show this in detail for the example of subtraction, the remaining proofs can be found in the appendix.

\vspace{-0.1cm}

\restatesubtraction

\vspace{-0.4cm}

\begin{proof}
    Let $M_f = (Q_f, \Sigma, \wt_f, \inp_f, \outp_f)$ be a simple NFA computing $f$ with an arbitrary ordering $<$ on $Q_f$.
    Lemma~\ref{lem:prop_lexi} on the capped semiring $\CR_{c}$ with $\pi(a) = \ONE_{a \geq c}$ for all $a \in \CR$ constructs a stepwise computation property $\prop$ with
    \vspace{-0.2cm}
    \[
        \prop(w, P) = \begin{cases}
            1 & \begin{minipage}{11cm}
            if the number of non-zero computations $P'$ on $w$ that are lex. smaller\\[-3pt] than $P$ is at least $c$
            \end{minipage}\\[3pt]
            0 & \text{otherwise}\\
        \end{cases}
    \]

    \vspace{-0.2cm}

    for all computations $P$ on $w$ of non-zero weight, i.e., $\prop(w, P) = 0$ iff $P$ is one of the $c$ lexicographically smallest computations on $w$ with non-zero weight.
    It follows that the NFA $M$ constructed by Lemma~\ref{lem:baseconstruction} computes $g(w) = \max(f(w) - c, 0)$.
\end{proof}

\vspace{-0.2cm}

If instead of rejecting the $c$ lexicographically smallest computations, we accept only those computations, we compute the minimum of $f$ and $c$.

\vspace{-0.1cm}

\restateclamping

\vspace{-0.2cm}

By using the capped semiring $\CR_{c+1}$ with $\pi_{=}(a) = \ONE_{a = c}$, $\pi_{\leq}(a) = \ONE_{a \leq c}$ and $\pi_{\geq}(a) = \ONE_{a \geq c}$, we can compute the indicator functions $\ONE_{f = c}$, $\ONE_{f \leq c}$ and $\ONE_{f \geq c}$ respectively.

\vspace{-0.1cm}

\restateconst

\vspace{-0.2cm}

The previous lemma in particular also implies the following:

\vspace{-0.2cm}

\begin{lemma}
    \label{lem:finitedifference}
    If $\varphi: \IN \to \IN$ is a functional closure property of $\sharpFA$ and $\psi: \IN \to \IN$ is an arbitrary function with $\varphi(n) = \psi(n)$ for all but finitely many $n \in \IN$, then $\psi$ is also a functional closure property of $\sharpFA$.
\end{lemma}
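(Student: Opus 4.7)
The plan is to write $\psi \circ f$ explicitly as a finite combination of functions already known to lie in $\sharpFA$, exploiting that $\varphi$ and $\psi$ agree outside a finite ``exceptional'' set. Let $E := \{n \in \IN : \varphi(n) \neq \psi(n)\}$, which is finite by hypothesis; if $E = \emptyset$ there is nothing to prove, so set $M := \max E$. I would first establish the pointwise identity
\[
\psi(f(w)) \;=\; \varphi(f(w)) \cdot \ONE_{f(w) > M} \;+\; \sum_{n=0}^{M} \psi(n) \cdot \ONE_{f(w) = n},
\]
valid for all $w \in \Sigma^\star$ and all $f : \Sigma^\star \to \IN$. The verification is a two-case split: when $f(w) > M$, we have $f(w) \notin E$ so $\psi(f(w)) = \varphi(f(w))$ matches the first term, while the sum vanishes; when $f(w) \leq M$, the first term vanishes and the sum extracts exactly $\psi(f(w))$.

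Next, I would check that each piece on the right-hand side belongs to $\sharpFA$ whenever $f \in \sharpFA$. Since $\varphi$ is a functional closure property of $\sharpFA$, the function $\varphi \circ f$ lies in $\sharpFA$. The indicator $\ONE_{f > M} = \ONE_{f \geq M+1}$ is in $\sharpFA$ by Lemma~\ref{lem:const}, and the product $\varphi(f) \cdot \ONE_{f > M}$ is then in $\sharpFA$ by Lemma~\ref{lem:mult}. Each $\ONE_{f = n}$ is in $\sharpFA$ by Lemma~\ref{lem:const}, and multiplication by the constant $\psi(n) \in \IN$ amounts to $\psi(n)$-fold addition of that indicator with itself, which stays in $\sharpFA$ by Lemma~\ref{lem:add}. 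A final application of Lemma~\ref{lem:add} combines the $M+2$ pieces into a single $\sharpFA$-function, showing $\psi \circ f \in \sharpFA$ and hence that $\psi$ is a functional closure property.

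The only subtlety worth flagging is that everything takes place inside $\IN$, so there is no direct way to ``subtract off'' the wrong values of $\varphi$ on $E$ and replace them by the correct values of $\psi$. The decomposition above avoids this issue by using $\ONE_{f > M}$ as a mask that zeroes out $\varphi \circ f$ exactly on the region where $\varphi$ and $\psi$ might disagree, so truncated subtraction is never needed and every intermediate function is manifestly non-negative. Beyond this bookkeeping, the argument is a straightforward application of the closure lemmas already proved above, and I do not anticipate a significant obstacle.
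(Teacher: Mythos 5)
Your proof is correct and follows essentially the same route as the paper: both decompose $\psi \circ f$ as a finite sum of $\psi(n) \cdot \ONE_{f = n}$ terms over the exceptional region plus $\varphi(f) \cdot \ONE_{f \geq N}$ (with your $N = M+1$), then invoke Lemmas~\ref{lem:add}, \ref{lem:mult}, and \ref{lem:const}. The only difference is cosmetic: you spell out the two-case verification and the non-negativity remark, which the paper leaves implicit.
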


\vspace{-0.4cm}

\begin{proof}
    Let $f \in \sharpFA$ be arbitrary.
    Further let $N \in \IN$ be such that $\varphi(n) = \psi(n)$ for all $n \geq N$.
    Then $\textstyle (\psi \circ f)(w) = \sum_{i=0}^{N-1}\ONE_{f(w)=i}\psi(i) + \ONE_{f(w) \geq N} \varphi(f(w))$
    for all $w \in \Sigma^\star$.
    In particular $\psi \circ f \in \sharpFA$ since the $\psi(i)$ are constants and $\varphi \circ f \in \sharpFA$.
\end{proof}

\vspace{-0.2cm}

Division and modular arithmetic however use a different semiring: they use the finite cyclic semiring $\IZ_c = \{0, \ldots, c-1\}$ with $\pi(a) = \ONE_{a = c-1}$ and $\pi(a) = \ONE_{a = d}$ respectively.

\vspace{-0.1cm}

\restatedivision
\vspace{-0.3cm}
\restatemod

\vspace{-0.2cm}

The previous closure properties turn out to already be sufficient to generate all functional closure properties, so in particular they are sufficient to generate binomial coefficients by using subtraction of constants, multiplication and division by constants by using the definition of binomial coefficients as a polynomial: \mbox{$\binom{x}{c} = \frac{1}{c!} x \cdot (x-1) \cdot \ldots \cdot (x-c+1)$}.
Nonetheless, we give an additional proof for binomial coefficients as a different interesting application of the stepwise computation property framework.

\restatebinom

\vspace{-0.4cm}

\begin{proof}
    Let $M_f = (Q_f, \Sigma, \wt_f, \inp_f, \outp_f)$ be a simple NFA computing $f$ and let $c \in \IN$.
    For $c < 2$ the statement of this lemma is trivially true, so assume $c \geq 2$.

    We construct the $c$-fold product automaton $M_f^c = (Q_f^c, \Sigma, \wt_f^c, \inp_f^c, \outp_f^c)$ with

\vspace{-.7cm}
    
    \begin{align*}
        \wt_f^c((q_1, \ldots, q_c), \sigma, (q_1', \ldots, q_c')) &= \textstyle\prod_{i=1}^c \wt_f(q_i, \sigma, q_i')\\
        \inp_f^c((q_1, \ldots, q_c)) &= \textstyle\prod_{i=1}^c \inp_f(q_i)\\
        \outp_f^c((q_1, \ldots, q_c)) &= \textstyle\prod_{i=1}^c \outp_f(q_i)
    \end{align*}

\vspace{-.2cm}
    
    $M_f^c$ is a simple NFA and every computation on $M_f^c$ is the cartesian product of $c$ computations on $M_f$.
    Our aim is to now construct a stepwise computation property $\prop = (S, \init, \step, \cond)$ such that $\prop(w, P) = 1$ iff $P$ is composed of $c$ pairwise distinct computations\footnote{We could also require them to be sorted in lexicographical order by having $S$ be the set of all total preorders, but since we can divide by $c!$ we are going with the easier exposition.} on $N_f$.

    For this let $S$ be the set of all equivalence relations on the set $[c]$.
    We define $\init((q_0, \ldots, q_c))$ to be the equivalence relation $R_0$ with $(a,b) \in R_0$ iff $q_a = q_b$.
    Additionally we define $\cond(R) = 1$ iff $R$ is the equivalence relation where every element is only equivalent to itself, i.e.\ a computation gets accepted iff all its constituent computations are pairwise distinct.
    Finally we define $\step((q_1, \ldots, q_c), \sigma, (q_1', \ldots, q_c'), R)$ to be the equivalence relation $R'$ defined via $(a, b) \in R'$ iff $(a, b) \in R$ and $q_a' = q_b'$.
    With a simple induction we can prove that for a computation $P = P_1 \times \ldots \times P_c$ and the step sequence $R_0, \ldots, R_n$ we have $(a, b) \in R_i$ iff the computations $P_a$ and $P_b$ are identical for the first $i$ steps.
    It follows that the NFA $M$ constructed by Lemma~\ref{lem:baseconstruction} computes $g(w) = \binom{f(w)}{c} \cdot c!$.
    Now, $\binom{f}{c} \in \sharpFA$ 
    by Lemma~\ref{lem:division}.
\end{proof}

\vspace{-0.2cm}

While the combination of the previous lemmas can be used to show that any polynomial written in the binomial basis with non-negative integer coefficients is a functional closure property of $\sharpFA$, we can do better by considering a shifted binomial basis.
For example, consider the polynomial $\varphi(x) = \frac{x^2}{2} - \frac{3x}{2} + 1$.
This polynomial is non-negative for all $x \in \IN$.
Writing $\varphi$ in the binomial basis we get $\varphi(x) = \binom{x}{2} - \binom{x}{1} + 1$.
If however we allow the upper indices of the binomial basis to be shifted, we can write $\varphi$ without the use of negative coefficients as $\varphi(x) = \binom{x-1}{2}$.
While $x-1$ itself is not a functional closure property of $\sharpFA$ the function $\max(x-1, 0)$ is a functional closure property of $\sharpFA$ and is different from $x-1$ for only finitely many $x \in \IN$.
In the same way we see that $\varphi'(x) := \binom{\max(x-1, 0)}{2}$ only differs from $\varphi$ for finitely many $x \in \IN$, namely $x = 0$.
Using Lemma~\ref{lem:finitedifference} to change those finitely many values, we see that $\varphi$ is indeed a functional closure property of $\sharpFA$.

Generalizing this idea we will show with the next two lemmas that this is possible for any $\varphi$ with integer coefficients in the binomial basis, with a small restriction: We don't show that $\varphi$ itself is a functional closure property of $\sharpFA$, but rather that $x \mapsto \max(\varphi(x), 0)$ is one.
Note that this restriction is the best we can hope for, since no computation in an NFA can ever have negative weight.

\vspace{-0.2cm}

\begin{restatable}{lemma}{restatebinbasischange}
    \label{lem:binbasischange}
    Let $\varphi(x) = \sum_{i=0}^r a_i \cdot \binom{x}{i}$ with $a_i \in \IZ$ and $a_r > 0$.
    Then there are $b_0, \ldots, b_r \in \IN$ and $c_0, \ldots, c_r \in \IN$ with $\varphi(x) = \sum_{i=0}^r b_i \cdot \binom{x-c_i}{i}$.
\end{restatable}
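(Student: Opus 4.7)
My plan is to induct on $r$. For the base case $r = 0$, we have $\varphi(x) = a_0$ with $a_0 > 0$, and taking $b_0 = a_0$, $c_0 = 0$ gives the required decomposition.

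For the inductive step, I would first write down how a shifted binomial expands in the unshifted binomial basis. Using the Vandermonde identity together with $\binom{-c}{k} = (-1)^k\binom{c+k-1}{k}$, one has
\[
\binom{x-c}{r} \;=\; \sum_{j=0}^{r}(-1)^{r-j}\binom{c+r-j-1}{r-j}\binom{x}{j},
\]
so the coefficient of $\binom{x}{r}$ is $1$ and the coefficient of $\binom{x}{r-1}$ is $-c$. Setting $b_r := a_r$, the polynomial $\psi(x) := \varphi(x) - a_r\binom{x-c_r}{r}$ has no $\binom{x}{r}$ term, and its $\binom{x}{r-1}$ coefficient equals $a_{r-1} + c_r a_r$.

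Next I would choose $c_r \in \IN$ large enough that $a_{r-1} + c_r a_r > 0$; since $a_r \geq 1$, any $c_r \geq \max\{0,\, 1 - a_{r-1}\}$ works. Then $\psi$ is a polynomial of binomial-basis degree exactly $r-1$ with integer coefficients (inherited from $\varphi$ together with the integer-coefficient expansion above) and strictly positive leading coefficient, so the induction hypothesis applies to $\psi$ and yields $b_0, \ldots, b_{r-1} \in \IN$ and $c_0, \ldots, c_{r-1} \in \IN$ with $\psi(x) = \sum_{i=0}^{r-1} b_i\binom{x-c_i}{i}$. Adding back the term $b_r\binom{x-c_r}{r}$ gives the desired decomposition of $\varphi$.

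The only delicate point is to ensure that the remainder $\psi$ still satisfies the hypothesis of the induction, i.e., that its top binomial-basis coefficient remains strictly positive after the subtraction. This is exactly where the assumption $a_r > 0$ is used: it lets $c_r$ be chosen large enough to dominate the (possibly negative) integer $a_{r-1}$, forcing $a_{r-1} + c_r a_r$ to be positive. Everything else is routine bookkeeping on the Vandermonde expansion.
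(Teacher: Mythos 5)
Your proof is correct and follows essentially the same inductive approach as the paper: subtract $a_r\binom{x-c_r}{r}$, use the Chu–Vandermonde expansion to see that the new $\binom{x}{r-1}$ coefficient is $a_{r-1}+c_r a_r$, and choose $c_r$ large enough (the paper picks exactly $c_r=\max\{1-a_{r-1},0\}$) to keep it positive for the induction hypothesis. The only difference is cosmetic: you allow any sufficiently large $c_r$ where the paper fixes a specific value.
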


\vspace{-0.5cm}

\begin{proofsketch}
    We inductively prove this claim by using the Chu-Vandermonde identity \cite{Spi16} on the term of highest degree.
    It allows us to replace the highest degree binomial via $\binom{x-c_r}{r} = \sum_{i=0}^r(-1)^{r-i} \binom{r-i+c_r-1}{r-i} \binom{x}{i}$.
    For sufficiently large $c_r \in \IN$ this implies that the leading term of $\varphi(x) - a_r \cdot \binom{x-c_r}{r}$ again is positive and of smaller degree.
\end{proofsketch}

\vspace{-0.2cm}

A full proof of the previous lemma can be found in the appendix.

\begin{lemma}[Integer-valued polynomials]
    \label{lem:polynomial}
    Let $f \in \sharpFA$ and let $\varphi: \IQ \to \IQ$ be an integer-valued polynomial, then $\max(\varphi \circ f, 0) \in \sharpFA$.
\end{lemma}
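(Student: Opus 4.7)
The plan is to decompose the task into two regimes separated by the sign of the leading coefficient of $\varphi$ in the binomial basis, and to use Lemma~\ref{lem:finitedifference} to wash out any finitely many boundary discrepancies.

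First I would use the hypothesis that $\varphi$ is integer-valued to write $\varphi(x) = \sum_{i=0}^r a_i \binom{x}{i}$ with $a_i \in \IZ$, which is the classical characterization recalled in Section~\ref{sec:notation}. If $\varphi$ is the zero polynomial or, more generally, if its leading coefficient $a_r$ is non-positive, then $\varphi(x) \leq 0$ for all sufficiently large $x \in \IN$, so $\max(\varphi(x), 0) = 0$ except at finitely many points. The constant-$0$ function is trivially a functional closure property of $\sharpFA$, and Lemma~\ref{lem:finitedifference} upgrades this to $\max(\varphi, 0)$ being a closure property, which via any $f \in \sharpFA$ gives $\max(\varphi \circ f, 0) \in \sharpFA$.

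The main case is $a_r > 0$. Here I would invoke Lemma~\ref{lem:binbasischange} to obtain $b_0,\ldots,b_r,c_0,\ldots,c_r \in \IN$ with
\[
\varphi(x) \;=\; \sum_{i=0}^r b_i \binom{x - c_i}{i},
\]
and define $\psi(x) := \sum_{i=0}^r b_i \binom{\max(x - c_i,\,0)}{i}$. I would then verify that $\psi$ is a closure property by assembling the pieces: Lemma~\ref{lem:subtraction} supplies $\max(f - c_i,0) \in \sharpFA$, Lemma~\ref{lem:binom} produces $\binom{\max(f-c_i,0)}{i}$, the scalar factors $b_i \in \IN$ are handled either by iterated use of Lemma~\ref{lem:add} or by Lemma~\ref{lem:mult} against the constant function $b_i$ (which is trivially in $\sharpFA$), and the outer sum is closed under Lemma~\ref{lem:add}. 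For every $x \geq \max_i c_i$ one has $\max(x-c_i,0) = x-c_i$ and hence $\psi(x) = \varphi(x)$; moreover, since $a_r > 0$, the polynomial $\varphi$ is eventually non-negative, so $\max(\varphi(x), 0) = \varphi(x) = \psi(x)$ on a cofinite subset of $\IN$. A second appeal to Lemma~\ref{lem:finitedifference} promotes $\psi$ to $\max(\varphi, 0)$, finishing the argument.

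I do not anticipate any real obstacle: the genuine combinatorial content is already packaged into Lemma~\ref{lem:binbasischange} (rewriting $\varphi$ with nonnegative coefficients in a shifted binomial basis) and into Lemma~\ref{lem:finitedifference} (ignoring finitely many bad inputs where either $\psi$ disagrees with $\varphi$ because a shifted argument has been clamped to $0$, or $\varphi$ itself dips below zero). The outer $\max(\cdot, 0)$, which would otherwise be awkward because $\sharpFA$ cannot represent functions taking negative values, is precisely what licenses the finite-difference reduction and keeps the assembly routine.
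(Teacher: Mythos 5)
Your proof takes essentially the same route as the paper's: reduce to the case of positive leading coefficient (handling the non-positive case by noting $\max(\varphi,0)$ is eventually zero and applying Lemma~\ref{lem:finitedifference}, where the paper instead writes it directly as $\sum_i c_i\cdot\ONE_{f=i}$ — a cosmetic difference), then apply Lemma~\ref{lem:binbasischange} to get nonnegative coefficients in a shifted binomial basis, replace each $x-c_i$ by $\max(x-c_i,0)$, assemble via Lemmas~\ref{lem:add}, \ref{lem:mult}, \ref{lem:subtraction}, \ref{lem:binom}, and finish with Lemma~\ref{lem:finitedifference}. The argument is correct and matches the paper's.
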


\vspace{-0.5cm}

\begin{proof}
    We can assume the leading coefficient of $\varphi$ to be positive.
    Otherwise $\max(\varphi \circ f, 0)$ can be directly written as a finite sum $\sum_{i}c_i \cdot \ONE_{f=i}$ which is in $\sharpFA$ by Lemmas~\ref{lem:add}, \ref{lem:mult} and \ref{lem:const}.
    Write $\varphi$ in the binomial basis as $\varphi(x) = a_0 \cdot \binom{x}{0} + \ldots + a_r \cdot \binom{x}{r}$ with $a_r > 0$.
    Since $\varphi$ is integer-valued, all of the $a_i$ are integers, see \cite[Prop.~4.2.1]{IP22}.
    Using Lemma~\ref{lem:binbasischange} we get a representation $\varphi(x) = \sum_{i=0}^r b_i \cdot \binom{x-c_i}{i}$ with $b_0, \ldots, b_r \in \IN$ and $c_0, \ldots, c_r \in \IN$.
    For $x \geq \max\{c_i \mid 0 \leq i \leq r\} =: N$ we have that $\binom{x-c_i}{i} = \binom{\max(x - c_i, 0)}{i}$ and thus $\psi(x) := \sum_{i=0}^r b_i \cdot \binom{\max(x - c_i, 0)}{i}$ only differs from $x \mapsto \max(\varphi(x), 0)$ on finitely many inputs and is a functional closure property of $\sharpFA$ by Lemmas~\ref{lem:add},~\ref{lem:mult},~\ref{lem:subtraction} and~\ref{lem:binom}.
    Lemma~\ref{lem:finitedifference} then finishes off the claim.
\end{proof}

We now have the tools available to show our claim that every ultimately PORC function is a functional closure property of $\sharpFA$.

\vspace{-0.2cm}

\begin{lemma}
    \label{lem:univclosure:upper_bound}
    Every ultimately PORC function is a functional closure property of $\sharpFA$.
\end{lemma}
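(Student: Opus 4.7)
The plan is to decompose $\varphi$ according to its ultimately PORC data $(p,N,\varphi_0,\ldots,\varphi_{p-1})$ and assemble $\varphi\circ f$ for an arbitrary $f\in\sharpFA$ from the closure properties already established. I would split $\varphi\circ f$ into a ``small input'' correction covering the values $f(w)<N$ plus, for each residue $r\in\{0,\ldots,p-1\}$, a term that handles $f(w)\geq N$ with $f(w)\equiv_p r$.

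For the small-input correction, observe that for each $i\in\{0,\ldots,N-1\}$ the indicator $\ONE_{f=i}$ lies in $\sharpFA$ by Lemma~\ref{lem:const}, and constant functions are trivially in $\sharpFA$, so $\sum_{i=0}^{N-1}\varphi(i)\cdot\ONE_{f=i}\in\sharpFA$ via Lemmas~\ref{lem:add} and~\ref{lem:mult}. For each residue $r$ I would define the auxiliary polynomial $\psi_r(k):=\varphi_r(pk+r)\in\IQ[k]$. Since $\varphi(n)=\varphi_r(n)\in\IN$ for every $n\geq N$ with $n\equiv_p r$, the polynomial $\psi_r$ takes non-negative integer values on every sufficiently large $k\in\IN$. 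A polynomial in $\IQ[k]$ that is integer-valued on all sufficiently large integers is integer-valued on all of $\IZ$, because its coefficients in the binomial basis are determined by finitely many successive differences and are therefore forced to be integers. Hence $\psi_r$ is an integer-valued polynomial in the sense of Lemma~\ref{lem:polynomial}.

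Now set $g:=\lfloor f/p\rfloor\in\sharpFA$ (Lemma~\ref{lem:division}). Whenever $f(w)\equiv_p r$ we have $g(w)=(f(w)-r)/p$, so $\psi_r(g(w))=\varphi_r(f(w))=\varphi(f(w))$. By Lemma~\ref{lem:polynomial}, $h_r:=\max(\psi_r\circ g,0)\in\sharpFA$. Multiplying by $\ONE_{f\equiv_p r}\in\sharpFA$ (Lemma~\ref{lem:mod}) and $\ONE_{f\geq N}\in\sharpFA$ (Lemma~\ref{lem:const}) selects exactly the residue-and-threshold class where $h_r$ matches $\varphi\circ f$. Summing the small-input correction and the $p$ residue-class terms via Lemma~\ref{lem:add} produces a function in $\sharpFA$ that equals $\varphi\circ f$ pointwise, completing the proof.

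The main subtlety is the integer-valuedness of $\psi_r$: a priori the constituent $\varphi_r$ may take non-integer rational values on inputs $n$ outside its residue class, so Lemma~\ref{lem:polynomial} does not apply to $\varphi_r$ directly. The substitution $n=pk+r$ is what converts the relevant inputs into a cofinite subset of $\IN$ and thereby lets us conclude that $\psi_r\in\IQ[k]$ is genuinely integer-valued. Once this observation is in place, the remainder is a bookkeeping exercise invoking only the univariate closure properties proved above.
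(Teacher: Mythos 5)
Your proof is correct, and its skeleton is the same as the paper's: the identity $\varphi\circ f=\sum_{i<N}\varphi(i)\ONE_{f=i}+\ONE_{f\geq N}\sum_{r<p}\ONE_{f\equiv_p r}\cdot(\cdots)$, with Lemmas~\ref{lem:add}, \ref{lem:mult}, \ref{lem:const} and \ref{lem:mod} doing the bookkeeping. The one genuinely different step is how you cope with the fact that a constituent $\varphi_r$ need not be integer-valued. The paper clears denominators: it multiplies $\varphi_r$ by the common denominator $\alpha_r$ of its coefficients to get an integer-coefficient polynomial, applies Lemma~\ref{lem:polynomial} to $f$ directly, and then recovers $\lfloor\max(\varphi_r\circ f,0)\rfloor$ via floor division by $\alpha_r$ (Lemma~\ref{lem:division}). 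You instead reparametrize the residue class, setting $\psi_r(k)=\varphi_r(pk+r)$ and composing with $g=\lfloor f/p\rfloor$; the substitution forces $\psi_r$ to be integer-valued (a polynomial that is integer-valued on a cofinite subset of $\IN$ has integer binomial-basis coefficients, as you note), so Lemma~\ref{lem:polynomial} applies to $\psi_r\circ g$ without any post hoc division. Both routes use exactly the same toolbox; yours trades the final division step for a change of variables, and arguably isolates the integrality issue more transparently, while the paper's version keeps each constituent $\varphi_r$ evaluated at $f$ itself rather than at $\lfloor f/p\rfloor$. Either is a complete proof.
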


\vspace{-0.5cm}

\begin{proof}
    Let $\varphi: \IN \to \IN$ be an ultimately PORC function with period $p$ comprised of the polynomial constituents $\varphi_0, \ldots, \varphi_{p-1}: \IN \to \IQ$ and $N \in \IN$, s.t.\ for all $n \geq N$ we have
    $\varphi(n) = \varphi_{n\rem p}(n)$.
    Additionally let $f \in \sharpFA$.
    We can write
    
    \vspace{-0.7cm}
    \[
    \textstyle
        \varphi \circ f = \sum_{i=0}^{N-1}\ONE_{f=i} \cdot \varphi(i) + \ONE_{f \geq N} \cdot (\sum_{i=0}^{p-1}\ONE_{f \equiv_p i} \cdot \lfloor \max(\varphi_i \circ f, 0) \rfloor)\,.
    \]
    \vspace{-0.7cm}
    
    Combining Lemmas~\ref{lem:add},~\ref{lem:mult},~\ref{lem:const} and~\ref{lem:mod} this shows $\varphi \circ f \in \sharpFA$, if we can show $\lfloor \max(\varphi_i \circ f, 0) \rfloor \in \sharpFA$ for all $i \in \{0, \ldots, p-1\}$.
    To show this let $\alpha_i$ be the common denominator of the coefficients of $\varphi_i$.
    Then $\alpha_i \cdot \varphi_i$ is a polynomial with integer coefficients, so it in particular is an integer-valued polynomial and by Lemma~\ref{lem:polynomial} we have that $\max(\alpha_i \cdot \varphi_i \circ f, 0) \in \sharpFA$.
    Combining this with Lemma~\ref{lem:division} we get that $\left\lfloor \frac{\max(\alpha_i \cdot \varphi_i \circ f, 0)}{\alpha_i} \right\rfloor = \lfloor \max(\varphi_i \circ f, 0) \rfloor \in \sharpFA$.
\end{proof}

The remainder of this section is dedicated to showing that no other functional closure properties of $\sharpFA$ exist.
This will make use of the following well known algebraic interpretation of NFAs:

\vspace{-0.2cm}

\begin{lemma}[see \cite{Sch61}]
    \label{lem:algebraic}
    If $M = (Q, \Sigma, \wt, \inp, \outp)$ is an NFA, then there are matrices $A_\sigma \in \IN^{|Q| \times |Q|}$ for each symbol $\sigma \in \Sigma$ and vectors $a, b \in \IN^{|Q|}$, s.t. $M$ computes $a^T \cdot \left(\prod_{j=1}^{|w|}A_{w_j}\right) \cdot b$ for all $w \in \Sigma^\star$.
\end{lemma}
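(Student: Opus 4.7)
The plan is to read the matrices and vectors directly off the automaton and then verify the identity by expanding the matrix product. After fixing an arbitrary enumeration $Q = \{q_1, \ldots, q_{|Q|}\}$, I would set $(A_\sigma)_{i,j} := \wt(q_i, \sigma, q_j)$, $a_i := \inp(q_i)$, and $b_j := \outp(q_j)$. All entries lie in $\IN$ because $\wt$, $\inp$, and $\outp$ do; this is the only place where the choice of semiring really matters.

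Next I would unfold the product $\prod_{k=1}^{n} A_{w_k}$ for $w = w_1 \cdots w_n$. By the definition of matrix multiplication, its $(i_0, i_n)$ entry equals $\sum_{i_1, \ldots, i_{n-1}} \prod_{k=1}^{n} \wt(q_{i_{k-1}}, w_k, q_{i_k})$, where the sum ranges over all tuples in $[|Q|]^{n-1}$. Each such tuple corresponds bijectively to a computation $P = q_{i_0} q_{i_1} \cdots q_{i_n}$ of $M$ on $w$ that starts at $q_{i_0}$ and ends at $q_{i_n}$, and the summand equals the product of edge weights along $P$. Multiplying on the left by $a^T$ and on the right by $b$ inserts the two missing factors $\inp(q_{i_0})$ and $\outp(q_{i_n})$ into every summand and sums over $i_0$ and $i_n$, producing $\sum_{P} \Cweight(P) = f(w)$.

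I would present this unfolding as a short induction on $|w|$: the base case $|w|=0$ yields $a^T b = \sum_{q \in Q} \inp(q)\outp(q) = f(\eps)$, matching the length-zero computations (single states $q$ with empty product of edge weights), and the inductive step uses associativity of matrix multiplication together with the decomposition of any length-$n$ computation into a length-$(n{-}1)$ computation followed by one transition.

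I do not foresee a genuine obstacle; the only care needed is bookkeeping — consistency about which index labels the source state versus the target, and matching the empty-product convention to the $|w|=0$ case. Because $\IN$ is a semiring with no additive inverses, every weighted computation contributes exactly one monomial in the expansion and no cancellation occurs, so the matrix identity coincides term by term with the definition of $f$ from Definition~\ref{def:NFA}.
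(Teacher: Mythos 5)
Your construction is exactly the one in the paper: index $A_\sigma$, $a$, $b$ by states, set $(A_\sigma)_{q,q'} = \wt(q,\sigma,q')$, $a_q = \inp(q)$, $b_q = \outp(q)$, and read off the identity by unfolding the matrix product over computations. The paper states this tersely and leaves the unfolding to the reader; your expansion (and the optional induction on $|w|$) is a correct elaboration of the same argument, not a different route.
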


\vspace{-0.5cm}

\begin{proof}
    We index $A_\sigma$, $a$ and $b$ using states $q, q' \in Q$.
    Choose $(A_\sigma)_{q, q'} = \wt(q, \sigma, q')$, $a_q = \inp(q)$ and $b_q = \outp(q)$.
    It is now easy to see that $M$ computes exactly $a^T \cdot \left(\prod_{j=1}^{|w|}A_{w_j}\right) \cdot b$.
\end{proof}

\vspace{-0.2cm}

When restricting to a unary alphabet $\Sigma = \{\sigma\}$, this degenerates the computed function to $a^T \cdot A_{\sigma}^{|w|} \cdot b$.
In order to analyze the behaviour of these functions we first analyze the behaviour of the matrix power as a function in $|w|$ in the next two lemmas.
Their proofs can be found in the appendix.

\begin{restatable}{lemma}{restatediagonalstructure}
    \label{lem:diagonalstructure}
    Let $A \in \IN^{k \times k}$.
    Then any diagonal entry of $A^n$ is a function $f: \IN \to \IN$ with one of the following properties:
    \begin{enumerate}
        \item $f(n) = 0$ for all $n \in \IN \setminus \{0\}$ and $f(0) = 1$.
        \item There is a $p \in \IN \setminus \{0\}$, such that for all $n \in \IN$ we have $f(n) = \ONE_{n \equiv_p 0}$.
        \item There is a $p \in \IN \setminus \{0\}$ and a function $g \in 2^{\Theta(n)}$, such that for all $n \in \IN$ we have $f(n) = \ONE_{n \equiv_p 0} \cdot g(n)$.
    \end{enumerate}
\end{restatable}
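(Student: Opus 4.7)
The plan is to interpret $(A^n)_{i,i}$ as the total weight of closed walks of length $n$ at a fixed vertex $i$ in the weighted digraph on $[k]$ whose edge $j \to \ell$ carries weight $A_{j,\ell}$. Since any such walk stays inside the strongly connected component $V$ of $i$, I can replace $A$ by the submatrix $A'$ obtained by restricting to $V$ without changing the diagonal entry $(A^n)_{i,i} = (A'^n)_{i,i}$. If $V = \{i\}$ and $A_{i,i} = 0$, the only closed walk at $i$ is the empty walk, which directly yields Case~1. Otherwise $A'$ is irreducible and non-zero.

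Let $p$ denote the period of $A'$ (the $\gcd$ of lengths of closed walks through $i$) and let $\rho$ denote its Perron-Frobenius spectral radius. Since every closed walk at $i$ has length divisible by $p$, we have $(A'^n)_{i,i} = 0$ whenever $p \nmid n$, which accounts for the factor $\ONE_{n \equiv_p 0}$ in both remaining cases. The dichotomy between Case~2 and Case~3 is then governed by whether $\rho = 1$ or $\rho > 1$.

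For $\rho = 1$, I plan to show that $A'$ is the adjacency matrix of a single directed cycle of length $p$ with all weights equal to $1$. Take a right Perron eigenvector $y > 0$ with $A' y = y$ and pick $i^\star \in V$ minimising $y$. The identity $y_{i^\star} = \sum_j A'_{i^\star, j} y_j \geq y_{i^\star} \sum_j A'_{i^\star, j}$ forces $\sum_j A'_{i^\star, j} \leq 1$, and absence of zero rows (from irreducibility) pins the row sum to exactly $1$; equality also forces every $j$ with $A'_{i^\star, j} > 0$ to attain the minimum of $y$. Propagating along edges via strong connectivity makes $y$ constant on $V$, so every row of $A'$ contains exactly one $1$, making $A'$ a permutation matrix, which by irreducibility must encode a single $p$-cycle. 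Consequently $(A'^n)_{i,i} = \ONE_{n \equiv_p 0}$, i.e.\ Case~2.

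For $\rho > 1$, the Perron-Frobenius theorem for periodic non-negative irreducible matrices provides a constant $c > 0$ with $(A'^n)_{i,i} \geq c \cdot \rho^n$ for all sufficiently large $n$ with $p \mid n$, while the trivial bound $(A^n)_{i,i} \leq (k \cdot \max_{j,\ell} A_{j,\ell})^n$ serves as an upper bound; since $\rho > 1$, both sides are $2^{\Theta(n)}$. Defining $g(n) := (A^n)_{i,i}$ whenever $p \mid n$ and $g(n) := 2^n$ otherwise yields a function $g \in 2^{\Theta(n)}$ with $f(n) = \ONE_{n \equiv_p 0} \cdot g(n)$, giving Case~3. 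The main obstacle is the structural step in the $\rho = 1$ case (that a non-negative integer irreducible matrix of spectral radius $1$ must be a permutation matrix); the remainder is a routine application of Perron-Frobenius theory together with the fact that the set of cycle lengths through $i$ generates, up to finitely many missing multiples, the subsemigroup $p\IN$.
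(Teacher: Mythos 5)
Your proof is correct, but it takes a genuinely different route from the paper's. The paper argues entirely combinatorially on the values of $f(n)=(A^n)_{vv}$ themselves: it splits into the cases ``$f$ is identically $0$ for $n>0$'', ``$f$ takes only values in $\{0,1\}$ and is sometimes $1$'', and ``$f(\ell)\geq 2$ for some $\ell$'', and in the last case it takes $p=\gcd\{n : f(n)\geq 1\}$, uses B\'ezout's lemma to write all large multiples of $p$ as non-negative combinations of a finite generating set of closed-walk lengths, and explicitly manufactures $2^{\Omega(n)}$ distinct closed walks by repeating the two walks of length $\ell$. You instead restrict to the strongly connected component of $i$ (correctly observing this preserves the diagonal entry), and split on the Perron--Frobenius spectral radius $\rho$ of the irreducible block: your $\rho=1$ analysis via the minimizing coordinate of a positive eigenvector, showing the block is a single $p$-cycle permutation matrix, is a clean and complete structural argument, and your $\rho>1$ case correctly imports the standard asymptotics $(A'^{pm})_{ii}\geq c\,\rho^{pm}$ for the primitive diagonal block of $A'^p$. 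The two case splits in fact coincide (boundedness of the diagonal entries along the SCC is equivalent to $\rho=1$), but the proofs differ in what they require: the paper's argument is elementary and self-contained, needing only walk counting and numerical-semigroup facts, whereas yours leans on nontrivial Perron--Frobenius theory for periodic irreducible matrices but in exchange yields sharper information (an exact description of Case~2 as a single cycle, and a lower bound of order $\rho^n$ rather than just $2^{\Omega(n)}$). Both are valid; just make sure, if you write this up in full, to state precisely which version of the Perron--Frobenius asymptotics you invoke for imprimitive matrices, since that is the one step you treat as a black box.
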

These naturally correspond to vertices $v$ in the multigraph defined by the adjacency matrix $A$ with
\begin{enumerate}
    \item no paths from $v$ to $v$.
    \item exactly one path from $v$ to $v$ of length $p$.
    \item multiple walks from $v$ to $v$ where the lengths of all the walks from $v$ to $v$ have gcd $p$.
\end{enumerate}
We can then lift this result to all entries of $A^n$.

\vspace{-0.2cm}

\begin{restatable}{lemma}{restatepowerstructure}
    \label{lem:powerstructure}
    If $A \in \IN^{k \times k}$,
    then each entry of $A^n$ is an ultimately almost PORC function.
\end{restatable}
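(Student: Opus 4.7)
The plan is to interpret $(A^n)_{ij}$ as a weighted count of length-$n$ walks from $i$ to $j$ in the multigraph $G$ with adjacency matrix $A$, and exploit the decomposition of $G$ into its strongly connected components (SCCs). Since the condensation of $G$ is a DAG, every walk from $i$ to $j$ visits a uniquely determined sequence $C_0 \to C_1 \to \cdots \to C_r$ of distinct SCCs with $i \in C_0$ and $j \in C_r$. There are only finitely many such SCC-sequences, and for each one only finitely many choices for the ``exit'' vertex of $C_t$, the ``entry'' vertex of $C_{t+1}$, and the specific inter-SCC transition edges used. I would therefore reduce $(A^n)_{ij}$ to a finite sum of contributions, each of convolution form $\sum_{n_0+\cdots+n_r = n-r} \prod_{t=0}^{r} W_t(n_t)$, where $W_t(n_t)$ counts walks of length $n_t$ between the designated entry and exit vertices inside the SCC $C_t$, and the $-r$ absorbs the $r$ fixed inter-SCC edges.

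Next I would handle the in-SCC walk counts $W_t$. In a strongly connected multigraph of period $p$, all walks between any two fixed vertices $u, v$ have length in a single residue class $d_{uv} \bmod p$. On that class, the count can be related to a diagonal entry of a nearby power of the SCC's adjacency matrix by concatenating a fixed short path from $v$ back to $u$ inside the SCC, so Lemma~\ref{lem:diagonalstructure} applies and yields a function that is either eventually zero, polynomial in $n$, or in $2^{\Theta(n)}$. Thus each $W_t$ is already an ultimately almost PORC function supported on one residue class modulo $p_t$.

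Finally I would show that such convolutions preserve the ultimately almost PORC template. The residue restrictions on the $n_t$ combine to restrict $n$ to finitely many residue classes modulo $P := \mathrm{lcm}(p_0,\ldots,p_r)$; within each such class the convolution of finitely many polynomial-or-exponential functions is evaluated by standard summation identities and is again polynomial (when all factors are polynomial) or in $\Theta(\mu^n)$ for the dominant base $\mu$ among the factors. Summing the finitely many SCC-sequence contributions and grouping constituents by their residue class modulo the overall least common multiple of all arising periods then yields an ultimately almost PORC function. The main obstacle will be this convolution-and-summation analysis: one must verify that when contributions with possibly different exponential rates are combined in a fixed residue class, the dominating term still has the required polynomial-or-exponential form. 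This is safeguarded by the non-negativity of $A$, which prevents any cancellation of the leading exponential contribution among non-negative summands.
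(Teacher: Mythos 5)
Your proof is correct in outline, but it takes a genuinely different decomposition from the paper's. The paper splits the walks from $u$ to $v$ by their underlying \emph{simple path} $v_1\to\cdots\to v_r$ (obtained by contracting cycles), so that each walk is uniquely a simple path with a closed walk inserted at each $v_i$ that avoids $v_1,\ldots,v_{i-1}$; this yields a finite sum of convolutions $\sum_{\ell_1+\cdots+\ell_r=n-r+1}\prod_i (A_{S_i}^{\ell_i})_{v_iv_i}$ whose factors are literally diagonal entries of principal submatrices, so Lemma~\ref{lem:diagonalstructure} applies verbatim. You instead split by the sequence of strongly connected components visited, which is equally valid (the condensation being a DAG guarantees the segments are contiguous and intra-SCC), but your convolution factors are \emph{off-diagonal} entries $(B_t^{n_t})_{u_tv_t}$ of the SCC submatrices, so you need the extra concatenation step (prepend/append fixed walks to relate $(B^n)_{uv}$ to $(B^{n\pm\ell})_{uu}$ and transfer the trichotomy of Lemma~\ref{lem:diagonalstructure} through these two-sided bounds). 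That step is sound but should be spelled out: the bounds only give inequalities, and one must also use the single-residue-class support to rule out nonzero values off the class $d_{uv}\bmod p$. After that, both proofs converge on the same convolution analysis: when all factors are bounded the count is the number of lattice points in a dilated rational polytope (the paper invokes Ehrhart quasi-polynomials; your ``standard summation identities'' is the same thing and should be made explicit), and when some factor is exponential, a Bézout/Chicken-McNugget argument plus non-negativity gives a $2^{\Omega(n)}$ lower bound on the supported residue classes, matched by the trivial $2^{O(n)}$ upper bound. Your SCC route buys a cleaner periodicity structure (one period per component) at the cost of the off-diagonal reduction; the paper's route avoids that reduction at the cost of a slightly more ad hoc walk decomposition.
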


\vspace{-0.5cm}

\begin{theorem}[Classification of univariate functional closure properties of $\sharpFA$]
    \label{thm:univclosure}
    A function $\varphi: \IN \to \IN$ is a functional closure property of $\sharpFA$ iff $\varphi$ is an ultimately PORC function.
    This even holds when $\sharpFA$ is restricted to unary languages.
\end{theorem}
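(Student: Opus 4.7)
My plan is to prove both directions of the equivalence and arrange the argument so that the unary version follows as well. The \emph{if} direction is already established by Lemma~\ref{lem:univclosure:upper_bound}: inspecting its proof, every invoked construction (addition, multiplication, subtraction of constants, modular comparisons, integer-valued polynomials, etc.) preserves the underlying alphabet, so the same argument shows that applying an ultimately PORC $\varphi$ to a unary $\sharpFA$ function again yields a unary $\sharpFA$ function. The substantive content is the \emph{only if} direction, which I will prove under the stronger assumption that $\varphi$ is a closure property of unary $\sharpFA$; this stronger statement clearly implies the unrestricted one.

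First I will recover the ``almost PORC'' structure of $\varphi$ by feeding it the length function. On the unary alphabet $\Sigma = \{\sigma\}$, the function $f(w) = |w|$ lies in $\sharpFA$: the presentation $A = \left(\begin{smallmatrix}1 & 1 \\ 0 & 1\end{smallmatrix}\right)$, $a = (1,0)^T$, $b = (0,1)^T$ of Lemma~\ref{lem:algebraic} computes $a^T A^n b = n$. By the closure hypothesis $\varphi \circ f \in \sharpFA$, and Lemma~\ref{lem:powerstructure} then implies that $\varphi$ itself is an ultimately almost PORC function with some quasiperiod $p$, offset $N$, and constituents $\varphi_0, \ldots, \varphi_{p-1}$.

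The main obstacle is to upgrade ``almost PORC'' to ``PORC'', i.e., to exclude constituents of exponential type. My plan is to derive a contradiction by feeding $\varphi$ an exponentially growing input that isolates a fixed residue modulo $p$. Suppose, toward contradiction, that some constituent $\varphi_r$ lies in $2^{\Theta(n)}$, so that $\varphi_r(m) \geq c^{m}$ for some $c > 1$ and all sufficiently large $m$. Consider $g(w) = p \cdot 2^{|w|} + r$. The function $p \cdot 2^{|w|}$ is computed by the one-state NFA with self-loop weight $2$, initial weight $p$, and output weight $1$; the constant function $r$ is in $\sharpFA$ (single state with initial weight $r$, self-loop and output weight $1$); and Lemma~\ref{lem:add} then yields $g \in \sharpFA$. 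For all sufficiently long $w$ we have $g(w) \geq N$ and $g(w) \equiv_p r$, so $(\varphi \circ g)(w) = \varphi_r(g(w)) \geq c^{p \cdot 2^{|w|} + r}$, which grows doubly exponentially in $|w|$. By the closure hypothesis $\varphi \circ g \in \sharpFA$, but Lemma~\ref{lem:powerstructure} forces every unary $\sharpFA$ function to be bounded by $2^{O(|w|)}$, a contradiction. Hence every constituent is a polynomial and $\varphi$ is ultimately PORC, which completes the plan.
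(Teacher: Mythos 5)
Your proposal is correct and follows essentially the same route as the paper: both directions use Lemma~\ref{lem:univclosure:upper_bound} for sufficiency, and for necessity both feed the identity/length function into $\varphi$, invoke Lemma~\ref{lem:algebraic} and Lemma~\ref{lem:powerstructure} to get the ultimately almost PORC structure, and then rule out exponential constituents by composing with a function of the form $p\cdot 2^{|w|}+r$ (the paper uses $p\cdot(2^{|w|}+N)+i$ to handle the offset, where you instead restrict to sufficiently long $w$) to produce a doubly exponential growth contradiction. The only cosmetic difference is that you make explicit the observation that the constructions in the sufficiency direction preserve the alphabet, which the paper leaves implicit.
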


\vspace{-0.5cm}

\begin{proof}
    \begin{figure}
        \centering
        \begin{tikzpicture}
            \node[start] (S) {$S$};
            \node[accept, right of=S] (A) {$A$};

            \path
                (S) edge [loop above] node[left] {$1$ \ } (S)
                (A) edge [loop above] node[left] {$1$ \ } (A)

                (S) edge [-stealth] node {$1$} (A)
                ;
        \end{tikzpicture}\hspace{4cm}
        \begin{tikzpicture}
            \node[start, accept] (S) {$S$};

            \path
                (S) edge [loop above] node[left] {$1, 1$ \ } (S)
                ;
        \end{tikzpicture}

        \caption{NFAs computing the functions $1^n \mapsto n$ and $1^n \mapsto 2^n$ respectively. Edges with a multiplicity of $2$ are denoted by listing the edge label twice.}
        \label{fig:thm:univclosure:linear}
    \end{figure}
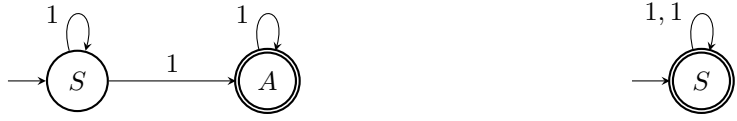
    Lemma~\ref{lem:univclosure:upper_bound} already shows that every ultimately PORC function is a closure property of $\sharpFA$.
    It remains to show that all functional closure properties of $\sharpFA$ are ultimately PORC functions.
    For this let $\varphi: \IN \to \IN$ be a functional closure property of $\sharpFA$.
    The function $f: \{1\}^\star \to \IN$ defined by $f(1^n) = n$ is computed by the left NFA in Figure~\ref{fig:thm:univclosure:linear} and thus in $\sharpFA$.
    Consequently $\varphi \circ f \in \sharpFA$.
    Let $M = (Q, \{1\}, \wt, \inp, \outp)$ be an NFA computing $\varphi \circ f$.
    By Lemma~\ref{lem:algebraic} this NFA induces a transition matrix $A \in \IN^{|Q| \times |Q|}$ and vectors $a, b \in \IN^{|Q|}$, s.t.\ $a^TA^nb = \varphi(f(1^n)) = \varphi(n)$ for all $n \in \IN$.
    Every entry of $A^n$ is an ultimately almost PORC function by Lemma~\ref{lem:powerstructure} and thus $a^TA^nb = \varphi(n)$ is one as well.
    Let $p$ be the quasiperiod of $\varphi$ and let $\varphi_i$ be one of the constituents of $\varphi$ corresponding to some residue class $i \in \{0, \ldots, p-1\}$.
    Assume for the sake of contradiction that $\varphi_i$ grows in $2^{\Theta(n)}$, i.e.\ there is a constant $\gamma \in \IR^+$ and $N \in \IN$, s.t.\ for every $n \geq N$ we have $\varphi_i(n) \geq 2^{\gamma n}$.
    Consider the function $f_{p,i}: \{1\}^\star \to \IN$ defined by $f_{p,i}(1^n) = p \cdot (2^n + N) + i$.
    We claim $f_{p,i}$ is in $\sharpFA$.
    The function $1^n \mapsto 2^n$ is computed by the right NFA in Figure~\ref{fig:thm:univclosure:linear} and thus in $\sharpFA$.
    The remainder of the claim follows by Lemma~\ref{lem:add} and Lemma~\ref{lem:mult}.
    Note that $f_{p,i}(1^n) \equiv_p i$, so $\varphi \circ f_{p,i} = \varphi_i \circ f_{p,i}$ has to be in $\sharpFA$ as well.
    Furthermore $\varphi(f_{p,i}(1^n)) = \varphi_i(f_{p,i}(1^n)) = \varphi_i(p \cdot (2^{n}+ N) +i) \geq 2^{\gamma p \cdot (2^n+N) + \gamma i}$ for all $n \in \IN$ which is larger than any NFA can compute, since NFAs can only compute functions that are at most linearly exponential in the length of the input.
    We conclude that none of the constituents $\varphi_i$ of $\varphi$ can be exponential, so they are instead all polynomials, making $\varphi$ an ultimately PORC function.
\end{proof}

\subsection{Multivariate functional closure properties}

\begin{theorem}[Classification of multivariate functional closure properties of $\sharpFA$]
    \label{thm:multivclosure}
    A function $\varphi: \IN^m \to \IN$ is a functional closure property of $\sharpFA$ iff $\varphi$ can be written as a finite sum of finite products of univariate ultimately PORC functions.
\end{theorem}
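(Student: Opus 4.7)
The plan is to prove both directions. For the \emph{if} direction, if $\varphi$ can be written as a finite sum of finite products of univariate ultimately PORC functions, then each factor is itself a univariate closure property by Theorem~\ref{thm:univclosure}; applying it to any input tuple $(f_1,\ldots,f_m) \in \sharpFA^m$ yields functions in $\sharpFA$, and combining them via Lemmas~\ref{lem:add} and~\ref{lem:mult} keeps the whole expression inside $\sharpFA$.

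For the \emph{only if} direction, suppose $\varphi : \IN^m \to \IN$ is a closure property. The plan is to apply a canonical test input and then exploit the algebraic structure of the resulting NFA. Set $\Sigma = \{\sigma_1,\ldots,\sigma_m\}$ and, for each $i$, let $f_i \in \sharpFA$ count the occurrences of $\sigma_i$ (realized by a simple two-state NFA analogous to the ones in Figure~\ref{fig:thm:univclosure:linear}). Then $h := \varphi(f_1,\ldots,f_m) \in \sharpFA$, and Lemma~\ref{lem:algebraic} supplies matrices $A_1,\ldots,A_m \in \IN^{k \times k}$ and vectors $a,b \in \IN^k$ with $h(w) = a^T A_{w_1}\cdots A_{w_{|w|}} b$. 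Evaluating on $w = \sigma_1^{n_1}\cdots\sigma_m^{n_m}$ gives
\[
    \varphi(n_1,\ldots,n_m) \;=\; a^T A_1^{n_1}\cdots A_m^{n_m} b \;=\; \sum_{j_0,\ldots,j_m} a_{j_0} b_{j_m}\prod_{i=1}^m (A_i^{n_i})_{j_{i-1},j_i}.
\]
By Lemma~\ref{lem:powerstructure}, each factor $(A_i^{n_i})_{j_{i-1},j_i}$ viewed as a function of $n_i$ is non-negative and ultimately almost PORC, so $\varphi$ is already a finite sum of finite products of non-negative ultimately \emph{almost} PORC univariate factors.

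The main obstacle is upgrading \emph{almost} PORC to PORC by ruling out exponential constituents. Fix a common quasiperiod $p$ for all these factors and a residue class $(r_1,\ldots,r_m)$; for $n_i \geq N$ with $n_i \equiv_p r_i$ the above decomposition specializes to $\sum_k c_k \prod_i \eta_{k,i}(n_i)$, where $c_k \geq 0$ and each $\eta_{k,i}$ is either a non-negative polynomial in $n_i$ or lies in $2^{\Theta(n_i)}$. For any index $i$ and any constants $c_l \equiv_p r_l$ with $l \neq i$, substituting the constant functions $c_l$ for $f_l$ turns the univariate slice $n_i \mapsto \varphi(\ldots)$ into a closure property, and Theorem~\ref{thm:univclosure} then implies its constituent on residue class $r_i$ is a polynomial in $n_i$. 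Splitting the sum by whether $\eta_{k,i}$ is polynomial or exponential, the exponential subsum must itself be a polynomial in $n_i$; but each of its terms $c_k \prod_{l\neq i}\eta_{k,l}(c_l) \cdot \eta_{k,i}(n_i)$ is non-negative and $\eta_{k,i}$ dominates every polynomial, so a sum of non-negative exponentials can equal a polynomial only if every coefficient vanishes: $c_k \prod_{l\neq i}\eta_{k,l}(c_l) = 0$ for each such $k$. Varying the $c_l$'s over the relevant residue classes and using that a non-zero polynomial or exponential has at most finitely many zeros, this forces either $c_k = 0$ or $\eta_{k,l} \equiv 0$ for some $l$, making the entire $k$-th term vanish on the residue class. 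Iterating the argument over all $i$ leaves only summands whose every factor is a polynomial.

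After this step, on each residue class $(r_1,\ldots,r_m)$ and for all sufficiently large $n_i$, $\varphi$ equals a multivariate polynomial in $n_1,\ldots,n_m$. Expanding such a polynomial in the monomial basis and multiplying by the univariate ultimately PORC functions $\ONE_{n_i \geq N}\cdot\ONE_{n_i \equiv_p r_i}$ expresses the asymptotic part as a sum of products of univariate ultimately PORC factors. The finitely many remaining boundary values, where some $n_i < N$, are corrected by adding terms of the form $c\cdot\prod_i \ONE_{n_i = c_i}$, each again a product of univariate ultimately PORC functions. Summing all contributions yields the required decomposition of $\varphi$.
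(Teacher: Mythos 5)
Your \emph{if} direction and the first part of the \emph{only if} direction (counting functions, Lemma~\ref{lem:algebraic}, restriction to $\sigma_1^{n_1}\cdots\sigma_m^{n_m}$, Lemma~\ref{lem:powerstructure}) match the paper. Your elimination of the exponential constituents is a genuinely different and valid route: you substitute constants for all but one coordinate, invoke Theorem~\ref{thm:univclosure} on the resulting univariate slice, and use non-negativity to force the exponential coefficients to vanish; the paper instead re-runs the doubly-exponential contradiction directly in the multivariate setting (constant functions on $m-1$ coordinates and $f_i'(w)=p_i(2^{|w|}+\max(N_i,N))+j$ on the remaining one), which buys it a global statement: the exponential constituent of the factor function itself can be replaced by $0$ without changing the product \emph{anywhere} on $\IN^m$. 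That global replacement is what lets the paper finish immediately, since the modified factors are still $\IN$-valued ultimately PORC functions and the decomposition $\sum_k a_{j_0}b_{j_m}\prod_i(A_i^{n_i})_{j_{i-1},j_i}$ is already of the required form.

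Your final reassembly step, by contrast, has two genuine gaps. First, expanding the residue-class polynomial in the monomial basis produces coefficients that are in general negative and non-integral (e.g.\ $\binom{n_1}{2}\binom{n_2}{2}=\tfrac14 n_1^2n_2^2-\tfrac14 n_1^2n_2-\cdots$), and a term with a negative rational coefficient is not a finite product of univariate ultimately PORC functions, since those must map $\IN\to\IN$; you should instead keep each summand in its natural form $c_k\prod_i\bigl(\ONE_{n_i\ge N,\,n_i\equiv_p r_i}\cdot\eta_{k,i}(n_i)\bigr)$, where each factor is a non-negative-integer-valued ultimately PORC function because $\eta_{k,i}(n_i)$ is a matrix entry on that range. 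Second, for $m\ge 2$ the region where some $n_i<N$ is an infinite union of slabs, not ``finitely many remaining boundary values,'' so it cannot be corrected by finitely many terms $c\cdot\prod_i\ONE_{n_i=c_i}$; handling it requires, e.g., an induction on $m$ in which each slab $n_i=c<N$ is covered by $\ONE_{n_i=c}$ times a decomposition of the $(m-1)$-variate closure property obtained by substituting the constant $c$. Both defects are repairable, but as written the proof does not establish the claimed decomposition.
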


\vspace{-0.5cm}

\begin{proof}
    By Theorem~\ref{thm:univclosure} any univariate ultimately PORC function is a closure property of $\sharpFA$.
    As such any finite sum or finite product of them is also a closure property of $\sharpFA$ by Lemmas~\ref{lem:add} and~\ref{lem:mult}.
    
    It remains to show that all functional closure properties of $\sharpFA$ are of this form.
    For this let $\varphi: \IN^m \to \IN$ be a functional closure property of $\sharpFA$.
    Define the alphabet $\Sigma = \{\sigma_1, \ldots, \sigma_m\}$ and the functions $f_i: \Sigma^\star \to \IN$ where $f_i(w) := \#_i(w)$ is defined as the number of occurences $\#_i(w)$ of the symbol $\sigma_i$ in $w$.
    Applying the closure property to $f_1, \ldots, f_m$ gives that $\varphi \circ (f_1, \ldots, f_m) \in \sharpFA$ and thus is computed by an NFA $M = (Q, \Sigma, \wt, \inp, \outp)$.
    This induces transition matrices $A_\sigma \in \IN^{|Q| \times |Q|}$ for each symbol $\sigma \in \Sigma$ and vectors $a, b \in \IN^{|Q|}$, s.t.\ $a^T \prod_{j=1}^{|w|}A_{w_j}b = \varphi(f_1(w), \ldots, f_m(w))$ for all $w \in \Sigma^\star$.
    Restricting to words of the form $w = \sigma_1^{n_1} \sigma_2^{n_2} \cdots \sigma_m^{n_m}$ for $n_1, \ldots, n_m \in \IN$ gives $a^T \left(\prod_{i=1}^{m}A_{\sigma_i}^{n_i}\right) b = \varphi(n_1, \ldots, n_m)$.
    Using Lemma~\ref{lem:powerstructure} on each of the $A_{\sigma_i}^{n_i}$ we see that every entry of $A_{\sigma_i}^{n_i}$ is an ultimately almost PORC function in $n_i$.
    Consequently, every entry of $\prod_{i=1}^{m}A_{\sigma_i}^{n_i}$ is a finite sum of products of different ultimately almost PORC functions and the same holds for $a^T \left(\prod_{i=1}^{m}A_{\sigma_i}^{n_i}\right) b = \varphi(n_1, \ldots, n_m)$.

    We now look at the individual summands of $\varphi$ and prove that we can rewrite each one as a product of ultimately PORC functions by one-by-one rewriting the exponential constituents.
    For this let $\varphi^{(1)}(n_1) \cdots \varphi^{(m)}(n_m)$ be one of the summands of $\varphi$ where $\varphi^{(1)}, \ldots, \varphi^{(m)}$ are all ultimately almost PORC functions, with periods $p_1, \ldots, p_m$, offsets $N_1, \ldots, N_m$ and constituents $\varphi^{(i)}_0, \ldots, \varphi^{(i)}_{p_i-1}$ for each $i \in [m]$.
    If none of the constituents are exponential we are done.
    Otherwise let $\varphi^{(i)}_j$ be one of the exponential constituents, let $\gamma \in \IR^+$ and let $N \in \IN$, s.t. $\varphi^{(i)}_j(n_i) \geq 2^{\gamma n_i}$ for $n_i \geq N$.
    We claim we can set $\varphi^{(i)}_j(n_i) = 0$ without changing the product $\varphi^{(1)}(n_1) \cdots \varphi^{(m)}(n_m)$ for any $n_1, \ldots, n_m \in \IN$.
    Call the resulting functions $\psi^{(i)}$ and $\psi^{(i)}_j$.
    Assume for the sake of contradiction, that there are some $c_1, \ldots, c_m \in \IN$ where
    \mbox{$\varphi^{(1)}(c_1) \cdots \varphi^{(m)}(c_m) \neq \varphi^{(1)}(c_1) \cdots \varphi^{(i-1)}(c_{i-1})\cdot \psi^{(i)}(c_i) \cdot \varphi^{(i+1)}(c_{i+1}) \cdots \varphi^{(m)}(c_m)$}.
    This implies that $\varphi^{(1)}(c_1) \cdots \varphi^{(i-1)}(c_{i-1}) \cdot \varphi^{(i+1)}(c_{i+1}) \cdots \varphi^{(m)}(c_m) \neq 0$ and $c_i \geq N_i$ as we didn't change any other functions except $\varphi^{(i)}$ for $n_i \geq N_i$.

    Constructing constant functions $f'_k(w) = c_k$ for $k \neq i$ and the function $f'_i(w) = p_i \cdot (2^{|x|} + \max(N_i, N)) + j$ which are all in $\sharpFA$. We see that $\varphi \circ (f'_1, \ldots, f'_m) \in \sharpFA$.
    Note that for any $w \in \Sigma^\star$ we have $f'_i(w) \geq \max(N_i, N)$ and $f'_i(w) \equiv_{p_i} j$ and thus $\varphi^{(i)} \circ f'_i = \psi^{(i)}_j \circ f'_i$.
    Combining all of this we again reach a contradiction to the fact that NFAs can only compute at most linearly exponential functions via

\vspace{-0.7cm}

    \begin{align*}
        \varphi(f'_1(w), \ldots, f'_m(w)) &\geq \varphi^{(1)}(c_1) \cdots \varphi^{(i-1)}(c_{i-1}) \cdot \varphi^{(i)}(f'_i(w)) \cdot \varphi^{(i+1)}(c_{i+1}) \cdots \varphi^{(m)}(c_m)\\
                                  &\geq \varphi^{(i)}(f'_i(w))
                                  \ = \ \varphi^{(i)}_j(f'_i(w))
                                \ = \ \varphi^{(i)}_j(p_i\cdot (2^{|w|} + \max(N_i, N)) + j)\\
                                  &\geq 2^{\gamma p_i\cdot (2^{|w|} + \max(N_i, N)) + \gamma j}\,.
    \end{align*}

\vspace{-0.3cm}

    Note that the first inequality holds due to all summands of $\varphi$ being non-negative.
\end{proof}

\vspace{-0.2cm}

Deciding whether a function $\varphi$ has such a representation may not always be directly visible, however if $\varphi$ is a multivariate polynomial we can be more explicit.
Every integer-valued multivariate polynomial has integer coefficients when represented in the binomial basis (see \cite[Prop.\ 4.2.1]{IP22} for a proof of this fact).
We say a term $a \cdot \binom{x_1}{d_1} \cdots \binom{x_m}{d_m}$ dominates another term $a' \cdot \binom{x_1}{d'_1} \cdots \binom{x_m}{d'_m}$ if $d_i \geq d'_i$ for all $i \in [m]$.
A term is a dominating term of $\varphi$ if it has non-zero coefficient and it is not dominated by any other term with non-zero coefficient.
We can use a similar approach to Lemma~\ref{lem:binbasischange} to rewrite $\varphi$ as a positive integer linear combination of products of shifted binomials (details can be found in the appendix).

\vspace{-0.2cm}

\begin{restatable}{lemma}{restatebinbasischangemult}
    \label{lem:binbasischangemult}
    Let $\varphi(x_1, \ldots, x_m) = \sum_{i=1}^r a_i \cdot \prod_{j=1}^m \binom{x_j}{d_{i,j}}$ with $a_i \in \IZ$ and the coefficients of the dominating terms being positive.
    Then there are $a'_1, \ldots, a'_{r'} \in \IN$ and $c_1, \ldots, c_{r'} \in \IN$ with $\varphi(x_1, \ldots, x_m) = \sum_{i=1}^{r'} a'_i \cdot \prod_{j=1}^m \binom{x_j-c_i}{d_{i,j}}$.
\end{restatable}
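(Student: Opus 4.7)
The plan is to generalize the univariate argument of Lemma~\ref{lem:binbasischange} by Noetherian induction on the lexicographic pair $(D,k)$, where $D$ is the maximum total degree among dominating multidegrees of $\varphi$ and $k$ is the number of dominating multidegrees of total degree $D$. The base case $D = 0$ is trivial: $\varphi$ is a positive integer constant and is already of the desired form with $c_1 = 0$.

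In the inductive step I would select some dominating multidegree $(d_1, \ldots, d_m)$ with total degree $D$ and positive coefficient $a$, and subtract $a \cdot \prod_j \binom{x_j - c}{d_j}$ for a large constant $c \in \IN$ to be chosen. Expanding each factor by the Chu-Vandermonde identity \cite{Spi16} gives
\[
\textstyle\prod_j \binom{x_j - c}{d_j} \ = \ \sum_{(\alpha_1,\ldots,\alpha_m) \leq (d_1,\ldots,d_m)} (-1)^{\sum_j (d_j - \alpha_j)}\prod_j \binom{c + d_j - \alpha_j - 1}{d_j - \alpha_j} \binom{x_j}{\alpha_j}.
\]
Consequently the subtraction (i) zeroes the coefficient of $\prod_j \binom{x_j}{d_j}$, (ii) leaves coefficients at multidegrees not componentwise $\leq (d_1,\ldots,d_m)$ untouched, and (iii) corrects the coefficient at each position $\alpha \leq (d_1,\ldots,d_m)$ by a term of sign $(-1)^{1 + \sum_j (d_j - \alpha_j)}$ and absolute value a polynomial in $c$ of degree $\sum_j (d_j - \alpha_j)$.

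The heart of the argument is to show that for sufficiently large $c$, the resulting polynomial $\varphi'$ still has positive coefficients on all of its dominating multidegrees. Dominating multidegrees of $\varphi'$ split into two families: (a) the old dominating multidegrees of $\varphi$ other than $(d_1,\ldots,d_m)$, whose coefficients are unchanged and hence still positive; and (b) ``distance-$1$'' positions of the form $(d_1,\ldots,d_{j_0}-1,\ldots,d_m)$ that were not majorized by any other old dominating multidegree. At every distance-$1$ position the correction equals $+ac$, so choosing $c$ large enough makes the new coefficient positive regardless of its previous sign. No position $\alpha$ at distance $\geq 2$ from $(d_1,\ldots,d_m)$ can be dominating in $\varphi'$: raising any coordinate of $\alpha$ that is strictly less than the corresponding $d_j$ up to $d_j$ yields a componentwise-larger distance-$1$ position whose coefficient in $\varphi'$ we have just forced to be positive.

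With this established, $\varphi'$ satisfies the hypothesis of the lemma with strictly smaller lex-pair (its maximum degree is at most $D$, and if equal to $D$ then with at most $k - 1$ dominating multidegrees of that degree). By the induction hypothesis, $\varphi'$ admits a representation $\sum_i a'_i \prod_j \binom{x_j - c'_i}{d'_{i,j}}$ with all $a'_i, c'_i \in \IN$; adding back $a \cdot \prod_j \binom{x_j - c}{d_j}$ then yields the required expression for $\varphi$. The main obstacle is the combinatorial verification in the previous paragraph: distance-$\geq 2$ corrections can be negative and polynomially large in $c$, but they are harmless because every such position is dominated by a distance-$1$ position whose coefficient the choice of $c$ has forced to be positive.
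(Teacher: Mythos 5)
Your proof is correct and follows essentially the same route as the paper's: pick a dominating multidegree with positive coefficient, subtract $a\cdot\prod_{j}\binom{x_j-c}{d_j}$ for a sufficiently large shift $c\in\IN$, expand via the Chu--Vandermonde identity so that the deficiency-$1$ coefficients each gain $+ac$, and induct on a well-founded order on the dominating terms. You are in fact somewhat more explicit than the paper about why the (possibly large and negative) corrections at positions of deficiency at least $2$ are harmless — each such position is majorized by a deficiency-$1$ position whose coefficient has been forced positive, so it cannot become dominating — a point the paper's proof leaves implicit.
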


\vspace{-0.2cm}

We can then use a generalization of Lemma~\ref{lem:polynomial} and replace subtractions $x_i - c'$ by $\max(x_i - c', 0)$.
However special care has to be taken for $x_i < c'$, in which case $x_i$ has to be replaced by the corresponding constants first.
This adds the additional condition on $\varphi$.

\vspace{-0.2cm}

\begin{restatable}{lemma}{restatemultipolynomial}
    \label{lem:multipolynomial}
    Let $\varphi: \IN^m \to \IN$ be a multivariate polynomial with rational coefficients, such that whenever $\psi$ is formed from $\varphi$ by replacing any set of variables -- including the empty subset -- by constants from $\IN$, then all dominating terms of $\psi$ have positive coefficients.
    Then $\varphi$ is a functional closure property of $\sharpFA$.
\end{restatable}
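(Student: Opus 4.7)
The proof proceeds by induction on $m$. For $m=0$, $\varphi$ is a nonnegative integer constant, which is trivially in $\sharpFA$; for $m=1$, the hypothesis forces $\varphi(\IN)\subseteq\IN$, so $\max(\varphi\circ f,0)=\varphi\circ f$ and Lemma~\ref{lem:polynomial} applies.

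For the inductive step with $m\geq 2$, apply the hypothesis with the empty substitution: the dominating terms of $\varphi$ have positive coefficients, so Lemma~\ref{lem:binbasischangemult} yields a representation
\[
\varphi(x_1,\ldots,x_m) \;=\; \sum_{i=1}^{r'} a_i' \prod_{j=1}^m \binom{x_j-c_i}{d_{i,j}}
\]
with $a_i',c_i\in\IN$. Set $N:=\max_i c_i$. The plan is to split the input space according to which $f_j(w)$ are at least $N$ and which are smaller. For each $T\subseteq[m]$ and each $v:T\to\{0,\ldots,N-1\}$, let $\ONE_{T,v}$ denote the indicator of $\{w : f_j(w)=v_j\text{ for }j\in T\text{ and }f_j(w)\geq N\text{ for }j\notin T\}$, which lies in $\sharpFA$ by Lemmas~\ref{lem:const} and~\ref{lem:mult}. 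Let $\varphi_{T,v}$ be the polynomial in the variables $\{x_j:j\notin T\}$ obtained by substituting $x_j=v_j$ for $j\in T$. Then
\[
\varphi\circ(f_1,\ldots,f_m) \;=\; \sum_{T,\,v}\ONE_{T,v}\cdot\bigl(\varphi_{T,v}\circ(f_j)_{j\notin T}\bigr).
\]
For $T=\emptyset$, on the region $\{\forall j:\, f_j\geq N\}$ one has $\max(f_j-c_i,0)=f_j-c_i$ for every $i$, so the shifted-binomial representation expresses $\varphi$ as a sum of products of binomials of $\sharpFA$-computable quantities, placing it in $\sharpFA$ by Lemmas~\ref{lem:add},~\ref{lem:mult},~\ref{lem:subtraction}, and~\ref{lem:binom}. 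For $T\neq\emptyset$, $\varphi_{T,v}$ has strictly fewer variables, and since every further constant substitution of $\varphi_{T,v}$ is itself a partial substitution of $\varphi$, the polynomial $\varphi_{T,v}$ inherits the hypothesis of the lemma; by induction it is a functional closure property of $\sharpFA$. Combining everything with Lemmas~\ref{lem:add} and~\ref{lem:mult} gives the claim.

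The main obstacle is the mismatch between the shifted-binomial identity for $\varphi$ and its na\"ive $\sharpFA$-computable surrogate obtained by replacing each $x_j-c_i$ with $\max(x_j-c_i,0)$: these two expressions disagree precisely when some $f_j(w)<c_i$. The case decomposition above is designed to isolate the ``all-large'' regime, where the two agree and Lemma~\ref{lem:binbasischangemult} does the work, and to defer every other regime to the inductive hypothesis in fewer variables. The crucial feature of the lemma's hypothesis, namely that positivity of dominating terms is preserved under \emph{arbitrary} partial substitutions, is exactly what ensures that each sub-polynomial $\varphi_{T,v}$ still satisfies the hypothesis, so the induction goes through.
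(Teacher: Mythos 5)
Your proposal is correct and follows essentially the same route as the paper: apply Lemma~\ref{lem:binbasischangemult} to get a nonnegative shifted-binomial representation, split the input space according to which $f_j$ fall below the maximal shift, handle the all-large regime by replacing $x_j-c_i$ with $\max(x_j-c_i,0)$, and defer every other regime to the inductive hypothesis on the substituted polynomials, which inherit the dominating-term condition. Your treatment of the base case $m=1$ (via Lemma~\ref{lem:polynomial} and the observation that $\varphi\geq 0$ on $\IN$) is in fact slightly more careful than the paper's.
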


\vspace{-0.5cm}

\begin{restatable}{lemma}{restatemultivarpolyn}
    \label{lem:multivarpolyn}
    A multivariate polynomial $\varphi: \IN^m \to \IN$ with rational coefficients is a functional closure property of $\sharpFA$ iff for every $\psi$ that can be formed from $\varphi$ by replacing any subset of variables -- including the empty set -- by constants from $\IN$, then all dominating terms of $\psi$ have positive coefficients.
\end{restatable}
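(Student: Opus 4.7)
The $\Leftarrow$ direction is Lemma~\ref{lem:multipolynomial}. For the $\Rightarrow$ direction, suppose $\varphi$ is a closure property of $\sharpFA$. Any polynomial $\psi$ obtained from $\varphi$ by substituting constants from $\IN$ for some subset of variables is also a closure property: we can feed the corresponding constant functions (which lie in $\sharpFA$) into the substituted variables and arbitrary $\sharpFA$ functions into the remaining ones, and the composition lies in $\sharpFA$ by assumption. Hence it suffices to show that every polynomial closure property of $\sharpFA$ has all of its binomial-basis dominating terms with positive coefficients, so in what follows we prove this statement for a generic polynomial closure property $\varphi\colon\IN^m\to\IN$.

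Applying the closure property to the occurrence-counting functions $f_i(w)=\#_i(w)$ on the alphabet $\{\sigma_1,\ldots,\sigma_m\}$ places $\varphi\circ(f_1,\ldots,f_m)$ in $\sharpFA$. Lemma~\ref{lem:algebraic}, combined with the restriction to words of the form $\sigma_1^{n_1}\cdots\sigma_m^{n_m}$, yields
\[
\varphi(n_1,\ldots,n_m) \;=\; \sum_I c_I \prod_{j=1}^{m} g_{I,j}(n_j),
\]
where each $c_I\in\IN$ and each $g_{I,j}$ is a non-negative function in $n_j$ that is ultimately almost PORC by Lemma~\ref{lem:powerstructure}. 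Arguing exactly as in the proof of Theorem~\ref{thm:multivclosure}, the exponential constituents can be set to zero without changing the sum, so each $g_{I,j}$ may be assumed ultimately PORC. Fix any residue tuple $(r_1,\ldots,r_m)$ and let $h_{I,j}$ be the polynomial that $g_{I,j}$ agrees with on residue class $r_j$ for sufficiently large $n_j$. Non-negativity of $g_{I,j}$ on an infinite set forces the leading coefficient $\ell_{I,j}$ of $h_{I,j}$ (when $h_{I,j}\not\equiv 0$) to be strictly positive. Both sides of $\varphi(x_1,\ldots,x_m)=\sum_I c_I\prod_j h_{I,j}(x_j)$ then agree on a product of infinite arithmetic progressions, a Zariski-dense subset of $\IC^m$, so this identity holds as polynomials; after discarding zero summands we may further assume $c_I>0$ and $h_{I,j}\not\equiv 0$ for all $I,j$.

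Let $t(I)=(\deg h_{I,1},\ldots,\deg h_{I,m})$ and let $T_{\max}$ denote the coordinate-wise maximal elements of $\{t(I):I\}$. The contribution of summand $I$ to the binomial-basis coefficient of $\prod_j\binom{x_j}{t_j}$ vanishes unless $t\leq t(I)$ coordinate-wise, and equals $c_I\prod_j\ell_{I,j}\cdot t_j!$ when $t=t(I)$, since $\ell_{I,j}\cdot t_j!$ is the leading binomial coefficient of $h_{I,j}$. Hence for each $t\in T_{\max}$ the total coefficient of $\prod_j\binom{x_j}{t_j}$ in $\varphi$ is the strictly positive quantity $\sum_{I:\,t(I)=t} c_I\prod_j \ell_{I,j}\cdot t_j!$, so every $t\in T_{\max}$ lies in the binomial support of $\varphi$ with positive coefficient. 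Now suppose for contradiction that $D$ is a dominating term of $\varphi$ with negative coefficient. The only summands contributing to the coefficient at $D$ satisfy $t(I)\geq D$ coordinate-wise; if any such $t(I)$ strictly dominates $D$, choose $t^*\in T_{\max}$ with $t^*\geq t(I)>D$, which places $t^*$ in the binomial support of $\varphi$ strictly dominating $D$, contradicting the hypothesis that $D$ is dominating. Hence every contributing summand satisfies $t(I)=D$, and the coefficient at $D$ equals the strictly positive sum $\sum_{I:\,t(I)=D} c_I\prod_j \ell_{I,j}\cdot D_j!$, contradicting negativity.

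The main obstacle is bridging from the ultimately PORC decomposition on the discrete set $\IN^m$ to a genuine polynomial identity on $\IC^m$: this requires restricting to a single residue class so that each almost-PORC constituent becomes a true polynomial with provably positive leading coefficient, and then using a Zariski-density argument to promote the residue-class identity into an equality of polynomials amenable to coefficient-wise analysis of dominating terms.
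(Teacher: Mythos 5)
Your proof is correct and follows essentially the same route as the paper: obtain the decomposition into sums of products of (ultimately almost) PORC functions from the automaton, kill the exponential constituents, restrict to a residue grid and interpolate to a genuine polynomial identity, and read off positivity of the dominating binomial-basis coefficients from the positive leading coefficients of the constituents. The only cosmetic differences are that you discharge the constant-substitution condition upfront (by observing each substituted $\psi$ is itself a closure property) rather than at the end, and your $T_{\max}$ analysis spells out the ``no cancellation at dominating terms'' step that the paper states more tersely.
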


\vspace{-0.5cm}

\begin{proofsketch}
    Lemma~\ref{lem:multipolynomial} already proves that all multivariate polynomials of this form are functional closure properties of $\sharpFA$.
    Now let $\varphi: \IN^m \to \IN$ be a multivariate polynomial with rational coefficient and a functional closure property of $\sharpFA$.
    By Theorem~\ref{thm:multivclosure} $\varphi$ can be written as a finite sum of finite products of ultimately PORC functions.
    Note that the leading coefficient of each constituent of these ultimately PORC functions is positive.
    By multivariate polynomial interpolation we can now show that $\varphi$ is already a finite sum of finite products of these constituents.
    The dominating terms of $\varphi$ are then formed by products of the leading coefficients of the constituents and thus are positive.
\end{proofsketch}

\section{Promise closure properties}
\label{sec:promises}

\begin{definition}
    Let $S \subseteq \IN^m$ and let $\varphi: \IN^m \to \IN$ be a function.
    We call $\varphi$ a \emph{functional promise closure property} of $\sharpFA$ with regard to $S$ if for every $f_1, \ldots, f_m \in \sharpFA$ defined on some shared alphabet $\Sigma$ there is a function $g \in \sharpFA$ with $g(w) = \varphi(f_1(w), \ldots, f_m(w))$ for every $w \in \Sigma^\star$ for which $(f_1(w), \ldots, f_m(w)) \in S$.
\end{definition}

\vspace{-0.2cm}

We now want to show that if $S$ fulfils some property, namely admitting polynomial cluster sequences (we postpone the definition to Definition~\ref{def:polynomialclustersequences}), then for every functional promise closure property with regard to $S$ there is a functional closure property of $\sharpFA$ that agrees with it on all tuples in $S$.
In other words we can interpolate the functional promise closure property $\varphi$ on all values of $S$ to obtain a functional closure property $\varphi'$ for all of $\sharpFA$.
The proof for this follows along the following ideas:
First, similar to Theorem~\ref{thm:multivclosure}, use the functional closure property $\varphi$ on the unary counting functions to find an equivalent function $\varphi'$ that is almost a functional closure property.
However after this step some of the constituents of the ultimately almost PORC functions might still be exponential.
Theorem~\ref{thm:multivclosure} then proceeded by showing that we can replace all these exponential constituents by the constant zero function, as otherwise we were able to reach a contradiction by constructing functions $f'_1, \ldots, f'_m$ and an infinite sequence of inputs $w^{(i)}$, such that $\varphi(f'_1(w^{(i)}), \ldots, f'_m(w^{(i)}))$ grows doubly exponential in the length of the inputs.
However when dealing with functional promise closure properties we have to be more careful when choosing $f'_1, \ldots, f'_m$ and the $w^{(i)}$, because we need $(f'_1(w^{(i)}), \ldots, f'_m(w^{(i)})) \in S$ to reach a contradiction.
Additionally we don't replace the exponential constituents by the constant zero-function but rather a polynomial that behaves the same for small inputs.
For this we need a special variant of univariate polynomial interpolation that yields integer-valued polynomials that are non-negative for all inputs from $\IN$:

\vspace{-0.2cm}

\begin{restatable}{lemma}{restateinterpolatepositive}
    \label{lem:interpolate_positive}
    Let $c_0, \ldots, c_N \in \IN$.
    Then there is an integer-valued polynomial $q: \IQ \to \IQ$ with $q(n) = c_n$ for all $n \in \{0, \ldots, N\}$ and $q(n') \geq 0$ for all $n' \in \IN$.
\end{restatable}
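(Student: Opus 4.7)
The plan is to construct $q$ in two stages: first, the standard integer-valued interpolant of degree at most $N$, which agrees with the $c_i$ at the given points but may take negative values elsewhere; second, add a correction that vanishes on $\{0,\ldots,N\}$ but is strictly positive and of degree $N+1$ beyond $N$, scaled large enough to swamp any negativity of the first piece.

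First I would define $p(x)$ via the Newton forward-difference formula in the binomial basis: $p(x) = \sum_{k=0}^{N} (\Delta^k c_0) \binom{x}{k}$, where $\Delta^k c_0$ denotes the $k$-th forward difference of the sequence $c_0,\ldots,c_N$ at $0$. Since the $c_i$ are integers, each $\Delta^k c_0$ is an integer, so by the binomial-basis characterization of integer-valued polynomials recalled in Section~\ref{sec:notation} (see also \cite[Prop.~4.2.1]{IP22}), $p$ is integer-valued. By construction $p(i) = c_i$ for $i \in \{0,\ldots,N\}$, and $p$ has degree at most $N$.

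Next I would introduce the correction $r(x) := \binom{x}{N+1}$, which is integer-valued, vanishes on $\{0,1,\ldots,N\}$, and is a polynomial of degree exactly $N+1$ that takes strictly positive integer values on $\{N+1, N+2, \ldots\}$. Because $\deg r = N+1 > N \geq \deg p$, the ratio $|p(x)|/r(x) \to 0$ as $x \to \infty$ along $\IN$, hence the set $\{-p(x)/r(x) : x \in \IN, x \geq N+1\}$ is bounded above; choose $M \in \IN$ with $M \geq -p(x)/r(x)$ for every such $x$ (equivalently, $M \cdot r(x) + p(x) \geq 0$ for all $x \geq N+1$).

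Finally I would set $q(x) := p(x) + M \cdot r(x)$ and verify the three properties. Integer-valuedness follows because $p$ and $r$ are integer-valued. For $i \in \{0,\ldots,N\}$, $r(i) = 0$, so $q(i) = p(i) = c_i$. For $x \in \IN$ with $x \leq N$ we have $q(x) = c_x \geq 0$ by hypothesis, and for $x \geq N+1$ the choice of $M$ gives $q(x) \geq 0$. The only substantive obstacle is verifying that a uniform $M$ exists, which is the content of the asymptotic comparison above; the rest is bookkeeping. I expect no further surprises.
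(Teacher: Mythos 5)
Your proposal is correct and follows essentially the same route as the paper: the Newton forward-difference interpolant in the binomial basis is exactly the polynomial the paper builds inductively (the increments $c_{i+1}-q_i(i+1)$ are the forward differences), and the correction $M\cdot\binom{x}{N+1}$ with $M$ large enough, justified by the degree comparison, is identical to the paper's final step.
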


\vspace{-0.2cm}

To be able to hit all of $S$ consistently we use independent binary encodings, that allow us to hit all of $\IN^m$.

\vspace{-0.2cm}

\begin{restatable}[Folklore]{lemma}{restatebinary}
    \label{lem:binary}
    The function $f: \{0,1\}^\star \to \IN$ defined by being the value of $w \in \{0, 1\}^\star$ interpreted as a binary number is in $\sharpFA$.
    Additionally it is possible to extend the domain of $f$ to any alphabet $\Sigma \supseteq \{0, 1\}$ where the value of $f$ is determined while ignoring any symbols not in $\{0, 1\}$.
\end{restatable}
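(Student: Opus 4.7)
The plan is to exhibit a small simple (well, almost simple) NFA that computes the binary interpretation and then extend it to a larger alphabet by adding self-loops. The key observation is that if $w = w_1 \cdots w_n \in \{0,1\}^\star$, then its binary value is
\[
\sum_{i\,:\,w_i = 1} 2^{n-i},
\]
so to count this with an NFA I just need to let a computation nondeterministically pick a position $i$ with $w_i = 1$ and then generate $2^{n-i}$ accepting paths for that choice.

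Concretely, I construct an NFA $M$ with two states $S$ and $T$, with $\inp(S) = 1$, $\inp(T) = 0$, $\outp(S) = 0$, $\outp(T) = 1$, and the following transitions: at $S$, a weight-$1$ self-loop on both $0$ and $1$ and a weight-$1$ transition $S \to T$ on reading the symbol $1$; at $T$, a weight-$2$ self-loop on both $0$ and $1$. Any accepting computation on $w = w_1 \cdots w_n$ has the form $S^i T^{n-i+1}$ for some $i \in [n]$, and such a computation has non-zero weight only if $w_i = 1$; in that case its weight equals
\[
\underbrace{1 \cdot 1 \cdots 1}_{i\text{ factors}} \cdot 1 \cdot \underbrace{2 \cdot 2 \cdots 2}_{n-i\text{ factors}} \cdot 1 \;=\; 2^{n-i}.
\]
Summing over all positions $i$ with $w_i = 1$ gives exactly the binary value of $w$. (For $w = 0^n$ or $w = \eps$ no computation reaches $T$, which is correct since the binary value is $0$.) By Lemma~\ref{lem:simplify} we can then replace $M$ by an equivalent simple NFA, although this is not needed for membership in $\sharpFA$.

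For the extension, given any finite alphabet $\Sigma \supseteq \{0,1\}$, I augment $M$ by adding, for every $\sigma \in \Sigma \setminus \{0,1\}$, a weight-$1$ self-loop at both $S$ and $T$ on the symbol $\sigma$ (and no transition $S \to T$ on $\sigma$). Reading a symbol outside $\{0,1\}$ therefore neither doubles the weight at $T$ nor enables a new choice at $S$, so the computation weights still sum to $\sum_{i\,:\,w_i = 1} 2^{(\#_{\{0,1\}}\text{-letters after position } i)}$, which is precisely the binary value obtained after deleting all symbols outside $\{0,1\}$. There is essentially no obstacle here beyond verifying the weights; the construction is direct and the correctness reduces to the above one-line sum.
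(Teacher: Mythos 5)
Your construction is exactly the automaton the paper uses (weight-$1$ self-loops at the start state, a weight-$1$ transition to the accepting state on reading $1$, a weight-$2$ self-loop at the accepting state, and weight-$1$ self-loops on all extra symbols for the extended alphabet); the only difference is that you verify correctness by summing path weights directly, whereas the paper phrases the same computation as an induction on the product of transition matrices. The proposal is correct and essentially identical to the paper's proof.
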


\vspace{-0.2cm}

For any $n \in \IN$ we denote by $\bin(n)$ the unique binary representation of $n$ without leading zeros.
We first show the methodology in detail by proving the univariate case.

\vspace{-0.2cm}

\begin{theorem}
    Let $S \subseteq \IN$.
    Then any function $\varphi: \IN \to \IN$ is a functional promise closure property of $\sharpFA$ with regard to $S$ iff there is a functional closure property $\psi: \IN \to \IN$ of $\sharpFA$ with $\varphi_{|S} = \psi_{|S}$.
\end{theorem}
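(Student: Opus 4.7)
\begin{proofsketch}
The plan is to prove both directions. The backward direction is immediate: given a closure property $\psi$ with $\psi_{|S} = \varphi_{|S}$ and any $f \in \sharpFA$, the composition $\psi \circ f \in \sharpFA$ witnesses the promise, since $(\psi \circ f)(w) = \varphi(f(w))$ whenever $f(w) \in S$. The forward direction mirrors the strategy of Theorem~\ref{thm:univclosure}, adapted to respect the promise $S$.

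For the forward direction, I first apply the promise closure property to the unary counting function $1^n \mapsto n$, which lies in $\sharpFA$, to obtain some $g \in \sharpFA$ with $g(1^n) = \varphi(n)$ for every $n \in S$. By Lemma~\ref{lem:algebraic} together with Lemma~\ref{lem:powerstructure}, $g$ restricted to unary inputs is an ultimately almost PORC function with period $p$, offset $N_g$, and constituents $g_0, \ldots, g_{p-1}$, each either a rational polynomial or a function in $2^{\Theta(n)}$. The main obstacle is that $g$ itself may fail to be ultimately PORC, so I must replace its exponential constituents by polynomials without altering values on $S$.

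The hard part is to show that for each $i$ where $g_i$ is exponential, the set $T_i := \{n \in S : n \geq N_g,\ n \equiv_p i\}$ is finite. Otherwise $\varphi(n) = g_i(n) \geq 2^{\gamma n}$ would hold for some fixed $\gamma > 0$ and infinitely many $n \in T_i$. Feeding the binary-encoding function $\tilde f \in \sharpFA$ from Lemma~\ref{lem:binary} into the promise property would then yield $\tilde g \in \sharpFA$ with $\tilde g(\bin(n)) = \varphi(n)$ for all $n \in T_i$; but $|\bin(n)| = O(\log n)$, so the linear-exponential growth bound for NFAs used in Theorem~\ref{thm:univclosure} forces $\tilde g(\bin(n)) \leq n^{O(1)}$, contradicting the doubly-exponential lower bound $2^{\gamma n}$ for large $n$.

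Once finiteness of each such $T_i$ is established, I invoke Lemma~\ref{lem:interpolate_positive} on $T_i$ to obtain a non-negative integer-valued polynomial $q_i$ agreeing with $\varphi$ on $T_i$. I then assemble $\psi$ by setting, for $n \geq N_g$ with $n \equiv_p i$, $\psi(n) := g_i(n)$ if $g_i$ is polynomial and $\psi(n) := q_i(n)$ otherwise, and $\psi(n) := \varphi(n)$ for $n \in S$ with $n < N_g$ (with arbitrary non-negative values on the remaining finitely many $n < N_g$). Since $g(1^n) = g_i(n) \in \IN$ on each residue class for $n \geq N_g$, $\psi$ takes values in $\IN$ everywhere and is piecewise polynomial past $N_g$, hence ultimately PORC; Theorem~\ref{thm:univclosure} then certifies $\psi$ as a closure property of $\sharpFA$, and the three cases of the construction give $\psi_{|S} = \varphi_{|S}$.
\end{proofsketch}
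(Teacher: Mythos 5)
Your proposal is correct and follows essentially the same route as the paper's proof: apply the promise property to the unary counting function, use Lemmas~\ref{lem:algebraic} and~\ref{lem:powerstructure} to get an ultimately almost PORC function agreeing with $\varphi$ on $S$, rule out exponential constituents meeting $S$ infinitely often via the binary-encoding function of Lemma~\ref{lem:binary} and the linearly-exponential growth bound, and patch the remaining finitely many values with Lemma~\ref{lem:interpolate_positive}. The only (cosmetic) difference is that you establish finiteness of each $T_i$ up front, whereas the paper phrases the same argument as a finite/infinite case split followed by a contradiction.
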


\vspace{-0.5cm}

\begin{proof}
    If such a $\psi$ exists, we directly see that $\varphi$ is a functional promise closure property of $\sharpFA$ with respect to $S$. Indeed for any $f \in \sharpFA$ we construct $g = \psi \circ f \in \sharpFA$ and see $g(w) = \varphi(f(w))$ for every $w \in \Sigma^\star$ with $f(w) \in S$.

    Now on the other hand let $\varphi$ be any functional promise closure property of $\sharpFA$ with regard to $S$.
    Again define the alphabet $\Sigma = \{1\}$ and the function $f: \Sigma^\star \to \IN$ with $f(1^n) := n$.
    Applying the closure property to $f$ gives that there is some $g \in \sharpFA$ with $g(1^n) = \varphi(f(1^n)) = \varphi(n)$ for all $n \in S$.
    Let $M = (Q, \Sigma, \wt, \inp, \outp)$ be an NFA computing $g$.
    This induces a transition matrix $A \in \IN^{|Q| \times |Q|}$ and vectors $a, b \in \IN^{|Q|}$, s.t.\ $a^T A^n b = g(1^n)$ for all $n \in \IN$ by Lemma~\ref{lem:algebraic}.
    We now define $\chi(n) := a^T A^n b$ and see $\chi_{|S} = \varphi_{|S}$.
    Using Lemma~\ref{lem:powerstructure} on $A^{n}$ we see that every entry of $A^n$ is an ultimately almost PORC function in $n$ and as such $\psi$ is also an ultimately almost PORC function.
    Let $p$ be the quasiperiod of $\chi$ and let $\chi_0, \ldots, \chi_{p-1}$ be the constituents of $\chi$ and let $N \in \IN$ be the offset after which $\chi$ is defined by the constituents.

    We claim that we can now replace every exponential constituent by a polynomial one without changing the value of $\chi$ for any $n \in S$.
    For each $i \in \{0, \ldots, p-1\}$ we distinguish two cases, depending on whether $S_i := S \cap (p\IZ+i)$ is finite or it is infinite.
    If $S_i$ is a finite set, we replace $\chi'_i$ with a polynomial that interpolates the same values as $\chi_i$ on $S_i$.
    Lemma~\ref{lem:interpolate_positive} ensures that this polynomial is integer-valued and non-negative for all of $\IN$.
    If $S_i$ is an infinite set, we replace $\chi_i$ by the constant zero function.
    Call the resulting ultimately PORC function $\psi$ with constituents $\psi_i$.
    Assume for the sake of contradiction there is an $c \in S$, s.t.\ $\chi(c) \neq \psi(c)$.
    Clearly such a $c$ would have to be at least $N$.
    Now let $i$ be, s.t.\ $c \in S_i$.
    It must hold that $\chi_i(c) \neq \psi_i(c)$.
    Hence $S_i$ cannot be a finite set, since $\chi_i$ and $\psi_i$ agree on $S_i$.
    Therefore $S_i$ must be infinite.
    Since $\chi_i$ is exponential there is a $\gamma \in \IR^+$ and $N' \in \IN$, s.t.\ $\chi_i(n) \geq 2^{\gamma n}$ for all $n \geq N'$.
    Let $f' \in \sharpFA$ be the function of binary evaluation from Lemma~\ref{lem:binary} over the alphabet $\Sigma' = \{0, 1\}$.
    Then there is a $g' \in \sharpFA$ with $g'(\bin(n)) = \varphi(f'(\bin(n))) = \chi(f'(\bin(n)))$ for all $n \in S$.
    Since $S_i$ is infinite, in particular $S_i$ must contain infinitely many values bigger than $\max(N, N')$.
    For $n \in S_i$ with $n \geq \max(N, N')$ we now have

\vspace{-0.6cm}
    
    \begin{align*}
        g'(\bin(n)) = \chi(f'(\bin(n))) = \chi(n) = \chi_i(n) \geq 2^{\gamma n} \geq 2^{\gamma 2^{|\bin(n)|-1}},
    \end{align*}

\vspace{-0.2cm}

    which is a contradiction to NFAs only being able to only compute functions that are at most linearly exponential in the input length.
    In conclusion $S_i$ cannot be infinite either and thus $c$ itself cannot exist.
\end{proof}

\vspace{-0.4cm}

\begin{definition}
    \label{def:polynomialclustersequences}
    A set $S \subseteq \IN^m$ admits \emph{polynomial cluster sequences} if for every $N_1, \ldots, N_m \in \IN$, and $p_1, \ldots, p_m \in \IN$ the projection $\tau: S \to \{0, \ldots, N_1 + p_1 - 1\} \times \ldots \times \{0, \ldots, N_m + p_m - 1\}$ defined by $\tau(n_1, \ldots n_m) = (\smod{n_1}{p_1}{N_1}, \ldots, \smod{n_m}{p_m}{N_m})$ has the following property:
    Any preimage $T$ of a singleton set under $\tau$ for every $i \in [m]$ has either bounded $i$-th coordinate or there is a polynomial $q: \IN \to \IN$ and an infinite subset $T' \subseteq T$ with unbounded $i$-th coordinate\footnote{note that this property also follows directly from the polynomial bound on the other coordinates in the $i$-th coordinate, but we have it as part of the definition for clarity.} and with $\sum_{j=1}^m n_j \leq q(n_i)$ for all $(n_1, \ldots, n_m) \in T'$.
    We call such an infinite subset a \emph{polynomial cluster sequence} with regards to dimension $i$.

    We call $\tau$ the \emph{shifted grid projection} of $S$ with respect to offsets $N_1, \ldots, N_m$ and quasiperiods $p_1, \ldots, p_m$.
\end{definition}

Intuitively this definition requires that each dimension is either bounded or can grow reasonably quickly together with the other dimensions, even when restricted to inputs from specific shifted residue classes.
For example $\{(n^2, n^3) \mid n \in \IN\}$ admits polynomial cluster sequences, while $\{(n, 2^n) \mid n \in \IN\}$ does not.

\vspace{-0.1cm}

\begin{restatable}{theorem}{restatepolyclusterseq}
    \label{thm:polyclusterseq}
    Let $S \subseteq \IN^m$ admit polynomial cluster sequences.
    Then any function $\varphi: \IN^m \to \IN$ is a functional promise closure property of $\sharpFA$ with regard to $S$ iff there is a functional closure property $\psi: \IN^m \to \IN$ of $\sharpFA$ with $\varphi_{|S} = \psi_{|S}$.
\end{restatable}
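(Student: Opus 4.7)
The plan is to adapt the univariate proof immediately above to the multivariate setting, using the sum-of-products structure of $\chi$ extracted in Theorem~\ref{thm:multivclosure} together with the shifted grid projection $\tau$ and a polynomial cluster sequence on each of its fibers. The ``if'' direction is immediate, since $\psi\circ(f_1,\ldots,f_m)\in\sharpFA$ witnesses the promise property for $\varphi$ whenever $\psi_{|S}=\varphi_{|S}$.

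For the harder direction I would mimic the proof of Theorem~\ref{thm:multivclosure}: take $\Sigma=\{\sigma_1,\ldots,\sigma_m\}$ with counting functions $f_i(w)=\#_i(w)$ and obtain from the promise property some $g\in\sharpFA$ coinciding with $\varphi(f_1,\ldots,f_m)$ whenever $(f_1(w),\ldots,f_m(w))\in S$. Restricting $g$ to inputs $\sigma_1^{n_1}\cdots\sigma_m^{n_m}$ and applying Lemmas~\ref{lem:algebraic} and~\ref{lem:powerstructure} gives
\[
\chi(n_1,\ldots,n_m)\;=\;a^T\prod_{i=1}^m A_{\sigma_i}^{n_i}\,b\;=\;\sum_k\prod_{j=1}^m\chi^{(k,j)}(n_j),
\]
a non-negative sum of products of univariate ultimately almost PORC functions with $\chi=\varphi$ on $S$. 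I would then fix common offsets $N_j$ and quasiperiods $p_j$ for all $\chi^{(k,j)}$ in each dimension $j$, form the shifted grid projection $\tau$, and examine each fiber $F_v:=\tau^{-1}(v)\subseteq S$; on $F_v$, $\chi$ reduces to a specific sum of products of constituents $\chi^{(k,j)}_{i_j(v)}$, each being polynomial or exponential. The final $\psi$ will be assembled as $\sum_v\bigl(\prod_{j=1}^m I_{v,j}(n_j)\bigr)\cdot\psi^v(n_1,\ldots,n_m)$, where each $I_{v,j}$ is the univariate ultimately PORC indicator of the $j$-th coordinate of the fiber $v$ and each $\psi^v$ is a sum of products of univariate polynomials agreeing with $\chi$ on $F_v$. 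By Theorem~\ref{thm:multivclosure} any such $\psi$ is automatically a functional closure property of $\sharpFA$, with $\psi_{|S}=\varphi_{|S}$ by construction.

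It remains to produce each $\psi^v$ by eliminating the exponential constituents appearing in $\chi|_{F_v}$. Partition $[m]$ into the coordinates $B$ which are bounded on $F_v$ and the unbounded ones $U$. For $\ell\in B$ each constituent is evaluated on only finitely many $n_\ell$, so every exponential $\chi^{(k,\ell)}_{i_\ell(v)}$ is swapped for a non-negative integer-valued polynomial via Lemma~\ref{lem:interpolate_positive}. For $\ell\in U$ I would invoke a polynomial cluster sequence $T'\subseteq F_v$ with unbounded $\ell$-th coordinate and $\sum_j n_j\leq q(n_\ell)$: if some summand has an exponential constituent in direction $\ell$, then, after pigeonholing $T'$ to stabilize the finitely many values of the $B$-coordinates and to secure the non-vanishing of the summand's other factors, I realize an infinite subfamily by inputs $w_t=\bin(n_\ell)\prod_{j\neq\ell}\sigma_j^{n_j}$ over the alphabet of Lemma~\ref{lem:binary}, yielding $(f'_1(w_t),\ldots,f'_m(w_t))\in T'\subseteq S$ with $|w_t|=O(\log n_\ell+q(n_\ell))$. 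By non-negativity of every summand, $g(w_t)$ is at least the chosen summand, which is $\geq 2^{\gamma n_\ell}$, doubly exponential in $|w_t|$, contradicting the linear-exponential ceiling on $\sharpFA$ functions. A summand that cannot be so pigeonholed is identically zero on (a suitable cofinite part of) $F_v$ and can simply be dropped from $\psi^v$, disposing of its remaining exponential constituents harmlessly. I expect the main obstacle to be exactly this combinatorial refinement of $T'$: an exponential constituent may sit alongside other factors that vanish only on certain tuples, and one must argue by repeated pigeonholing that either the summand becomes eventually non-vanishing (so the contradiction fires) or it is globally vanishing on a large enough subfamily of $F_v$ (so it can be discarded). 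Once this is handled, the contradictory inputs follow the univariate doubly-exponential blueprint, with Lemma~\ref{lem:binary} supplying the unbounded coordinate and plain symbol counting the polynomially-bounded ones.
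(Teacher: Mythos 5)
Your overall architecture matches the paper's: extract $\chi=\sum_k\prod_j\chi^{(k,j)}(n_j)$ from Lemmas~\ref{lem:algebraic} and~\ref{lem:powerstructure}, replace exponential constituents by interpolating polynomials (Lemma~\ref{lem:interpolate_positive}) on the bounded fibers of $\tau$, and derive a contradiction on the unbounded ones via a polynomial cluster sequence. The fiber-by-fiber assembly of $\psi$ with indicator prefactors is a cosmetic variant of the paper's in-place constituent replacement, and your ``pigeonholing'' worry is resolved in the paper simply by enlarging the offsets so that each constituent is either identically zero or never zero beyond the offset.

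However, the contradiction step has a genuine gap: you encode only the distinguished coordinate $n_\ell$ in binary and the remaining coordinates in unary, giving $|w_t|=O(\log n_\ell+q(n_\ell))$. Since $n_\ell$ is itself one of the summands in $\sum_j n_j\le q(n_\ell)$, we have $q(n_\ell)\ge n_\ell$, so $|w_t|=\Omega(n_\ell)$ and the lower bound $2^{\gamma n_\ell}$ is at most \emph{singly} exponential in $|w_t|$ (and sub-exponential once $\deg q\ge 2$). That is within the linear-exponential ceiling of $\sharpFA$, so no contradiction fires. The fix --- and the entire point of the polynomial bound in Definition~\ref{def:polynomialclustersequences} --- is to encode \emph{every} coordinate in binary, using the extension in Lemma~\ref{lem:binary} over a $2m$-symbol alphabet $\{\sigma_{k,0},\sigma_{k,1}\}_{k\in[m]}$ with each encoding ignoring the other symbols. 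Then $|w_t|\le\sum_k\lfloor\log_2 n_k+1\rfloor\le m\lfloor\log_2 q(n_\ell)+1\rfloor=O(\log n_\ell)$, and $2^{\gamma n_\ell}$ really is doubly exponential in the input length. With that replacement your argument goes through.
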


\vspace{-0.1cm}

The technical proof of the theorem can be found in the appendix.
There are multiple natural families for the set $S$ such that $S$ admits polynomial cluster sequences which we describe in the following.

\vspace{-0.1cm}

\begin{lemma}
    Any finite set $S \subseteq \IN^m$ admits polynomial cluster sequences.
\end{lemma}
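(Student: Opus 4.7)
The plan is to observe that the defining condition for admitting polynomial cluster sequences is trivially satisfied whenever $S$ itself is finite, because the only nontrivial case in Definition~\ref{def:polynomialclustersequences} is when some preimage $T = \tau^{-1}(\{s\})$ has unbounded $i$-th coordinate, and this cannot happen for a subset of a finite set.

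First I would fix arbitrary $N_1,\ldots,N_m,p_1,\ldots,p_m \in \IN$ and the associated shifted grid projection $\tau: S \to \{0,\ldots,N_1+p_1-1\} \times \cdots \times \{0,\ldots,N_m+p_m-1\}$, and then fix an arbitrary point $s$ in the codomain and set $T := \tau^{-1}(\{s\}) \subseteq S$. Since $T \subseteq S$ and $S$ is finite by assumption, $T$ is itself a finite subset of $\IN^m$.

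Consequently, for every $i \in [m]$ the set of $i$-th coordinates $\{n_i \mid (n_1,\ldots,n_m) \in T\}$ is a finite subset of $\IN$ and hence bounded. Thus the first alternative in Definition~\ref{def:polynomialclustersequences} (bounded $i$-th coordinate) holds for every $i \in [m]$, and no polynomial cluster sequence needs to be exhibited. Since the choices of $N_1,\ldots,N_m,p_1,\ldots,p_m$ and of $s$ were arbitrary, $S$ admits polynomial cluster sequences. There is no real obstacle in this argument; the only thing used is that finiteness of $S$ passes to every preimage $T$, which immediately forces boundedness in each coordinate.
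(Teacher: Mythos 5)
Your argument is correct and is essentially identical to the paper's own proof: both simply note that every preimage $T$ of a singleton under $\tau$ is a subset of the finite set $S$, hence bounded in every coordinate, so the first alternative of Definition~\ref{def:polynomialclustersequences} always applies. No further comparison is needed.
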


\vspace{-0.4cm}

\begin{proof}
    Independent of the offsets and quasiperiods and $i \in [m]$ any subset of $S$ always is finite and thus bounded in every dimension.
\end{proof}

\vspace{-0.2cm}

An affine variety is defined as the zero set of a finite number of multivariate polynomials.
A special case of affine varieties are \emph{graph varieties} (also just  called \emph{graphs}, see \cite[\S2.4, Exe.~12]{Sha13}).
An affine variety $S$ is a graph variety if there exist a finite number of $j$-variate polynomials $\mu_1,\ldots,\mu_k$ such that $S = \{(s_1,\ldots,s_j,\mu_1(s_1,\ldots,s_j),\ldots,\mu_k(s_1,\ldots,s_j))\mid (s_1,\ldots,s_j)\in\IQ^j\} \subseteq \IQ^{j+k}$.
We call $s_1, \ldots, s_j$ the \emph{free} variables, and the remaining variables the \emph{dependent} variables.
We call $S$ a monotone graph variety if $\mu_1, \ldots, \mu_k$ are all monotone.

\vspace{-0.1cm}

\begin{restatable}{lemma}{restatemonotonegraph}
    Let $S \subseteq \IQ^m=\IQ^{j+k}$ be a monotone graph variety.
    Then the set $S \cap \IN^m=\IN^{j+k}$ admits polynomial cluster sequences.
\end{restatable}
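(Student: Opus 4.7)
The plan is a translation argument: starting from a base point in $T$, I shift one free coordinate along an arithmetic progression tuned to preserve all residue classes and non-negativity. Let $D\in\IN\setminus\{0\}$ be a common denominator for the rational coefficients of $\mu_1,\ldots,\mu_k$, set $L:=\operatorname{lcm}(p_1,\ldots,p_m)$, and $P:=DL$. Given a base point $s^*\in\IN^j$ with $(s^*,\mu(s^*))\in T$ and a free index $i_0\in[j]$, consider the translates $s^{(t)}:=s^*+tPe_{i_0}$ for $t\in\IN$. Writing $h_\ell:=D\mu_\ell\in\IZ[s_1,\ldots,s_j]$, the $t^k$-coefficient of $h_\ell(s^{(t)})\in\IZ[t]$ is a multiple of $P^k$, so $\mu_\ell(s^{(t)})=h_\ell(s^{(t)})/D$ lies in $\IZ[t]$, and its $t^k$-coefficients for $k\ge 1$ are multiples of $L$, hence of $p_{j+\ell}$. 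Together with $s^*_{i'}+tP\delta_{i',i_0}\equiv s^*_{i'}\pmod{p_{i'}}$ and $\mu_\ell(s^{(t)})\ge\mu_\ell(s^*)\ge 0$ (by monotonicity), this shows that $(s^{(t)},\mu(s^{(t)}))$ has the same shifted remainders as $(s^*,\mu(s^*))$, so is in $T$, \emph{provided} that for each pinned index $\ell$ (i.e.\ $r_{j+\ell}<N_{j+\ell}$) the polynomial $g_\ell(t):=\mu_\ell(s^{(t)})$ is identically constant.

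The crux is choosing $i_0$ and $s^*$ so that all pinned $g_\ell$'s are constant and the $i$-th coordinate grows with $t$; this is where monotonicity is essential. For case $i\le j$, I set $i_0:=i$. The unboundedness of $s_i$ on $T$ combined with $\mu_\ell(s)=r_{j+\ell}$ for every pinned $\ell$ yields an infinite sequence in the projection of $T$ to $\IN^j$ whose $i$-th coordinate tends to infinity, from which Dickson's lemma extracts a componentwise non-decreasing subsequence that is automatically strict in the $i$-th coordinate. I then split on whether the other coordinates are eventually constant along this chain: if they are, the resulting arithmetic ray in the $s_i$-direction through those fixed values contains infinitely many points at which $\mu_\ell=r_{j+\ell}$, so by the polynomial identity theorem the univariate restriction of $\mu_\ell$ to that ray is identically $r_{j+\ell}$; if not, monotonicity forces $\mu_\ell$ to be the constant $r_{j+\ell}$ on an arbitrarily large box in $\IR^j$, so $\mu_\ell\equiv r_{j+\ell}$ as a polynomial. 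Either way I obtain an $s^*$ at which every $g_\ell$ is constant. For case $i>j$, pigeonhole on the (finitely many) free variables on which $\mu_{i-j}$ depends non-trivially identifies an $i_0\le j$ with $s_{i_0}$ unbounded on $T$; the same extraction then yields $s^*$ with pinned $g_\ell$'s constant, and the monotone dependence of $\mu_{i-j}$ on $s_{i_0}$ (together with a genericity check to ensure the restriction to the $s_{i_0}$-line through $s^*$ is non-constant, obtained by shifting $s^*$ if necessary along any direction flat for the pinned $\mu_\ell$'s) guarantees the $i$-th coordinate grows with $t$.

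With $i_0$ and $s^*$ fixed, each $\mu_\ell(s^{(t)})$ is a polynomial in $t$ of degree at most $\deg\mu_\ell$, so $\sum_a n_a\le q_1(t)$ for some polynomial $q_1$. The $i$-th coordinate is either $n_i=s^*_{i_0}+tP$ (case $i\le j$) or $n_i=\mu_{i-j}(s^{(t)})$, a polynomial in $t$ of degree $\ge 1$ with positive leading coefficient by monotonicity (case $i>j$), so $t\le q_2(n_i)$ for some polynomial $q_2$. Hence $\sum_a n_a\le q(n_i)$ where $q:=q_1\circ q_2$, and $T':=\{(s^{(t)},\mu(s^{(t)})):t\in\IN,\;s^{(t)}_{i_0}\ge N_{i_0}\}$ is the desired infinite polynomial cluster sequence in $T$ with respect to dimension $i$. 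The main obstacle is the existence argument for $(i_0,s^*)$ in the preceding paragraph: monotonicity is the key ingredient, converting the pinning of $\mu_\ell$ along an unbounded $\IN^j$-family of points into the rigid polynomial identity that $\mu_\ell$ is constant along an entire coordinate direction, which is what keeps the translation argument self-consistent.
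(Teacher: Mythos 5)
Your proof is correct and follows the same basic route as the paper's: translate a base point of $T$ along a single free coordinate in arithmetic steps of $dp$ (your $P=DL$), use the common-denominator computation to show all dependent coordinates keep their residues, use monotonicity to preserve the $\geq N_\ell$ thresholds and to get the polynomial bound $\sum_a n_a \leq q(n_i)$, and reduce the dependent-coordinate case $i>j$ to a free coordinate on which $\mu_{i-j}$ depends. Where you genuinely add something is in the choice of base point. The paper starts the ray at an arbitrary point of $T$ and handles the dependent coordinates with the dichotomy ``either $n'_\ell = n_\ell$ or the $\ell$-th coordinate is unbounded,'' which does not address a pinned coordinate (one with $c_\ell < N_\ell$) whose polynomial is non-constant along the chosen ray --- there the ray simply leaves $T$, and the argument as written breaks for a badly chosen start point or direction. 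Your Dickson's-lemma extraction together with the box/polynomial-identity argument is exactly the repair needed, and it is the step where monotonicity does real work. Two small imprecisions to tighten: in your second subcase the ``arbitrarily large box in $\IR^j$'' is full-dimensional only in the coordinates that are unbounded along the chain, so you only conclude that each pinned $\mu_\ell$ is constant on the affine subspace obtained by freezing the eventually-constant coordinates --- which is all you need, since your ray lives in that subspace; and the ``genericity check'' in case $i>j$ deserves one more sentence, namely that the locus in that subspace where $\mu_{i-j}$ is constant along the $e_{i_0}$-line is a proper subvariety, while the shifted grid of admissible base points (a late chain point plus $P\IN^{U}$) is Zariski-dense in the subspace and stays inside $T$, so a good base point exists.
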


\vspace{-0.2cm}

The proof of this lemma can be found in the appendix.
All in all this combines to the following theorem which characterizes the special case of multivariate polynomial functional promise closure properties, the technical details can be found in the appendix.

\vspace{-0.1cm}

\begin{restatable}{theorem}{restatemongraphclosures}
    \label{thm:mongraphclosures}
    Let $S \subseteq \IQ^m=\IQ^{j+k}$ be a monotone graph variety and let $I = I(S)$ be its vanishing ideal.
    A multivariate polynomial $\varphi: S \to \IN$ is a functional promise closure property of $\sharpFA$ with regard to $S$ if and only if there exists a $\psi \in I$ such that $\varphi + \psi$ is a multivariate functional closure property of $\sharpFA$.
\end{restatable}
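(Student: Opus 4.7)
The plan is to prove the two directions separately. The backward direction is an immediate check: if $\psi \in I$ and $\varphi + \psi$ is a functional closure property of $\sharpFA$, then for any $f_1, \ldots, f_m \in \sharpFA$ on a common alphabet $\Sigma$ the function $(\varphi + \psi)(f_1, \ldots, f_m)$ lies in $\sharpFA$, and whenever $(f_1(w), \ldots, f_m(w)) \in S$ it equals $\varphi(f_1(w), \ldots, f_m(w))$ because $\psi$ vanishes on $S$ by definition of $I = I(S)$. This directly exhibits $\varphi$ as a functional promise closure property with regard to $S$.

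For the forward direction, the idea is to exploit the graph structure of $S$ to build a polynomial functional closure property of $\sharpFA$ that agrees with $\varphi$ on $S$; its difference with $\varphi$ will then be the desired element of $I$. Writing $s_1, \ldots, s_j$ for the free variables of $S$ and $\mu_1, \ldots, \mu_k$ for the monotone polynomials determining the dependent coordinates, I would define
\[
    \tilde\varphi(x_1, \ldots, x_j) := \varphi\bigl(x_1, \ldots, x_j, \mu_1(x_1, \ldots, x_j), \ldots, \mu_k(x_1, \ldots, x_j)\bigr)
\]
and then $\psi'(x_1, \ldots, x_m) := \tilde\varphi(x_1, \ldots, x_j)$, a polynomial in all $m$ variables that happens to depend only on the first $j$. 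Since on $S$ the identities $x_{j+l} = \mu_l(x_1, \ldots, x_j)$ hold, we have $\psi'|_S = \varphi|_S$, so the candidate $\psi := \psi' - \varphi$ lies in $I$ and $\varphi + \psi = \psi'$. The task thus reduces to showing that $\psi'$, equivalently $\tilde\varphi$ in $j$ variables, is a multivariate functional closure property of $\sharpFA$.

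To verify this reduction, I would take arbitrary $f_1, \ldots, f_j \in \sharpFA$ on a common alphabet $\Sigma$ and set $f_{j+l}(w) := \mu_l(f_1(w), \ldots, f_j(w))$. Because each $\mu_l$ is monotone, it is itself a multivariate functional closure property of $\sharpFA$ by Lemma~\ref{lem:multipolynomial} (monotonicity forces non-negative coefficients in the binomial basis and keeps all dominating terms positive under every constant substitution), so every $f_{j+l}$ lies in $\sharpFA$ and moreover $\mu_l(\IN^j) \subseteq \IN$; consequently $(f_1(w), \ldots, f_m(w)) \in S \cap \IN^m$ for every $w$. The promise closure property of $\varphi$ with regard to $S$ then furnishes a $g \in \sharpFA$ with $g(w) = \varphi(f_1(w), \ldots, f_m(w)) = \tilde\varphi(f_1(w), \ldots, f_j(w))$, exhibiting $\tilde\varphi(f_1, \ldots, f_j) \in \sharpFA$ as required.

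The main obstacle I anticipate is the careful handling of the monotonicity hypothesis, which is used twice: first to place $f_{j+l} = \mu_l(f_1, \ldots, f_j)$ into $\sharpFA$ via Lemma~\ref{lem:multipolynomial}, and second to guarantee that the constructed tuple actually lands in the promised set $S \cap \IN^m$ rather than at some point of $S$ outside $\IN^m$ where the promise would not apply. Both uses hinge on the coefficient structure of monotone polynomials, and without monotonicity the substituted polynomial $\psi'$ need not be a closure property, which is precisely why the theorem is limited to monotone graph varieties.
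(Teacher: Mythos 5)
Your backward direction is correct and identical to the paper's. The forward direction, however, breaks at its central step: the claim that each monotone $\mu_l$ is a functional closure property of $\sharpFA$, so that $f_{j+l} := \mu_l(f_1,\ldots,f_j)$ lies in $\sharpFA$ for arbitrary $f_1,\ldots,f_j\in\sharpFA$. Monotonicity does \emph{not} force non-negative coefficients in the binomial basis, nor the positivity of dominating terms required by Lemma~\ref{lem:multivarpolyn}. Consider
\[
\mu(x,y) = -\binom{x}{2}\binom{y}{2} + M\Bigl(\binom{x}{4}+\binom{y}{4}+\binom{x}{1}\binom{y}{4}+\binom{x}{4}\binom{y}{1}\Bigr) + (\text{positive dominated terms})
\]
for a large constant $M$: none of the positions $(4,0),(0,4),(1,4),(4,1)$ dominates $(2,2)$, so $(2,2)$ stays a dominating term with coefficient $-1$, yet for $M$ large enough $\mu$ maps $\IN^2$ to $\IN$ and is non-decreasing in each variable (the first difference $\Delta_x\mu$ contains $-x\binom{y}{2}$, which is absorbed by $M\binom{x}{3}(1+y)$ and $M\binom{y}{4}$, and symmetrically for $\Delta_y\mu$; only the irrelevant mixed difference is negative). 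By Lemma~\ref{lem:multivarpolyn} such a $\mu$ is not a closure property, and indeed $\mu(\#_1,\#_2)\notin\sharpFA$ already for the unary counting functions, so your tuple $(f_1,\ldots,f_m)$ need not consist of $\sharpFA$ functions and the promise hypothesis cannot be invoked. A second, related defect: the $\mu_l$ have rational coefficients and need not map $\IN^j$ into $\IN$ at all, so even when $f_{j+l}$ happens to be well defined the tuple need not land in $S\cap\IN^m$, which is the only place the promise says anything.

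The paper circumvents both problems by never computing the dependent coordinates from the free ones. It feeds $m$ \emph{independent} binary-encoding functions into the promise (via Theorem~\ref{thm:polyclusterseq}, applicable because monotone graph varieties admit polynomial cluster sequences), uses the promise only on those words whose encoded tuples happen to lie in $S\cap\IN^m$, and then shows that the resulting sum of products of ultimately almost PORC functions coincides with a single polynomial constituent $\varphi'_{\vv{\theta}}$ on a full-dimensional shifted subgrid $\vv o + dp\,\IN^j$ of the free coordinates, the factor $d$ clearing the denominators of the $\mu_a$; multivariate interpolation then gives $\widetilde\varphi=\widetilde\varphi'_{\vv{\theta}}$, and positivity of the dominating terms of $\varphi'_{\vv{\theta}}$ is re-derived as in Lemma~\ref{lem:multivarpolyn}. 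Your reduction to showing that the pullback $\tilde\varphi$ is a closure property in the $j$ free variables is a reasonable reformulation of the goal, but its verification has to go through this interpolation route rather than through closure of $\sharpFA$ under the $\mu_l$, which fails.
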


\section{Conclusion}
We characterized the functional closure properties of $\sharpFA$ to be precisely the ultimately PORC functions in the univariate case and combinations of ultimately PORC functions in the multivariate case.
Additionally we characterize promise functional closure properties of $\sharpFA$ with regard to some natural families of sets $S$.
Natural further directions of research are now whether we can characterize the promise functional closure properties of $\sharpFA$ for more sets $S$ and whether our methods can be applied to characterize functional closure properties for more powerful models of computation.

\bibliography{arxiv}

\newpage

\appendix
\section{Appendix: Missing proofs}
\subsection{Missing proofs from section~\ref{sec:notation}}
\begin{proof}[Proof of Lemma~\ref{lem:simplify}]
    Let $M_f = (Q_f, \Sigma, \wt_f, \inp_f, \outp_f)$ be an NFA computing $f$.
    We construct $M = (Q, \Sigma, \wt, \inp, \outp)$ with
    \begin{align*}
        Q &= Q_f \times [\max (\image \wt_f \cup \image \inp_f)] \times [\max(\image \outp_f)]\\
        \wt((q, \alpha, \beta), \sigma, (q', \alpha', \beta')) &=
        \begin{cases}
            1 & \text{if $\alpha' \leq \wt_f(q, \sigma, q')$ and $\beta=1$}\\
            0 & \text{otherwise}\\
        \end{cases}\\
        \inp((q, \alpha, \beta)) &=
        \begin{cases}
            1 & \text{if $\alpha \leq \inp_f(q)$}\\
            0 & \text{otherwise}\\
        \end{cases}\\
        \outp((q, \alpha, \beta)) &=
        \begin{cases}
            1 & \text{if $\beta \leq \outp_f(q)$}\\
            0 & \text{otherwise}\\
        \end{cases}\\
    \end{align*}
    Intuitively $M$ consists of multiple identical copies of $M_f$, where all the weights are handled explicitly as separate computations.

    Fix any word $w = w_1 \ldots w_n \in \Sigma^\star$.
    Let $P = q_0q_1 \ldots q_n$ be a computation of $M_f$ on $w$.
    Then for $\alpha_0 \in [\inp_f(q_0)]$ and $\alpha_i \in [\wt(q_{i-1},w_i,q_i)]$ for $i \in [n]$ and $\beta_0 = \beta_1 = \ldots = \beta_{n-1} = 1$ and $\beta_n \in \outp_f(q_n)$ we have that $P' := (q_0,\alpha_0,\beta_0)(q_1,\alpha_1,\beta_1) \ldots (q_n,\alpha_n,\beta_n)$ is a computation of $M$ on $w$ of weight exactly $1$ while any other choices of the $\alpha_i$ and $\beta_i$ would lead to a computation of weight $0$.
    In particular there are $\inp_f(q_0) \cdot \prod_{i=1}^{n}\wt_f(q_{i-1},w_i,q_i) \cdot \outp_f(q_0)$ non-zero computations in $M$ corresponding to $P$, which coincides with the weight of $P$ in $M_f$.
    Additionally every computation on $M$ can be projected onto a computation on $M_f$ via $(q, i, j) \mapsto q$ and thus $M$ computes $f$.
\end{proof}

\subsection{Missing proofs from section~\ref{sec:functional_closure_properties}}
\restateadd*
\begin{proof}
    Let $M_f = (Q_f, \Sigma, \wt_f, \inp_f, \outp_f)$ and $M_g = (Q_g, \Sigma, \wt_g, \inp_g, \outp_g)$ be NFAs computing $f$ and $g$ respectively.

    We construct the direct union NFA $M = (Q, \Sigma, \wt, \inp, \outp)$ with
    \begin{align*}
        Q &= Q_f \dot\cup Q_g\\
        \wt(q, \sigma, q') &=
        \begin{cases}
            \wt_f(q, \sigma, q') & \text{$q, q' \in Q_f$}\\
            \wt_g(q, \sigma, q') & \text{$q, q' \in Q_g$}\\
            0 & \text{otherwise}\\
        \end{cases}\\
        \inp(q) &=
        \begin{cases}
            \inp_f(q) & \text{if $q \in Q_f$}\\
            \inp_g(q) & \text{if $q \in Q_g$}\\
        \end{cases}\\
        \outp(q) &=
        \begin{cases}
            \outp_f(q) & \text{if $q \in Q_f$}\\
            \outp_g(q) & \text{if $q \in Q_g$}\\
        \end{cases}
    \end{align*}

    Any computation with non-zero weight in $M$ is now fully contained in exactly one of the sub-NFA equivalent to $M_f$ or $M_g$ and any computation path on $M_f$ or $M_g$ naturally embeds into $M$ and thus $M$ computes $f+g$.
\end{proof}

\restatemult*
\begin{proof}
    Let $M_f = (Q_f, \Sigma, \wt_f, \inp_f, \outp_f)$ and $\wt_g = (Q_g, \Sigma, \wt_g, \inp_g, \outp_g)$ be NFAs computing $f$ and $g$ respectively.
    We use the product automaton construction to construct the NFA $M = (Q, \Sigma, \wt, \inp, \outp)$:
    \begin{align*}
        Q &= Q_f \times Q_g\\
        \wt((q_1, q_2), \sigma, (q_1', q_2')) &= \wt_f(q_1, \sigma, q_1') \cdot \wt_g(q_2, \sigma, q_2')\\
        \inp((q_1, q_2)) &= \inp_f(q_1) \cdot \inp_g(q_2)\\
        \outp((q_1, q_2)) &= \outp_f(q_1) \cdot \outp_g(q_2)\\
    \end{align*}

    Fix some $w = w_1 \ldots w_n \in \Sigma^\star$.
    Let $\mathcal{P}_f = Q_f^{n+1}$ be all computations of $M_f$ on $w$ and let $\mathcal{P}_g = Q_g^{n+1}$ be all computations of $M_g$ on $w$.
    Then $\mathcal{P}_f \times \mathcal{P}_g \cong (Q_f \times Q_g)^{n+1}$ are precisely the computations of $M$ on $x$ by identifying $((q_{f,0} \ldots q_{f,n}), (q_{g,0} \ldots q_{g,n}))$ with the computation $(q_{f,0}, q_{g,0}) \ldots (q_{f,n}, q_{g,n})$.
    Additionally the computation $(P_f, P_g) \in \mathcal{P}_f \times \mathcal{P}_g$ has weight $\Cweight((P_f, P_g)) = \Cweight_f(P_f) \cdot \Cweight_g(P_g)$ where $\Cweight_f$ and $\Cweight_g$ denote the weights of computations of $M_f$ and $M_g$ respectively.
    Thus we have
    \[
        \Bigl(\sum_{P_f \in \mathcal{P}_f}\Cweight_f(P_f)\Bigr) \cdot \Bigl(\sum_{P_g \in \mathcal{P}_g}\Cweight_g(P_g)\Bigr) = \sum_{P_f \in \mathcal{P}_f}\sum_{P_g \in \mathcal{P}_g}\Cweight_f(P_f)\cdot\Cweight_g(P_g) = \sum_{(P_f, P_g) \in \mathcal{P}_f \times \mathcal{P}_g}\Cweight_f((P_f, P_g))
    \]
    and $M$ computes $f \cdot g$.
\end{proof}

\restatebaseconstruction*
\begin{proof}
    Let $M_f = (Q_f, \Sigma, \wt_f, \inp_f, \outp_f)$.
    We construct the NFA $M = (Q, \Sigma, \wt, \inp, \outp)$ with $Q = Q_f \times S$ and
    \begin{align*}
        \inp((q, s)) &= \begin{cases}
            \inp_f(q) & \text{if $s = \init(q)$}\\
            0 & \text{otherwise}\\
        \end{cases}\\
        \outp((q, s)) &= \outp_f(q) \cdot \cond(s)\\
        \wt((q, s), \sigma, (q', s')) &= \begin{cases}
            \wt_f(q, \sigma, q') & \text{if $s' = \step(q, \sigma, q', s)$}\\
            0 & \text{otherwise}\\
        \end{cases}
    \end{align*}
    Fix some $w = w_1 \ldots w_n \in \Sigma^\star$.
    Any computation $P = (q_0, s_0) \ldots (q_n, s_n)$ of $M$ on $w$ directly corresponds to the computation $q_0 \ldots q_n$ of $M_f$.
    If $s_0 \neq \init(q_0)$ or if $s_i \neq \step(q_{i-1}, w_i, q_i, s_{i-1})$ for any $i \in [n]$ then $P$ has weight $0$.
    Otherwise $P$ has weight $\Cweight(P) = \Cweight_f(q_0 \ldots q_n) \cdot \cond(s_n) = \Cweight_f(q_0 \ldots q_n) \cdot \prop(w, q_0 \ldots q_n)$.
    The claim on the function computed by $M$ directly follows.
\end{proof}

\restatebaseconstructionall*
\begin{proof}
    Let $M_f = (Q_f, \Sigma, \wt_f, \inp_f, \outp_f)$.
    We construct the NFA $M = (Q, \Sigma, \wt, \inp, \outp)$ with $Q = Q_f \times S$ and
    \begin{align*}
        \inp((q, s)) &= \begin{cases}
            1 & \text{if $s = \init(q)$}\\
            0 & \text{otherwise}\\
        \end{cases}\\
        \outp((q, s)) &= \cond(s)\\
        \wt((q, s), \sigma, (q', s')) &= \begin{cases}
            1 & \text{if $s' = \step(q, \sigma, q', s)$}\\
            0 & \text{otherwise}\\
        \end{cases}\\
    \end{align*}
    Any computation $P = (q_0, s_0) \ldots (q_n, s_n)$ of $M$ directly corresponds to the computation $q_0 \ldots q_n$ of $M_f$.
    If $s_0 \neq \init(q_0)$ or if $s_i \neq \step(q_{i-1}, w_i, q_i, s_{i-1})$ for any $i \in [n]$ then $P$ has weight $0$.
    Otherwise $P$ has weight $\Cweight(P) = \cond(s_n) = \prop(w, q_0 \ldots q_n)$.
    The claim on the function computed by $M$ directly follows.
\end{proof}

\restatepropall*
\begin{proof}
    We construct $\prop = (S, \init, \step, \cond)$ as follows:

\vspace{-0.7cm}

    \begin{align*}
        S &= Q \to \CR\\
        \init(q) &= r \mapsto \tau(\inp(r))\\
        \step(q, \sigma, q', s) &= r \mapsto \textstyle\sum_{r' \in Q}s(r') \cdot \tau(\wt(r', \sigma, r))\\
        \cond(s) &= \pi\Bigl(\textstyle\sum_{r \in Q}s(r) \cdot \tau(\outp(r))\Bigr)
    \end{align*}

\vspace{-0.3cm}

    Note that $\step$ and $\init$ depend on neither $q$ nor $q'$.
    So for a fixed $w = w_1 \ldots w_n \in \Sigma^\star$ the step sequence $s_0, \ldots, s_n$ collapses to $s_0 = r \mapsto \tau(\inp(r))$ and $s_i = r \mapsto \sum_{r' \in Q}s_{i-1}(r') \cdot \tau(\wt(r', w_i, r))$ for $i \in [n]$ independent of the specific computation of $N$ on $x$.

    We first prove that for all $w = w_1\ldots w_n \in \Sigma^\star$ and $q \in Q$ we have $s_n(r) = \tau(\sum_{P_r} \Cpweight(P_r))$ where the sum is over all computations $P_r=q_0 \ldots q_n$ of $N$ on $w$ with $q_n = r$.
    We prove this by induction over $n$.
    For $n = 0$ this holds directly.
    For $n>0$ we have
    \begin{align*}
        s_n(r) &= \sum_{q_{n-1} \in Q}s_{n-1}(q_{n-1}) \cdot \tau(\wt(q_{n-1}, w_n, r))\\
               &= \sum_{q_{n-1} \in Q}\tau\Bigl(\sum_{q_0, \ldots, q_{n-2} \in Q}\Cpweight(q_0\ldots q_{n-1})\Bigr) \cdot \tau(\wt(q_{n-1}, w_n, r))\\
               &= \tau\Bigl(\sum_{q_{n-1} \in Q}\sum_{q_0, \ldots, q_{n-2} \in Q}\Cpweight(q_0\ldots q_{n-1}) \cdot \wt(q_{n-1}, w_n, r)\Bigr)\\
               &= \tau\Bigl(\sum_{q_0, \ldots, q_{n-1} \in Q}\Cpweight(q_0\ldots q_{n-1}r)\Bigr)\,.\\
    \end{align*}

\vspace{-0.8cm}

    We can thus finish the proof with

\vspace{-0.7cm}

    \begin{align*}
        \prop(w, P) &= \cond(s_n)\\
                    &= \pi\Bigl(\sum_{q_n \in Q}s_n(r) \cdot \tau(\outp(q_n))\Bigr)\\
                    &= \pi\Bigl(\sum_{q_n \in Q}\tau\Bigl(\sum_{q_0, \ldots, q_{n-1} \in Q}\Cpweight(q_0\ldots q_n)\Bigr) \cdot \tau(\outp(q_n))\Bigr)\\
                    &= \pi\Bigl(\tau\Bigl(\sum_{q_n \in Q}\sum_{q_0, \ldots, q_{n-1} \in Q}\Cpweight(q_0\ldots q_n) \cdot \outp(q_n)\Bigr)\Bigr)\\
                    &= \pi\Bigl(\tau\Bigl(\sum_{q_0, \ldots, q_n \in Q}\Cweight(q_0\ldots q_n)\Bigr)\Bigr)\,.\qedhere
    \end{align*}
\end{proof}

\restateproplexi*
\begin{proof}
    We construct $\prop = (S, \init, \step, \cond)$ as follows:

\vspace{-0.7cm}
    
    \begin{align*}
        S &= Q \to \CR\\
        \init(q) &= r \mapsto \tau(\ONE_{r < q} \cdot \inp(r))\\
        \step(q, \sigma, q', s) &= r \mapsto \textstyle\sum_{r' \in Q}s(r') \cdot \tau(\wt(r', \sigma, r)) + \tau(\ONE_{r < q'} \cdot \wt(q, \sigma, r))\\
        \cond(s) &= \pi(\textstyle\sum_{r \in Q}\tau(\outp(r)) \cdot s(r))
    \end{align*}

    Fix some $w = w_1 \ldots w_n \in \Sigma^\star$ and some computation $P = q_0 \ldots q_n$ of $N$ on $w$ of non-zero weight.
    This fixes the step sequence $s_0, \ldots, s_n$.
    We now prove by induction over $i$ that $s_{i}(r) = \sum_{P'} \tau(\Cpweight(P'))$ for all $i \in \{0, \ldots, n\}$ and $r \in Q$, where the sum is over all computations $P' = q_0' \ldots q_{i}'$ of $w_1 \ldots w_{i}$ with $q_{i}' = r$ that are lexicographically smaller than $q_0 \ldots q_{i}$.
    For $i = 0$ this is trivially the case since the only computations $P' = q_0'$ that are lexicographically smaller than the computation $q_0$ are the ones with $q_0' < q_0$.
    For $i > 0$ for a computation $P' = q_0' \ldots q_{i}'$ to be lexicographically smaller than $q_0 \ldots q_{i}$ there are two possibilities.
    Either $q_0' \ldots q_{i-1}'$ is already lexicographically smaller than $q_0 \ldots q_{i-1}$  or $q_0' \ldots q_{i-1}' = q_0 \ldots q_{i-1}$ and $q_{i}' < q_{i}$.
    In the second case the weight of $P'$ is precisely $\wt(q_{i-1}, w_{i}, q_{i}')$ since $P$ is a computation of non-zero weight and thus weight exactly $1$.
    Combining all of this we can finish the proof of the claim with a similar argument to Lemma~\ref{lem:prop_all}:
    \begin{align*}
        s_{i}(r) &= \sum_{q_{i-1}' \in Q}s_{i-1}(q_{i-1}') \cdot \tau(\wt(q_{i-1}', w_{i}, r)) + \tau(\ONE_{r < q_{i}} \cdot \wt(q_{i-1}, \sigma, r))\\
               &= \tau\Bigl(\sum_{\substack{q_0' \ldots q_{i-1}'\\ \text{lex. smaller than} \\ q_0 \ldots q_{i-1}}}\Cpweight(q_0'\ldots q_{i-1}')) \cdot \wt(q_{i-1}', w_{i}, r)\Bigr) + \tau(\ONE_{r < q_{i}} \cdot \Cpweight(q_0\ldots q_{i-1}r))\\
               &= \tau\Bigl(\sum_{\substack{q_0' \ldots q_{i-1}'\\ \text{lex. smaller than} \\ q_0 \ldots q_{i-1}}} \Cpweight(q_0'\ldots q_{i-1}'r) + \ONE_{r < q_{i}} \cdot \Cpweight(q_0\ldots q_{i-1}r))\Bigr)\\
               &= \tau\Bigl(\sum_{\substack{q_0' \ldots q_{i-1}'r \\ \text{lex. smaller than} \\ q_0 \ldots q_{i-1}q_i }}\Cpweight(q_0'\ldots q_{i-1}'r)\Bigr)
    \end{align*}
    The statement of this lemma then follows analogously to Lemma~\ref{lem:prop_all}.
\end{proof}

\restateclamping*
\begin{proof}
    Let $M_f = (Q_f, \Sigma, \wt_f, \inp_f, \outp_f)$ be a simple NFA computing $f$ with an arbitrary ordering $<$ on $Q_f$.
    Lemma~\ref{lem:prop_lexi} on the capped semiring $\CR_{c}$ with $\pi(a) = \ONE_{a < c}$ for all $a \in \CR$ constructs a stepwise computation property $\prop$ with
    \[
        \prop(w, P) = \begin{cases}
            1 & \begin{minipage}{11cm}
                if the number of non-zero computations $P'$ on $w$ that are lex. smaller\\[-3pt] than $P$ is at most $c-1$
            \end{minipage}\\[3pt]
            0 & \text{otherwise}\\
        \end{cases}
    \]
    for all computations $P$ of $M_f$ on $w$ of non-zero weight, i.e.\ $\prop(w, P) = 1$ iff $P$ is one of the $c$ lexicographically smallest computations on $w$ with non-zero weight.
    It follows that the NFA $M$ constructed by Lemma~\ref{lem:baseconstruction} computes $g(w) = \min(f(w), c)$.
\end{proof}

\restateconst*
\begin{proof}
    Let $M_f = (Q_f, \Sigma, \wt_f, \inp_f, \outp_f)$ be a simple NFA computing $f$ with an arbitrary ordering $<$ on $Q_f$.
    Lemma~\ref{lem:prop_all} on the capped semiring $\CR_{c+1}$ with $\pi_{=}(a) = \ONE_{a = c}$, $\pi_{\leq}(a) = \ONE_{a \leq c}$ and $\pi_{\geq}(a) = \ONE_{a \geq c}$ for all $a \in \CR$ constructs stepwise computation properties $\prop_=$, $\prop_{\leq}$ and $\prop_{\geq}$ respectively with
    \begin{align*}
        \prop_=(w, P) &= \begin{cases}
            1 & \text{if the number of non-zero computations $P'$ on $w$ is exactly $c$}\\
            0 & \text{otherwise}\\
        \end{cases}\\
        \prop_\leq(w, P) &= \begin{cases}
            1 & \text{if the number of non-zero computations $P'$ on $w$ is at most $c$}\\
            0 & \text{otherwise}\\
        \end{cases}\\
        \prop_\geq(w, P) &= \begin{cases}
            1 & \text{if the number of non-zero computations $P'$ on $w$ is at least $c$}\\
            0 & \text{otherwise}\\
        \end{cases}
    \end{align*}
    for all computations $P$ of $M_f$ on $w$.
    It follows that the NFAs $M_=, M_\leq$ and $M_\geq$ constructed by Lemma~\ref{lem:baseconstruction_all} compute $g_=(w) = \ONE_{f(w)=c} \cdot |Q_f|^{|w|+1}$, $g_\leq(w) = \ONE_{f(w)\leq c} \cdot |Q_f|^{|w|+1}$ and $g_\geq(w) = \ONE_{f(w)\geq c} \cdot |Q_f|^{|w|+1}$ respectively.
    Since $|Q_f|^{|w|+1}$ is always positive Lemma~\ref{lem:clamping} finishes the proof with
    \begin{align*}
        \min(g_=(w), 1) &= \ONE_{f(w)=c} &
        \min(g_\leq(w), 1) &= \ONE_{f(w)\leq c} &
        \min(g_\geq(w), 1) &= \ONE_{f(w)\geq c}\,.\qedhere
    \end{align*}
\end{proof}

\restatedivision*
\begin{proof}
    For $c=1$ the statement is trivially true, so let $c \geq 2$.
    Let $M_f = (Q_f, \Sigma, \wt_f, \inp_f, \outp_f)$ be a simple NFA computing $f$ with an arbitrary ordering $<$ on $Q_f$.
    We use the finite cyclic semiring $\CR = \IZ_c$ with the usual addition and multiplication.
    Lemma~\ref{lem:prop_lexi} with $\pi(a) = \ONE_{a = c-1}$ for all $a \in \CR$ constructs a stepwise computation property $\prop$ with
    \[
        \prop(w, P) = \begin{cases}
            1 & \text{if the number of non-zero computations $P'$ on $w$ that are}\\
              &\text{lex. smaller than $P$ is one less than a multiple of $c$}\\
            0 & \text{otherwise}\\
        \end{cases}
    \] for all computations $P$ of $M_f$ on $w$ of non-zero weight.
    It follows that the NFA $M$ constructed by Lemma~\ref{lem:baseconstruction} computes $g(w) = \left\lfloor \frac{f(w)}{c} \right\rfloor$.
\end{proof}

\restatemod*
\begin{proof}
    For $c=1$ the statement is trivially true, so let $c \geq 2$.
    Let $M_f = (Q_f, \Sigma, \wt_f, \inp_f, \outp_f)$ be a simple NFA computing $f$ with an arbitrary ordering $<$ on $Q_f$.
    We use the finite cyclic semiring $\CR = \IZ_c$ with the usual addition and multiplication.
    Lemma~\ref{lem:prop_all} with $\pi(a) = \ONE_{a = d}$ for all $a \in \CR$ constructs a stepwise computation property $\prop$ with
    \[
        \prop(w, P) = \begin{cases}
            1 & \text{if the number of non-zero computations $P'$ on $w$}\\
              &\text{is $d$ more than a multiple of $c$}\\
            0 & \text{otherwise}\\
        \end{cases}
    \] for all computations $P$ of $M_f$ on $w$ of non-zero weight.
    It follows that the NFA $M$ constructed by Lemma~\ref{lem:baseconstruction_all} computes $g(w) = \ONE_{f(w) \equiv_c d} \cdot |Q_f|^{|w|+1}$.
    Since $|Q_f|^{|w|+1}$ is always positive Lemma~\ref{lem:clamping} finishes the proof with
$
        \min(g(w), 1) = \ONE_{f(w) \equiv_c d}
        $.
\end{proof}

\restatebinbasischange*
\begin{proof}
    We prove this via induction on $r$.
    For $r=0$ we can choose $b_0 = a_0$ and $c_0 = 0$ to see that $\varphi$ is of the required form.
    For $r>0$ we choose $b_r = a_r$ and $c_r = \max\{-a_{r-1} + 1, 0\}$.

    We define the polynomial $\varphi'(x) := \varphi(x) - b_r \cdot \binom{x-c_r}{r}$ and analyze the leading binomial $\binom{x-c_r}{r}$.
    Using the Chu-Vandermonde identity \cite{Spi16} we see $\binom{x-c_r}{r} = \sum_{i=0}^r\binom{-c_r}{r-i} \binom{x}{i} = \sum_{i=0}^r(-1)^{r-i} \binom{r-i+c_r-1}{r-i} \binom{x}{i}$.
    
    Combining this with \mbox{$\varphi(x) = \sum_{i=0}^r a_i \cdot \binom{x}{i}$} we see that \mbox{$\varphi(x) - b_r \cdot \binom{x-c_r}{r} = \sum_{i=0}^{r-1} a'_i \cdot \binom{x}{i}$} with $a'_i = a_i - b_r \cdot (-1)^{r-i} \cdot \binom{r-i+c_r-1}{r-i}$.
    Furthermore we have $a'_{r-1} > 0$ due to $b_r \cdot (-1)^{r-(r-1)}\binom{r-(r-1)+c_r-1}{r-(r-1)} = -b_rc_r < a_{r-1}$.
    By induction there are $b'_0, \ldots, b'_{r-1} \in \IN$ and $c'_0, \ldots, c'_{r-1} \in \IN$ s.t.\ $\varphi'(x) = \sum_{i=0}^{r-1}b'_i \cdot \binom{x-c'_i}{i}$.
    Thus by choosing $c_i = c'_i$ and $b_i = b'_i$ for $i \in \{0, \ldots, r-1\}$ we get $\varphi(x) = \sum_{i=0}^r b_i \cdot \binom{x-c_i}{i}$.
\end{proof}

\restatediagonalstructure*
\begin{proof}
    To prove the lemma, let $v$ be arbitrary and let $f_v(n) = (A^n)_{vv}$.
    Note that we always have $f_v(0) = 1$ since $A^0$ is the identity matrix.
    We split into three cases:
    \begin{description}
        \item[$f_v(n) = 0$ for all $n > 0$:] In this case we are done.
        \item[$f_v(n) \in \{0, 1\}$ for all $n \in \IN$ and there is some $\ell > 0$ with $f_v(\ell) = 1$:]
            Let $p \geq 1$ be minimal with $f_v(p) = 1$.
            Then we directly know that for any multiple $i \cdot p$ we have $f_v(ip) = (A^{ip})_{vv} \geq ((A^{p})_{vv})^i = f_v(p)^i = 1$.
            In the setting if $A$ is interpreted as the adjacency matrix of a multigraph, this corresponds to a walk consisting of repeating the same path of length $p$ from $v$ to itself $i$ times.

            Assume for the sake of contradiction there is any other $n$ that is not a multiple of $p$ with $f_v(n) = 1$.
            Then we have that $f_v(n \cdot p) = (A^{np})_{vv} \geq (A^{p})_{vv}^n + (A^{n})_{vv}^p = f_v(p)^n + f_v(n)^p = 2$, a contradiction.
            Going back to the graph setting, this corresponds to the choice of either repeating a walk from $v$ to $v$ of length $n$ a total of $p$ times or repeating a walk from $v$ to $v$ of length $p$ a total of $n$ times, leading to two different walks of length $n \cdot p$ from $v$ to $v$ since neither does $p$ divide $n$ nor vice-versa.
        \item[There is some $\ell \in \IN$ with $f_v(\ell) \geq 2$:]
            Let $p$ be the gcd of the set $Z = \{n \in \IN \mid f_v(n) \geq 1\}$.
            If $n \in \IN$ is not divisible by $p$ we directly have $f_v(n) = 0$ since this implies $n \notin Z$.

            Additionally there is some finite subset $Z' \subseteq Z$ with the same gcd $p$.
            By Bézout's lemma there is some $N \in \IN$ which is a multiple of $p$ s.t.\ that every number of the form $N + i \cdot p$ with $i \in \IN$ can be written as a non-negative linear combination of elements from $Z'$.
            Let $n = N + i \cdot p$ for any $i \in \IN$.
            We can write
            \begin{align}
                n &= N + \underbrace{\frac{n - N - \lfloor \frac{n - N}{\ell} \rfloor \cdot \ell}{p}}_{\alpha :=} \cdot p + \underbrace{\left\lfloor \frac{n - N}{\ell} \right\rfloor}_{\beta :=} \cdot \ell\,.
            \end{align}
            since $\ell \in Z$ implies that $p$ divides $\ell$.
            Furthermore both $\alpha$ and $\beta$ are non-negative integers and $\beta$ is maximal with this property.
            In particular we can write $N + \alpha \cdot p$ as a non-negative linear combination with elements from $Z'$.
            This gives $N + \alpha \cdot p = \sum_{p_j \in Z'} \kappa_j \cdot p_j$ for $\kappa_j \in \IN$.
            Putting this together we have
            \begin{align*}
                f_v(n) &= (A^n)_{vv}\\
                       &= (A^{N + \alpha \cdot p + \beta \cdot \ell})_{vv}\\
                       &= (A^{\sum_{p_j \in Z'} \kappa_j \cdot p_j + \beta \cdot \ell})_{vv}\\
                       &\geq (A^{\beta \cdot \ell})_{vv} \cdot \prod_{p_j \in Z'}(A^{\kappa_j \cdot p_j})_{vv}\\
                       &\geq (A^{\ell})_{vv}^\beta \cdot \prod_{p_j \in Z'}(A^{p_j})_{vv}^{\kappa_j}\\
                       &= f_v(\ell)^\beta \cdot \prod_{p_j \in Z'}f_v(p_j)^{\kappa_j}\\
                       &\geq f_v(\ell)^\beta\\
                       &\geq 2^\beta\\
                       &\geq 2^{\frac{n - N}{\ell}-1} \in 2^{\Omega(n)}\\
            \end{align*}
            Note that no entry in $A^n$ can grow faster than linearly exponential, so this lower bound directly induces a function $g_v \in 2^{\Theta(n)}$ rather than a function $g_v \in 2^{\Omega(n)}$.

            In the graph setting this corresponds to repeating different walks from $v$ to $v$ of length $\ell$ for most of the walk, thus accumulating a linearly exponential amount of different walks and then finishing with some bounded amount of walks from $v$ to $v$ with lengths in $Z'$.
    \end{description}
\end{proof}

\restatepowerstructure*
\begin{proof}
    We use the description of the diagonal entries of the matrices $A_S^n$ given by Lemma~\ref{lem:diagonalstructure} for all sets $S \subseteq [k]$ where $A_S$ is the matrix $A$ obtained by removing the rows and columns given by $S$.
    These correspond to the number of walks of length $n$ from a vertex $v$ to itself in the multigraph associated with the adjacency matrix $A$, restricted to the induced subgraph spanned by all the vertices not in $S$.
    For any $S \subseteq [k]$ we will index $A_S$ the same way as we index $A$, i.e.\ $(A_S)_{ij} = A_{ij}$ whenever $i,j \in [k] \setminus S$.
    Lemma~\ref{lem:diagonalstructure} in particular already proves this lemma for the diagonal entries.

    Let $G = (V, E)$ with $V = [k]$ be the multigraph defined by the adjacency matrix $A$.
    Let $W$ be any walk on $G$.
    Then we can obtain a unique simple path $P_W$ from $W$ by repeatedly contracting cycles in $W$.
    We call two walks $W_1$ and $W_2$ equivalent if $P_{W_1} = P_{W_2}$.
    Note that this forms an equivalence relation on the walks on $G$ and there is a bijection between paths on $G$ and equivalence classes of walks on $G$ and we can construct every walk in the equivalence class $\mathcal{W}_P$ corresponding to a path\footnote{A path here is a sequence of edges, so in particular two paths with the same sequence of vertices do not have to be the same since $G$ is a multigraph} $P = v_1 \to \ldots \to v_r$ as follows:
    At every vertex $v_i$ we can add any walk from $v_i$ back to $v_i$ of any length $\ell_i$, given that it doesn't pass through any of the vertices $v_1, \ldots, v_{i-1}$.
    Note that this restriction is not necessary in general, but it ensures that every walk is uniquely constructed, a necessity for our proof.
    This construction gives a walk of length $r - 1 + \sum_{i = 1}^{r}\ell_i$.

    Fix some $u \neq v \in V$ and let $P = v_1 \to \ldots \to v_r$ be any path from $u$ to $v$, i.e.\ $v_1 = u$ and $v_r = v$.
    Then the class $\mathcal{W}_P$ contains exactly
    \[
        f_{P}(n) := \sum_{(\ell_1, \ldots, \ell_r)} \prod_{i=1}^{r} (A_{S_i}^{\ell_i})_{v_iv_i}
    \]
    walks of length $n \in \IN$ where the sum is over all $(\ell_1, \ldots, \ell_r) \in \IN^r$ with $n = r - 1 + \sum_{i = 1}^{r}\ell_i$ and we define $S_i = \{v_1, \ldots, v_{i-1}\}$.
    By Lemma~\ref{lem:diagonalstructure} we know the structure of each of the $(A_{S_i}^{\ell_i})_{v_iv_i}$ as a function in $\ell_i$.
    Call the $v_i$ either $0$-vertices, $1$-vertices or $\exp$-vertices respectively for each of the three cases of Lemma~\ref{lem:diagonalstructure}.
    Additionally denote by $p_i$ the corresponding period for the $1$-vertices and $\exp$-vertices and in case of $\exp$-vertices denote by $N_i$ the smallest multiple of $p_i$, s.t.\ $(A_{S_i}^{N_i + j \cdot p_i})_{v_iv_i} > 0$ for all $j \in \IN$.
    Define $I_0 := \{i \in [r] \mid v_i \text{\ is a $0$-vertex}\}$ and let $I_1$ and $I_{\exp}$ be defined analogously.

    We first look at the case $I_{\exp} = \emptyset$.
    Then $\prod_{i=1}^r(A_{S_i}^{\ell_i})_{v_iv_i} = 1$ iff $\ell_i = 0$ for all $i \in I_0$ and $\ell_i$ is a multiple of $p_i$ for all $i \in I_1$, otherwise $\prod_{i=1}^r(A_{S_i}^{\ell_i})_{v_iv_i} = 0$.
    Thus $f_P(n)$ is given as the number of integer points in the polytope defined by $\sum_{i \in I_1} \ell_i \cdot p_i = n - r + 1$ and $\ell_i \geq 0$ for all $i \in I_1$.
    This is a scaled polytope with rational vertices and thus $f_P(n)$ is the Ehrhart quasi-polynomial (see \cite[Chapter 3]{beck2007computing} for an introduction to Ehrhart Theory).

    We now look at the case $I_{\exp} \neq \emptyset$.
    Let $v_e$ be one vertex with $e \in I_{\exp}$ and let $p$ be the gcd of all $p_i$ with $i \in I_1 \cup I_{\exp}$.
    This implies that $\prod_{i=1}^r(A_{S_i}^{\ell_i})_{v_iv_i} = 0$ whenever any of the $\ell_i$ with $i \in [r]$ are not a multiple of $p$ and thus $f_P(n) = 0$ if $n \not\equiv_p r-1$.
    Additionally by Bézout's lemma there is some $N \in \IN$ which is a multiple of $p$ s.t.\ that every number of the form $N + j \cdot p$ with $j \in \IN$ can be written as a non-negative linear combination of the $p_i$ with $i \in I_1 \cup I_{\exp}$.
    Define $N' := N + r - 1 + \sum_{i \in I_{\exp}} N_i$.
    Let $n = N' + j \cdot p$ for any $j \in \IN$.
    We can write
    \begin{align}
        n &= N' + \underbrace{\frac{n - N' - \lfloor \frac{n - N'}{p_e} \rfloor \cdot p_e}{p}}_{\alpha :=} \cdot p + \underbrace{\left\lfloor \frac{n - N'}{p_e} \right\rfloor}_{\beta :=} \cdot p_e\,.
    \end{align}
    since $e \in I_{\exp}$ implies that $p$ divides $p_e$.
    Furthermore both $\alpha$ and $\beta$ are non-negative integers and $\beta$ is maximal with this property.
    In particular we can write $N + \alpha \cdot p$ as a non-negative linear combination with elements $p_i$ with $i \in I_1 \cup I_{\exp}$.
    This gives $N' + \alpha \cdot p = r-1 + \sum_{i \in I_{\exp}} N_i + \sum_{i \in I_1 \cup A_{\exp}} \kappa_i \cdot p_i$ for $\kappa_i \in \IN$.
    We choose $\ell_i = 0$ for $i \in I_0$ and $\ell_i = \kappa_i \cdot p_i$ for $i \in I_1$ and $\ell_i = N_i + \kappa_i \cdot p_i$ for $i \in I_{\exp} \setminus \{e\}$ and $\ell_e = N_e + \kappa_{e} \cdot p_e + \beta \cdot p_e$.
    This describes a subset of the walks of length
    \begin{align*}
        r-1 + \sum_{i \in [r]}\ell_i &= r-1 + \sum_{i \in I_{\exp}} N_i + \sum_{i \in I_{1} \cup I_{\exp}} \kappa_i \cdot p_i + \beta \cdot p_e = n_1 + \alpha \cdot p + \beta \cdot p_e = n\,.
    \end{align*}
    In particular we have $(A_{S_i}^{\ell_i})_{v_iv_i} \geq 1$ for all $i \in [r]$ and $(A_{S_i}^{\ell_i})_{v_iv_i} \in 2^{\Theta(\ell_i)} = 2^{\Theta(n)}$ and thus $f_P(n) \geq 2^{\Omega(n)}$ for any $n \geq n_1$ with $n \equiv_p r-1$.
    Note again that $f_P(n)$ can never grow faster than linearly exponential as it is bounded by the total number of walks through $G$, so $f_P(n)$ is of the form required by the statement of this lemma.

    To finish the proof note that the class of ultimately almost PORC functions is closed under addition and multiplication.
    If we multiply a polynomial constituent with an exponential constituent, then either the polynomial constituent is identical to $0$ or there is some $N \in \IN$, s.t.\ the polynomial is strictly positive for every $n \geq N$.
    We have
    \[
        (A^n)_{uv} = \sum_{P}f_P(n)
    \]
    where the sum is over all paths $P$ starting at $u$ and ending at $v$.
    For any fixed $u, v \in V$ this is a sum of constantly many elements, bounded by $(k \cdot \max(A))^{k-1}$, where $\max(A)$ denotes the maximum entry of $A$.
    We conclude that $(A^n)_{uv}$ is an ultimately almost PORC function.
\end{proof}

\restatebinbasischangemult*
\begin{proof}
    Let $D_{\varphi}$ be the set of dominating terms of $\varphi$.
    We consider a set of dominating terms $D$ to be smaller than another set of dominating terms $D'$, whenever $D \neq D'$ and all terms of in $D$ are dominated by terms in $D'$.
    We prove the statement via induction over $D$.
    In case $D_{\varphi} = \emptyset$ we have $\varphi = 0$ which is trivially a functional closure property.
    So let $D_{\varphi} \neq \emptyset$ and let $a \cdot \binom{x_1}{d_1} \cdots \binom{x_m}{d_m}$ be any term from $D_{\psi}$.
    We use a similar idea to Lemma~\ref{lem:binbasischange} and replace the binomial basis by a shifted binomial basis.
    Consider the terms $a_i \cdot \binom{x_1}{d_1} \cdots \binom{x_{i-1}}{d_{i-1}} \binom{x_i}{d_i-1} \binom{x_{i+1}}{d_{i+1}} \cdots \binom{x_m}{d_m}$ in $\psi$ for all $i \in [m]$ with $d_i \geq 1$.
    These are precisely the (potentially new) dominating terms of $\varphi - a \cdot \binom{x_1}{d_1} \cdots \binom{x_m}{d_m}$.
    In order to ensure the positivity of the $a_i$ we instead consider $\psi := \varphi - a \cdot \binom{x_1 - c}{d_1} \cdots \binom{x_m - c}{d_m}$ for some $c \in \IN$ with $c > -\frac{a_i}{a}$ for all $i \in [m]$ with $d_i \geq 1$.
    We recall that the Chu-Vandermonde identity gives us \mbox{$\binom{x-c}{r} = \sum_{i=0}^r(-1)^{r-i} \binom{r-i+c-1}{r-i} \binom{x}{i}$}.
    Thus the new coefficient $a'_i$ of $\binom{x_1}{d_1} \cdots \binom{x_{i-1}}{d_{i-1}} \binom{x_i}{d_i-1} \binom{x_{i+1}}{d_{i+1}} \cdots \binom{x_m}{d_m}$ in $\psi'$ is $a_i + a \cdot (-c) > 0$.
    The new coefficient of $\binom{x_1}{d_1} \cdots \binom{x_m}{d_m}$ in $\psi$ is zero and thus $D_{\psi}$ is smaller than $D_\varphi$.
    By induction we know that $\psi$ can be written as a positive linear combination of products of shifted binomials and thus $\varphi = \psi + a \cdot \binom{x_1 - c}{d_1} \cdots \binom{x_m - c}{d_m}$ has the same property.
\end{proof}

\restatemultipolynomial*
\begin{proof}
    We prove the statement via induction over the number of variables $m$.
    If $m \leq 1$, then $\varphi$ is a constant polynomial and thus a functional closure property of $\sharpFA$ by Lemma~\ref{lem:polynomial}.
    If $m \geq 2$, and let $f_1, \ldots, f_m \in \sharpFA$.
    Write $\varphi$ as a positive integer linear combination of products of shifted binomials using Lemma~\ref{lem:binbasischangemult} and let $c \in \IN$ be the maximum occurring shift.
    We now distinguish between the $f_i$ assuming values from $\{0, \ldots, c-1\}$ and them assuming values of at least $c$ and all combinations thereof for the different $f_i$.
    Using the fact that all subtractions subtract at most a value of $c$ we can replace the subtractions $x_i - c'$ by $\max(x_i - c', 0)$ without changing the value for $x_i \geq c$ and get the following:
    \begin{align*}
        \psi(f_1(w), \ldots, f_m(w)) = \Bigl(\prod_{i \in [m]} \ONE_{f_i(w) \geq c}\Bigr) \cdot \varphi_{\text{clamped}}(f_1(w), \ldots, f_m(w)) \quad\quad\quad\quad\quad\quad\quad\quad \\
        \quad\quad\quad\quad\quad + \sum_{\emptyset \subsetneq I \subseteq [m]}\sum_{\rho} \Bigl(\prod_{i \in I}\ONE_{f_i(w)=\rho(i)}\Bigr) \cdot \Bigl(\prod_{i \in [m] \setminus I}\ONE_{f_i(w) \geq c}\Bigr) \cdot \psi_{I, \rho}(f_1(w), \ldots, f_m(w))
    \end{align*}
    where $\rho$ sums over all functions $I \to \{0, \ldots, c-1\}$, $\varphi_{\text{clamped}}$ is $\varphi$, but with all subtractions $x_i - c'$ replaced by $\max(x_i - c', 0)$ and $\varphi_{I, \rho}$ is $\varphi$, but all variables $x_i$ with $i \in I$ are replaced by the constants $\rho(i)$.
    The $\varphi_{I, \rho}$ are non-negative integer-valued multivariate polynomials in fewer variables and when replacing any variables by further constants we only obtain polynomials that can also be obtained by replacing variables directly in $\varphi$. 
    Thus by induction the $\varphi_{I, \rho}$ are a functional closure property of $\sharpFA$ and $\varphi(f_1, \ldots, f_m) \in \sharpFA$ by Lemmas~\ref{lem:add},~\ref{lem:mult},~\ref{lem:subtraction},~\ref{lem:const}, and~\ref{lem:binom}.
\end{proof}

\restatemultivarpolyn*
\begin{proof}
    Lemma~\ref{lem:multipolynomial} already proves that all multivariate polynomials of this form are functional closure properties of $\sharpFA$.

    Now let $\varphi: \IN^m \to \IN$ be a multivariate polynomial with rational coefficient and a functional closure property of $\sharpFA$.
    By Theorem~\ref{thm:multivclosure} $\varphi$ can be written as a finite sum of finite products of ultimately PORC functions.
    W.l.o.g.\ let all quasiperiods be the same $p \in \IN$ and all offsets be the same $N \in \IN$ and a multiple of $p$, for example by choosing the lowest common multiple of the quasiperiods and choosing the lowest multiple of $p$ that is bigger than all offsets.
    Then
    \begin{align}
        \varphi(x_1, \ldots, x_m) &= \textstyle\sum_{i=1}^r \prod_{j=1}^m \varphi^{(i,j)}(x_j) \label{eq:multipoly:porcrepr}
    \end{align}
    where $\varphi^{(i,j)}$ is an ultimately PORC function with constituents $\varphi^{(i,j)}_0, \ldots, \varphi^{(i,j)}_{p-1}$.
    Consider inputs on a grid of the form $(N+i_1 \cdot p, N+i_2 \cdot p, \ldots, N+i_m \cdot p)$ for $i_1, \ldots, i_m \in \IN$.
    For all these inputs each of the ultimately PORC functions always use the same constituents, namely the $\varphi^{(i,j)}_0$.
    Thus $\varphi$ agrees with $\varphi'(x_1, \ldots, x_m) := \sum_{i=1}^r \prod_{j=1}^m \varphi^{(i,j)}_0(x_j)$ on all inputs from this grid.
    Furthermore the grid is infinite in all dimensions and thus $\varphi = \varphi'$ by multivariate polynomial interpolation (see \cite{gasca2000polynomial} for an exposition of multivariate polynomial interpolation).
    W.l.o.g.\ assume that $\varphi^{(i,j)}_0$ is not the constant zero function.
    Then it has a positive leading coefficient, otherwise $\varphi^{(i,j)}$ would not be a PORC function.
    Combined with the fact that each of the $\varphi^{(i,j)}_0$ is univariate the product $\prod_{j=1}^m \varphi^{(i,j)}_0(x_j)$ for a fixed $i \in [r]$ has exactly one dominating term when written in the binomial basis.
    Its coefficient is precisely the product of the leading coefficients of the $\varphi^{(i,j)}_0$ and thus positive.
    As a result all dominating terms in the binomial basis of $\varphi'$ (and thus $\varphi$) have positive coefficients, since no cancellations can occur unless the dominating term of a summand dominates the dominating term of another summand, in which case the smaller term would no longer be a dominating term.

    Note that whenever any variable, for example $x_j$, in $\varphi$ is replaced by a constant $c$, then this only changes each function $\varphi^{(i,j)}(x_j)$ and thus $\varphi^{(i,j)}_0(x_j)$ to be the constant $\varphi^{(i,j)}(c)$ which is non-negative.
\end{proof}

\subsection{Missing proofs from section~\ref{sec:promises}}
\restatebinary*
\begin{proof}
    \begin{figure}
        \centering
        \begin{tikzpicture}
            \node[start] (S) {$S$};
            \node[accept, right of=S] (A) {$A$};

            \path
                (S) edge [loop above] node {$0,1$} (S)
                (A) edge [loop above] node {$0,0,1,1$} (A)

                (S) edge [-stealth] node {$1$} (A)
                ;
        \end{tikzpicture}

        \caption{NFA computing the binary the value of its input interpreted as a binary number. Edges with a multiplicity of $2$ are denoted by listing the edge label twice.}
        \label{fig:lem:binary:automaton}
    \end{figure}
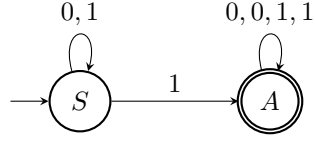
    We define
    \begin{align*}
        a &= \begin{pmatrix}
            1\\
            0\\
        \end{pmatrix}, &
        A_\sigma &= \begin{pmatrix}
            1 & \sigma\\
            0 & 2\\
        \end{pmatrix}, & \text{and} & &
        b &= \begin{pmatrix}
            0\\
            1\\
        \end{pmatrix}\,.
    \end{align*}
    A visualization of the corresponding NFA $M$ via Lemma~\ref{lem:algebraic} can be found in Figure~\ref{fig:lem:binary:automaton}.
    We prove that $M$ computes $f$ by showing via induction that for every $w \in \{0, 1\}^\star$ with $|w| = n$ we have 
    \begin{align*}
        \prod_{i=1}^{n} A_{w_i} = \begin{pmatrix}
            1 & f(w) \\
            0 & 2^n \\
        \end{pmatrix}\,.
    \end{align*}
    For $n=0$ this is obviously correct as $f(\eps) = 0$.
    For the induction step consider the word $w\sigma$ of length $n+1$ for $\sigma \in \{0, 1\}$.
    Then
    \begin{align*}
        \left(\prod_{i=1}^{n} A_{w_i}\right) \cdot A_{\sigma} = \begin{pmatrix}
            1 & f(w) \\
            0 & 2^n \\
        \end{pmatrix} =
        \begin{pmatrix}
            1 & \sigma + 2f(w) \\
            0 & 2^{n+1} \\
        \end{pmatrix} =
        \begin{pmatrix}
            1 & f(w\sigma) \\
            0 & 2^{n+1} \\
        \end{pmatrix}
    \end{align*}
    and we are done.
    Extending the alphabet and ignoring those new symbols can be done by adding a self-loop to each state of $M$.
    This corresponds to an identity matrix and thus doesn't influence the result.
\end{proof}

\restateinterpolatepositive*
\begin{proof}
    We inductively construct integer-valued polynomials $q_0, \ldots, q_N$, that each interpolate one value more.
    We start by setting $q_0(x) := c_0 \cdot \binom{x}{0}$.
    Clearly $q_0(0) = c_0$.
    We proceed inductively by setting $q_{i+1}(x) := q_i(x) + (c_{i+1} - q_i(i+1)) \cdot \binom{x}{i+1}$.
    Note that $\binom{n}{i+1} = 0$ for $n \in \{0, \ldots, i\}$ and thus $q_{i+1}$ and $q_i$ agree on all such $n$.
    The final $q$ is then constructed as $q(x) := q_N(x) + \alpha \cdot \binom{x}{N+1}$, for some big enough $\alpha \in \IN$, s.t.\ $q(n') \geq 0$ for all $n' \in \IN$.
    Such an $\alpha$ always exists since $\binom{x}{N+1}$ has higher degree than $q_N$.
\end{proof}

\restatepolyclusterseq*
\begin{proof}
    If such a $\varphi'$ exists, we directly see that $\varphi$ is a functional promise closure property of $\sharpFA$ with respect to $S$.
    Indeed for any $f_1, \ldots, f_m \in \sharpFA$ we construct $g = \varphi' \circ (f_1, \ldots, f_m) \in \sharpFA$ and see $g(w) = \varphi(f_1(w), \ldots, f_m(w))$ for every $x \in \Sigma^\star$ with $(f_1(w), \ldots, f_m(w)) \in S$.

    Now on the other hand let $\varphi$ be any functional promise closure property of $\sharpFA$ with regard to $S$.
    Again define the alphabet $\Sigma = \{\sigma_1, \ldots, \sigma_m\}$ and the functions $f_i: \Sigma^\star \to \IN$ where $f_i(w) := \#_i(w$ is defined as the number of occurences $\#_i(w)$ of the symbol $\sigma_i$ in $w$.
    Applying the closure property to $f_1, \ldots, f_m$ gives that there is some $g \in \sharpFA$ with $g(w) = \varphi(f_1(w), \ldots, f_m(w))$ for all $w \in \Sigma^\star$ with $(f_1(w), \ldots, f_m(w)) \in S$.
    Let $M = (Q, \Sigma, \wt, \inp, \outp)$ be an NFA computing $g$.
    This induces transition matrices $A_\sigma \in \IN^{|Q| \times |Q|}$ for each symbol $\sigma \in \Sigma$ and vectors $a, b \in \IN^{|Q|}$, s.t.\ $a^T \prod_{j=1}^{|w|}A_{w_j}b = g(x)$ for all $w \in \Sigma^\star$.
    Restricting to words of the form $w = \sigma_1^{n_1} \sigma_2^{n_2} \cdots \sigma_m^{n_m}$ for $n_1, \ldots, n_m \in S$ gives $a^T \left(\prod_{i=1}^{m}A_{\sigma_i}^{n_i}\right) b = \varphi(n_1, \ldots, n_m)$.
    Using Lemma~\ref{lem:powerstructure} on each of the $A_i^{n_i}$ we see that every entry of $A_i^{n_i}$ is an ultimately almost PORC function in $n_i$.
    Consequently every entry of $\prod_{i=1}^{m}A_{\sigma_i}^{n_i}$ is a finite sum of products of different ultimately almost PORC functions and the same holds for $\chi(n_1, \ldots, n_m) := a^T \left(\prod_{i=1}^{m}A_{\sigma_i}^{n_i}\right) b$.
    Note that $\chi_{|S} = \varphi_{|S}$.

    Remains to show that we can rewrite $\chi$ as some $\psi$ with $\psi_{|S} = \chi_{|S}$, s.t.\ none of the constituents are exponential.
    Again we do this by looking at each summand individually and replacing one exponential constituent at a time.
    For this let $\chi^{(1)}(n_1) \cdots \chi^{(m)}(n_m)$ be one of the summands of $\chi$ where $\chi^{(1)}, \ldots, \chi^{(m)}$ are all ultimately almost PORC functions, with periods $p_1, \ldots, p_m$, offsets $N_1, \ldots, N_m$ and constituents $\chi^{(i)}_0, \ldots, \chi^{(i)}_{p_i-1}$ for each $i \in [m]$.
    W.l.o.g.\ choose the offsets big enough, such that every constituent $\chi^{(i)}_j$ is either constant zero for $n_i \geq N_i$ or never zero for any $n_i \geq N_i$.
    Let $\chi^{(i)}_j$ be one of the exponential constituents and let $\gamma \in \IR^+$ and $N \in \IN$ be, s.t. $\chi^{(i)}_j(n_i) \geq 2^{\gamma n_i}$ for $n_i \geq N$.
    Let $B_{\max, i}$ be the maximum the $i$-th coordinate assumes in any preimage of a singleton set under $\tau$ where the $i$-th coordinate is bounded.
    We replace $\chi^{(i)}_j$ with a polynomial that interpolates $\chi^{(i)}$ for all values of at most $B_{\max, i}$.
    Lemma~\ref{lem:interpolate_positive} ensures this is possible with an integer-valued polynomial that is non-negative for all of $\IN$.
    Call the resulting functions after replacement $\psi^{(i)}$ and $\psi^{(i)}_j$.
    We claim that this does not affect the value of the product for any $(n_1, \ldots, n_m) \in S$.
    For the sake of contradiction the value of the product did change for some $(c_1, \ldots, c_m) \in S$.
    We thus have
    \[
        \chi^{(1)}(c_1) \cdots \chi^{(m)}(c_m) \neq \chi^{(1)}(c_1) \cdots \chi^{(i-1)}(c_{i-1})\cdot \psi^{(i)}(c_i) \cdot \chi^{(i+1)}(c_{i+1}) \cdots \chi^{(m)}(c_m)
    \]
    which implies $\chi^{(1)}(c_1) \cdots \chi^{(i-1)}(c_{i-1}) \cdot \chi^{(i+1)}(c_{i+1}) \cdots \chi^{(m)}(c_m) \neq 0$ and $c_i \geq N_i$ as no function except $\chi^{(i)}$ for inputs at least as big as $N_i$ has been changed.
    By the same reason we know $\chi^{(i)}_{j}(c_i) \neq \psi^{(i)}_{j}(c_i)$ and $c_i \equiv_{p_i} j$.

    Consider the preimage $T = \tau^{-1}(\smod{c_1}{p_1}{N_1}, \ldots, \smod{c_m}{p_m}{N_m})$.
    Clearly $(c_1, \ldots, c_m) \in T$.
    The $i$-th coordinate of $T$ has to be unbounded, because $c_i \leq B_{\max, i}$ would imply $\psi^{(i)}_{j}(c_i) = \chi^{(i)}(c_i) = \chi^{(i)}_{j}(c_i)$ by the construction of $\psi^{(i)}_j$.
    Furthermore the behaviour of each of the $\chi^{(k)}$ is only dictated by at most one of the constituents each.
    By our choice of the offsets we have $\chi^{(k)}(n_k) \neq 0$ for all $k \neq i$ and all $(n_1, \ldots, n_m) \in T$.
    Since $S$ admits polynomial cluster sequences there is a polynomial cluster sequence $T' \subseteq T$ with regards to dimension $i$ with polynomial $q: \IN \to \IN$.

    We construct $m$ functions $f'_k \in \sharpFA$ over the alphabet $\Sigma = \{\sigma_{1,0}, \sigma_{1,1}, \ldots, \sigma_{m,0}, \sigma_{m,1}\}$ with $2m$ symbols, by using Lemma~\ref{lem:binary} on the symbols $\sigma_{k,0}$ and $\sigma_{k,1}$ each while ignoring any of the other symbols.
    This allows us to encode any $(n_1, \ldots, n_m) \in T'$ as independent binary representations without leading zeros for each $n_k$ into a string of total length $\sum_{k=1}^m \lfloor\log_2(n_k) + 1 \rfloor \leq m \cdot \lfloor \log_2(q(n_i)) + 1\rfloor$ which is in $O(\log(n_i))$ for fixed $m$ and $q$.

    Now let $w \in \Sigma^\star$ be any such string corresponding to a value $(n_1, \ldots, n_m) \in T'$.
    Combining all of this we again reach a contradiction to the fact that NFAs can only compute at most linearly exponential functions via
    \begin{align*}
        \varphi(f'_1(w), \ldots, f'_m(w)) &\geq \chi^{(1)}(c_1) \cdots \chi^{(i-1)}(c_{i-1}) \cdot \chi^{(i)}(f'_i(w)) \cdot \chi^{(i+1)}(c_{i+1}) \cdots \chi^{(m)}(c_m)\\
                                  &\geq \chi^{(i)}(f'_i(w))\\
                                  &= \chi^{(i)}_j(f'_i(w))\\
                                  &= \chi^{(i)}_j(n_i)\\
                                  &\geq 2^{\gamma n_i}\,.\\
    \end{align*}
    Note that $|w| \in O(\log(n_i))$ implies $2^{\gamma n_i}$ is doubly exponential in the input length.
\end{proof}

\restatemonotonegraph*
\begin{proof}
    Let $N_1, \ldots, N_m \in \IN$ and $p_1, \ldots, p_m \in \IN$ be fixed and consider the preimage $T$ of any singleton $(c_1, \ldots, c_m) \in \{0, \ldots, N_1 + p_1 - 1\} \times \ldots \times \{0, \ldots, N_m + p_m - 1\}$ under the corresponding shifted grid projection $\tau$ of $S$.
    Let $(n_1, \ldots, n_m) \in T$ and let $i \in [m]$ be arbitrary.
    If the $i$-th coordinate in $T$ is bounded then we are done.
    If the $i$-th coordinate in $T$ is unbounded we distinguish two cases.
    
    In case $1 \leq i \leq j$, we choose the $T'$ to be the part of $S$ spanned by choosing the values of the free variables as \mbox{$\{(n_1, \ldots, n_{i-1}, n_i+\kappa \cdot pd, n_{i+1}, \ldots, n_j) \mid \kappa \in \IN \}$} where $p$ is the lcm of $p_1, \ldots, p_m$ and $d$ is the common denominator of all of the $\mu_1, \ldots, \mu_k$.
    Clearly the $i$-th coordinate of $T'$ is unbounded and $T'$ is infinite.
    Further since all other free variables are constant, each dependent variable is a monotone univariate polynomial in the $i$-th coordinate and thus the sum of all coordinates is bounded by some polynomial in the $i$-th coordinate.
    Remains to show $T' \subseteq T$.
    For this let $(n'_1, \ldots, n'_m) \in T'$.
    Clearly $\smod{n'_\ell}{p_\ell}{N_\ell} = \smod{n_\ell}{p_\ell}{N_\ell}$ for $\ell \in [j] \setminus \{i\}$, since $n'_\ell = n_\ell$.
    Additionally $\smod{n'_i}{p_i}{N_i} = \smod{n_i}{p_i}{N_i}$, since $n'_i = n_i + \kappa \cdot pd$ and $n_i \geq N_i$.
    Furthermore $\smod{n'_\ell}{p_\ell}{N_\ell} = \smod{n_\ell}{p_\ell}{N_\ell}$ for $\ell \in [m] \setminus [j]$, because either $n'_\ell = n_\ell$ or the $\ell$-th coordinate is unbounded and thus $n'_\ell \geq n_\ell \geq N_\ell$ and $n'_\ell \equiv_{p_\ell} n_\ell$ since $n'_\ell$ is dependent on $n'_i$ as a univariate polynomials and for any univariate polynomial $q(x)$ we have $q(x) \equiv_{p_\ell} q(x + \delta)$ if $\delta$ is a multiple of $p_\ell \cdot h$.
    Thus $(n'_1, \ldots, n'_m) \in T$.

    Remains to consider the case where $i > j$.
    Since the $i$-th coordinate in $T$ is unbounded, $n_i$ is a function depending on at least one of the free variables $n_{i'}$.
    We repeat the previous argument by continuing with $i'$ instead of $i$ which finishes the proof, since $n_i \geq \alpha \cdot n_{i'}$ for some $\alpha \in \IR^+$.
\end{proof}

\restatemongraphclosures*
\begin{proof}
    If there exists a $\psi \in I$ such that $\varphi + \psi$ is a multivariate functional closure property of $\sharpFA$, then $(\varphi + \psi)_{|S} = \varphi_{|S}$ and thus $\varphi$ is a functional promise closure property of $\sharpFA$ with regard to $S$ by definition.
    On the other hand if $\varphi$ is a functional promise closure property of $\sharpFA$ with regard to $S$, then there is a functional closure property $\varphi'$ of $\sharpFA$ with $\varphi_{|S} = \varphi'_{|S}$.
    Note that a priori $\varphi'$ might not be a polynomial, but just a finite sum of finite products of univariate ultimately PORC functions.

    Wlog all thresholds $N$ and quasiperiods $p$ of $\varphi'$ coincide. We write $p^{\times m}=(p,p,\ldots,p)$ and $N^{\times m}=(N,N,\ldots,N)$.
    For each $\vv i \in \{0, \ldots, p+N-1\}^m$ we have a polynomial $\varphi'_{\vv i}$, and $\varphi'(\vv n) = \varphi'_{\smodtimesm{\vv n}{p}{N}}(\vv n)$ for all $\vv n \in \IN^m$.
    Define the set $S'=\vv\mu^{-1}(\IN^m)\subseteq\IN^j$ as all $\vv s \in \IN^j$, s.t.\ $\mu_a(\vv s) \in \IN$ for all $a \in [k]$.
    Let $\widetilde\varphi'_{\vv i} : \IQ^j\to\IQ$ denote the pullback via $\mu$, i.e., $\widetilde\varphi'_{\vv i}(s_1,\ldots,s_j) = \varphi'_{\vv i}(s_1,\ldots,s_j,\mu_1(\vv s),\ldots,\mu_k(\vv s))$.
    Analogously define the polynomial $\widetilde\varphi: S'\to\IQ$.
    Since $\varphi'$ is only defined on $\IN^m$, the definition of its pullback requires a bit more care, but since the pullback of $\varphi'$ is defined for all of $S'$, we can then analogously define $\widetilde\varphi': S'\to\IQ$.
    Observe that $\widetilde\varphi$ and $\widetilde\varphi'$ coincide on $S'$.
    Let $d$ be the common denominator of $\mu_1, \ldots, \mu_k$.
    Note that for any fixed $a\in[k]$ and $\vv o \in S'$ we have that $\mu_a(\vv o + dp \vv b) \rem p$ is constant for all $\vv b \in \IN^j$.
    In particular we also have $\vv o + dp \vv b \in S'$.
    Due to the monotonicity of each of the $\mu_a$, for big enough $\vv b$, $\mu_a(\vv o + dp \vv b)$ will either always be at least $N$ or $\mu_a$ is a constant function.
    Thus there is now some $\vv o \in S'$ with $\vv o \geq N^{\times j}$, such that even $\smod{\mu_a(\vv o + dp \vv b)}{p}{N}$ is constant for all $\vv b \in \IN^j$.
    Let $\vv{\theta} = \smod{(\vv o,\mu_1(\vv o),\ldots,\mu_k(\vv o))}{p^{\times m}}{N}$.
    Hence, for all $\vv b \in \IN^{j}$ we have
    $\widetilde \varphi'(\vv o + dp \vv b) = \widetilde\varphi'_{\vv{\theta}}(\vv o + dp \vv b)$.
    Since $\widetilde\varphi$ and $\widetilde\varphi'$ coincide on $S'$,
    it follows that
    $\widetilde \varphi$ and $\widetilde\varphi'_{\vv{\theta}}$
    coincide on an infinitely large full-dimensional subgrid (namely on the grid formed by $\vv o + dp \vv b$ for $b \in \IN^j$).
    Hence by multivariate polynomial interpolation, $\widetilde \varphi = \widetilde\varphi'_{\vv{\theta}}$.
    Hence, $\varphi_{|S} = \varphi'_{\vv \theta|S}$.
    Their difference
    $\varphi'_{\vv \theta} - \varphi$
    is the desired polynomial $\psi\in I$.
    It remains to show that $\varphi'_{\vv \theta}$ is a multivariate closure property of $\sharpFA$.
    Repeating the same argument as in Lemma~\ref{lem:multivarpolyn} to proves that the dominating terms of $\varphi'_{\vv \theta}$ have positive coefficients, even after replacing some variables by constants.
    Using Lemma~\ref{lem:multivarpolyn} again, shows that $\varphi'_{\vv \theta}$ is a multivariate closure property of $\sharpFA$.
\end{proof}

\end{document}